\def\dOi{10(3:21)2014}
\let\inline\lstinline
\begin{document}

\title{Inferring Algebraic Effects}

\author{Matija Pretnar}
\address{Faculty of Mathematics and Physics, University of Ljubljana, Slovenia}
\email{matija@pretnar.info}

\keywords{algebraic effects, effect handlers, effect inference, effect system}
\subjclass{F.3.2}

\begin{abstract}
  \noindent
  We present a complete polymorphic effect inference algorithm
  for an ML-style language with handlers of not only exceptions,
  but of any other algebraic effect such as
  input \& output, mutable references and many others.

  Our main aim is to offer the programmer a useful insight into the effectful behaviour of programs.
  Handlers help here by cutting down possible effects
  and the resulting lengthy output that often plagues precise effect systems.
  Additionally, we present a set of methods
  that further simplify the displayed types,
  some even by deliberately hiding inferred information from the programmer.
\end{abstract}

\maketitle

\noindent
Though Haskell~\cite{jones2003haskell} fans may not think it is better to write impure programs in ML~\cite{milner1997definition},
they do agree it is easier.
You can insert a harmless printout without rewriting the rest of the program,
and you can combine multiple effects without a monad transformer.
This flexibility comes at a cost, though ---
ML types offer no insight into what effects may happen.
The suggested solution is to use an effect system~\cite{lucassen1988polymorphic,talpin1992type,benton1998compiling,tolmach1998optimizing,wadler1999marriage,benton2002monads,rytz2012lightweight},
which enriches existing types with information about effects.

An effect system can play two roles:
it can be \emph{descriptive} and inform about potential effects,
and it can be \emph{prescriptive} and limit the allowed ones.
In this paper, we focus on the former.
It turns out that striking a balance between expressiveness and simplicity of a descriptive effect system is hard.
One of the bigger problems is that effects tend to pile up,
and if the effect system takes them all into account,
we are often left with a lengthy output listing every single effect there is.

In this paper, we present a complete inference algorithm
for an expressive and simple descriptive polymorphic effect system of \Eff~\cite{bauer2012programming}
(freely available at \url{http://eff-lang.org}),
an ML-style language with handlers of not only exceptions, but of any other \emph{algebraic effect}~\cite{plotkin2003algebraic}
such as input \& output, non-determinism, mutable references and many others~\cite{plotkin2009handlers, bauer2012programming}.
Handlers prove to be extremely versatile and can express
  stream redirection,
  transactional memory,
  backtracking,
  cooperative multi-threading,
  delimited continuations,
and, like monads, give programmers a way to define their own.
And as handlers eliminate effects, they make the effect system \emph{non-monotone},
which helps with the above issue of a snowballing output.

We start in Section~\ref{sec:eff} with an tour of \Eff in which we informally explain handlers and show the main features of the proposed effect system.
Afterwards, we switch to formal development and in Section~\ref{sec:core-eff},
we recap the effect system for \emph{core Eff}~\cite{bauer2013effect},
a minimal formalization of \Eff.
Our contributions are:
\begin{itemize}
\item
  A set of syntax-directed rules for inferring types and constraints they must satisfy (Section~\ref{sec:inferring}).
  We show that these rules are sound and complete with regard to the effect system.
\item
  A unification algorithm that decomposes a set of constraints to a more basic form
  and decides if it admits a solution (Section~\ref{sec:unifying}).
  Unification fails only in the case of type mismatch,
  which fits our goal of a descriptive effect system
  that just refines existing ML types with details about effects.
\item
  A number of techniques that reduce the number of constraints without changing the set of solutions (Section~\ref{sec:simplifying}).
  The heavy lifting is done by \emph{garbage collection} of constraints~\cite{pottier2001simplifying, simonet2003type, trifonov1996subtyping},
  which we borrow and adapt slightly to our purpose.
  We also introduce a few more techniques particular to the algebraic setting.
\item
  A further collection of tactics for simplifying the display of inferred types (Section~\ref{sec:displaying}).
  To fully achieve their purpose,
  some of the presented tactics deliberately hide information from the programmer,
  though entire information is always used in the background.
\end{itemize}
We conclude by showing a couple of full runs of the algorithm (Section~\ref{sec:examples}) and by discussing related and future work.
To preserve the flow of the paper, we gather the (mostly routine) proofs in Appendix~\ref{app:proofs}.

\section{\Eff}
\label{sec:eff}

Effects aside, \Eff should be familiar to anyone that has worked with OCaml~\cite{ocaml}.
For example, the \inline{map} function,
which applies a function \inline{f} to each element of the list \inline{xs}
is defined as:
\begin{source}
  let rec map f xs =
    match xs with
    | [] -> []
    | x :: xs -> f x :: map f xs
\end{source}
The first important feature that distinguishes \Eff from OCaml is its effect system.
For example, the inferred type of \inline{map} is 
\[
  \kord{map} \T (\alpha \to[\drt] \beta)
  \to (\alpha \listty \to[\drt] \beta \listty)
\]
Here, the annotation on the arrow, called the \emph{dirt},
describes any effects that the function may trigger.
This additional information is very easy to understand:
the function \inline{map f} causes exactly the same effects~$\drt$ as \inline{f}.
The lack of dirt on the second arrow signifies that the application of \inline{map} to \inline{f} is pure.
Inferred types of some other typical higher-order functions are:
\begin{align*}
  \kord{compose} &\T (\alpha \to[\drt] \beta) \to (\beta \to[\drt'] \gamma) \to (\alpha \to[\drt \cup \drt'] \gamma) \\
  \kord{curry} &\T (\alpha \times \beta \to[\drt] \gamma) \to (\alpha \to \beta \to[\drt] \gamma) \\
  \kord{uncurry} &\T (\alpha \to[\drt] \beta \to[\drt'] \gamma) \to (\alpha \times \beta \to[\drt \cup \drt'] \gamma) \\
  \kord{fold\_left} &\T (\alpha \to[\drt] \beta \to[\drt'] \alpha) \to \alpha \to \beta \listty \to[\drt \cup \drt'] \alpha \\
  \kord{fold\_right} &\T (\alpha \to[\drt] \beta \to[\drt'] \beta) \to \alpha \listty \to \beta \to[\drt \cup \drt'] \beta \\
  \kord{filter} &\T (\alpha \to[\drt] \boolty) \to (\alpha \listty \to[\drt] \alpha \listty)
\end{align*}
The dirt annotations are similar to ones usually given in existing polymorphic effect systems such as~\cite{leroy2000type}.
Also note that if we disable the display of annotations, \Eff shows the programmer exactly the same types as OCaml.

The second distinguishing feature is that \Eff is based on algebraic effects~\cite{plotkin2001adequacy, plotkin2003algebraic}.
This means that effects are accessed exclusively through a set of \emph{operations},
which are all of the form $\hash{\inst}{\op}$,
  where an \emph{operation symbol}~$\op$ describes the action,
  and an \emph{instance}~$\inst$ describes where it should happen.
This means that we print to the standard output channel
using an operation \inline{std#print} instead of a primitive function \inline{print_string},
we access a reference \inline{r}
through operations \inline{r#lookup} and \inline{r#update} instead of through \inline{!} and \inline{:=},
and we raise an exception \inline{exc}
by calling \inline{exc#raise} instead of using a special keyword \inline{raise}.
The latter constructs are, of course, all definable in terms of the former.

This uniform representation allows effects to be seamlessly combined~\cite{hyland2006combining},
forms a natural basis for an effect system ---
possible effects of a given computation are captured accurately by the set of operations it calls ---
and paves the way for the third distinguishing feature of \Eff: handlers of arbitrary algebraic effects~\cite{plotkin2009handlers}.

Let us take a look at a few concrete examples.
What follows is by no means an exhaustive list of what can be accomplished with handlers.
For more examples, please see~\cite{plotkin2009handlers,bauer2012programming,kammar2013handlers}.

\subsection{Exceptions}

We start with exceptions as they are the simplest algebraic effect and
as exception handlers are already a well established concept.

\subsubsection{Effect types}

Instances are first-class values in \Eff and are given an \emph{effect type}.
For exception instances, called simply \emph{exceptions},
the effect type is of the form $\alpha \kpost{exception}$,
where $\alpha$ is the type of any additional information that the exception may carry.

We declare each effect type together with an \emph{effect signature} that lists
available operation symbols together with the types of parameters they accept
and of results they yield to the waiting continuation.
The signature corresponding to $\alpha \kpost{exception}$ is
\[
  \set{
    \kord{raise} \T \alpha \to \emptyty
  }
\]
where the result type of \inline{raise} is $\emptyty$ (a sum type with zero constructors)
because raising an exception terminates the computation and yields nothing to the continuation.

Each term of an effect type is also annotated with a \emph{region},
which describes (an over-approximation of) the set of instances it may occupy.
For example, an exception \inline{exc1} is given
the type $\alpha \kpost{exception}^{\set{\kord{exc1}}}$,
while the conditional \inline{if cond then exc1 else exc2} is given
the type $\alpha \kpost{exception}^{\set{\kord{exc1}, \kord{exc2}}}$
(any bigger sets would also be fine).

\subsubsection{Raising exceptions}
\label{ssub:raising-exceptions}

Since there is little we can do with a result of the empty type,
we rarely raise exceptions directly with \inline{exc#raise}.
Instead, we define a convenience function, also named \inline{raise},
which places the exception call inside the empty type eliminator:
\begin{source}
  let raise exc arg =
    match (exc#raise arg) with
\end{source}
with the inferred type
\[
  \kord{raise} \T \alpha \kpost{exception}^{\rgn} \to \alpha \to[\kord{raise} \T \rgn] \beta
\]
So, \inline{raise} takes an exception from a region captured by a region parameter~$\rgn$
and a matching argument of type $\alpha$.
The second application results in a computation that can raise any exception from $\rgn$,
while its return type~$\beta$ can be any arbitrary.
This allows us to use \inline{raise} at any point in the computation,
for example in computing the tail of a list:
\begin{source}
  let tail xs =
    match xs with
    | [] -> raise emptyListTail
    | _ :: xs -> xs
\end{source}
where \inline{emptyListTail} is the exception stating that the empty list has no tail.
The inferred type of \inline{tail} is
\[
  \kord{tail} \T \alpha \listty \to[\kord{raise} \T \set{\kord{emptyListTail}}] \alpha \listty
\]
where the dirt shows that \inline{emptyListTail} may be raised during execution.

In addition to inferring types of values, we can also infer \emph{dirty types~$A \E \Drt$} of computations.
These consist of a value type~$A$ describing the result
and of a dirt~$\Drt$ describing the operations that may be called while computing it.
For example, the inferred dirty type of the computation \inline{tail [1; 2; 3]} is $\type{int} \listty \E \set{\kord{raise} \T \set{\kord{emptyListTail}}}$.
Note that the effect system is conservative and signals a possible exception even though it will not be raised during runtime.
This also means that a computation is guaranteed to be pure whenever its inferred dirt is empty.

For a final example of the effect system before we turn to handlers,
let us combine \inline{map} and \inline{tail} as
\begin{source}
  let map_tail f xs = map f (tail xs)
\end{source}
with the inferred type
\[
  \kord{map\_tail} \T (\alpha \to[\kord{raise} \T \rgn \mid \drt] \beta) \to
  (\alpha \listty \to[\kord{raise} \T  \set{\kord{emptyListTail}} \cup \rgn \, \mid \, \drt] \beta \listty)
\]
Unlike the argument to \inline{map},
the argument of \inline{map_tail} has its dirt split into two parts:
the region parameter $\rgn$ captures the region of all the exceptions that \inline{f} may raise,
while the dirt parameter $\drt$ captures all other operations.
The split allows us to express the fact that \inline{map_tail f} may raise either \inline{emptyListTail} or an exception from $\rgn$,
and that it may call any operation from $\drt$ that \inline{f} can.
This is similar to \emph{row typing}~\cite{remy1993type},
where we use a parameter to capture all the fields of a record that are not explicitly mentioned.

\subsubsection{Handling exceptions}
\label{ssub:handling-exceptions}

As exceptions are the simplest example of effects,
exception handlers are the simplest example of handlers.
A \inline{try with} handling construct known from OCaml, like the one in
\begin{source}
  let print_ratio x y =
    try
      print_string ("Ratio is " ^ string_of_float (x /. y))
    with
    | Division_by_zero -> print_string "Ratio does not exist!"
\end{source}
would be used in \Eff as
\begin{source}
  let print_ratio x y =
    handle
      std#print ("Ratio is " ^ string_of_float (x /. y))
    with
    | divisionByZero#raise _ _ -> std#print "Ratio does not exist!"
\end{source}
Each handler case takes two arguments which we ignore for now.

Like instances, handlers are first-class values in \Eff,
and the above is just special syntax for
\begin{source}
  let print_ratio x y =
    with h handle
      std#print ("Ratio is " ^ string_of_float (x /. y))
\end{source}
where \inline{h} is given by
\begin{source}
  handler
  | divisionByZero#raise _ _ -> std#print "Ratio does not exist!"
\end{source}
Handlers are given \emph{handler types~$A \E \Drt \hto B \E \Drt'$},
meaning that a handler takes a computation with \emph{incoming} dirty type $A \E \Drt$
and transforms it into a computation with \emph{outgoing} dirty type $B \E \Drt'$.
The inferred type for \inline{h} is
\begin{multline*}
  \unitty \E \set{
    \kord{raise} \T \rgn_1,
    \kord{print} \T \rgn_2
  \mid \drt} \ \hto \\
  \unitty \E \set{
    \kord{raise} \T \rgn_1 - \kord{divisionByZero},
    \kord{print} \T \set{\kord{std}} \cup \rgn_2
  \mid \drt}
\end{multline*}
We see that \inline{h} leaves the type of results to be $\unitty$.
Its incoming dirt is split into three parts:
exceptions $\rgn_1$ that we may raise,
channels $\rgn_2$ that we may print to,
and any other operations~$\drt$ different from \inline{raise} or \inline{print} that we may call.
The outgoing dirt is similarly split:
the handled computation may still raise exceptions from $\rgn_1$,
except that now \inline{divisionByZero} will be handled.
Next, the handled computation may print to \inline{std} in addition to any channel in $\rgn_2$.
Finally, any operation in $\drt$ will be neither caught nor called by \inline{h},
so this part remains as it is.

Handler types usually (but not always) all have the same shape:
they remove certain operations,
possibly add some of their own,
and pass through any unhandled dirt.
This results in a repetitive type, which can be written in a more compact form
that emphasises only the differences.
For \inline{h}, this is:
\[
  \unitty \hto[\kord{raise} \T -\kord{divisionByZero},\ \kord{print} \T +\kord{std}] \unitty
\]
So, \inline{h} removes \inline{divisionByZero#raise}, adds \inline{std#print}, and leaves the rest as it is.

In practice, exception handlers are rarely reused,
because treatment of exceptions varies greatly
depending on the context in which they are handled.
An example of a more general exception handler is \inline{optionalize},
which transforms a computation into one that yields an optional result,
depending on if a given exception~\inline{exc} was raised or not.
We define \inline{optionalize} as:
\begin{source}
  let optionalize exc =
    handler
    | exc#raise _ _ -> None
    | val x -> Some x
\end{source}
The first thing to notice is the special \inline{val} case,
which determines what to do when the handled computation returns a value.
In our case, we wrap it with the \inline{Some} constructor of the $\kord{option}$ type.
Then, if we define
\begin{source}
  let tail_opt xs =
    with (optionalize emptyListTail) handle (tail xs)
\end{source}
the call \inline{tail_opt [1; 2; 3]} evaluates to \inline{Some [2; 3]},
while \inline{tail_opt []} evaluates to \inline{None}.
The inferred type of \inline{tail_opt} is the pure $\alpha \kpost{list} \to (\alpha \kpost{list}) \kpost{option}$,
while the type of \inline{optionalize} is
\[
  \kord{optionalize} \T \alpha \kpost{exception}^{\rgn} \to (\beta \hto[\kord{raise} \T \dotminus \rgn] \beta \kpost{option})
\]
So, we change the result type from $\beta$ to $\beta \kpost{option}$,
and remove any call of \inline{exc#raise} for the exception \inline{exc} determined by the region~$\rgn$.

\subsubsection{Handled regions}
\label{ssub:handled-regions}

You may have observed that we wrote $\dotminus \rgn$ instead of $-\rgn$ in the type of \inline{optionalize}.
This is meant to point out that we may remove $\rgn$ from the dirt only if it denotes a singleton.
The reason for this is as follows.

Take exceptions \inline{exc1} and \inline{exc2} and define a handler \inline{h} as:
\begin{source}
  let exc = if cond then exc1 else exc2 in
  let h = 
    handler
    | exc#raise _ _ -> ...
\end{source}
So, does the \inline{exc#raise} case of \inline{h} handle \inline{exc1#raise} or \inline{exc2#raise} at runtime?
Unfortunately, just by looking at the type
$\kord{exc} \T \alpha \kpost{exception}^{\set{\kord{exc1}, \kord{exc2}}}$,
we cannot say anything,
so we must assume that both are unhandled.
The only way we can be sure during type-checking that a handling case removes a given operation is
when the region of its instance is a singleton.

For this reason, we write $\dotminus \rgn$ whenever the region of the handled instance is still some parameter~$\rgn$.
If this eventually turns out to denote some singleton $\set{\inst}$, we may safely replace it by $-\inst$.
But if it turns to be a bigger set, we need to drop it from the handler type.

\subsection{Input \& output}

Interactive input \& output is also a very simple algebraic effect,
yet its handlers expose almost all the important aspects of general ones.
For input \& output, the effect instances are called \emph{channels}
and have the effect type $\kord{channel}$ with the signature
\[
  \set{
    \kord{read} \T \unitty \to \stringty,
    \kord{print} \T \stringty \to \unitty
  }
\]
A simple output handler is one that reverses the order of printouts:
\begin{source}
  handler
  | std#print msg k -> k (); std#print msg
\end{source}
The \inline{std#print} case takes two parameters:
  the string \inline{msg} given to \inline{std#print} and
  the continuation \inline{k} waiting for its result.
We first resume the continuation by passing it the unit value \inline{()},
and only after that finishes, we print out \inline{msg}.
The handler recursively handles any other \inline{std#print} that the continuation may call,
so the order of printouts in the continuation is reversed as well.
Note, however, that \inline{std#print} on the right-hand side is outside the scope of the handler
and remains unhandled (unless there are more handlers nested outside).

A more interesting example that can be useful in unit testing is a handler that
collects all printouts to a list of strings and returns it together with the result:
\begin{source}
  handler
  | val x -> (x, [])
  | std#print msg k ->
      let (x, msgs) = k () in
      (x, msg :: msgs)
\end{source}
So, a computation that just returns the value \inline{x} prints nothing,
and we return an empty list \inline{[]} together with \inline{x}.
If, however, the computation prints the string \inline{msg},
we resume the continuation \inline{k}.
This is also handled with the same handler, so it returns some value \inline{x}
and a list of its messages \inline{msgs}.
Now, we only need to prepend \inline{msg} to this list and return it together with \inline{x}.
The fact that the handler changes the type of the handled computation is reflected in its inferred type
$\alpha \hto[\kord{print} \T -\kord{std}] \alpha \times \stringty \kpost{list}$.

A matching handler that is also useful in unit testing is one that feeds a list of strings to \inline{std#read}:
\begin{source}
  handler
  | val x -> (fun strs -> x)
  | std#read () k -> (fun strs ->
      match strs with
      | str :: strs' -> (k str) strs'
      | [] -> (k "") []
    )
\end{source}
We accomplish this by transforming a computation into a function that accepts a list of strings \inline{strs}.
If the computation returns some value \inline{x},
the function ignores its argument and returns \inline{x}.
But if the computation calls \inline{std#read}, we take a look at the list \inline{strs}.
If it is non-empty, we pass \inline{k} its first element \inline{str}.
And since the continuation is further handled,
it is also a function that accepts a list of strings,
so we pass it the remainder \inline{strs'}.
If it is empty, we pass \inline{k} the empty string and again the empty list.

The inferred type of the handler is
\[
  \alpha \E \set{\kord{read} \T \rgn \mid \drt} \ \hto\  (\stringty \listty \to[\Drt] \alpha) \E \Drt
\]
where $\Drt = \set{\kord{read} \T \rgn - \kord{std} \mid \drt}$.
The reason $\Drt$ appears in two places is
because operations other than \inline{std#read} may occur before or after we handle the first \inline{std#read} and so obtain a function.
Since $\Drt$ appears twice, we unfortunately cannot use the compact form.

\label{page:finally}
However, \Eff extends handlers with an additional \inline{finally} case,
which first transforms the computation with the handler,
and then applies the finally case computation to the resulting value.
In particular, if \inline{h} is some handler and \inline{h_fin} is the same as \inline{h}
except that it also contains a case \inline{finally x -> c_fin},
the computation \inline{with h_fin handle c} behaves exactly as
\begin{source}
  let x = (with h handle c) in c_fin
\end{source}
Using this extension, we can define
\begin{source}
  let supply_input strs0 =
    handler
    | val x -> (* ...as before... *)
    | std#read () k -> (* ...as before... *)
    | finally f -> f strs0
\end{source}
So, once we get back a function \inline{f} accepting a list of inputs, we apply it to the given list \inline{strs0}.
We use the handler as
\begin{source}
  with
    supply_input ["Alpha"; "Bravo"; "Charlie"]
  handle
    ...
\end{source}
The inferred type then takes the simpler form
\[
  \kord{supply\_input} \T \stringty \listty \to (\alpha \hto[\kord{read} \T -\kord{std}] \alpha)
\]

\subsection{References}
\label{sub:references}\enlargethispage{\baselineskip}

Similar to OCaml, mutable references in \Eff are given the effect type $\alpha \kpost{ref}$
with the signature
\[
  \set{
    \kord{lookup} \T \unitty \to \alpha,\;
    \kord{update} \T \alpha \to \unitty
  }
\]
We can define the OCaml accessors by:
\begin{source}
  let (!) r = r#lookup ()
  let (:=) r v = r#update v
\end{source}
with the types
\[
  (\mbox{\texttt{!}}) \T \alpha \kpost{ref}^{\rgn} \to[\kord{lookup} \T \rgn] \alpha \qquad\qquad
  (\mbox{\texttt{:=}}) \T \alpha \kpost{ref}^{\rgn} \to \alpha \to[\kord{update} \T \rgn] \unitty
\]
With reference handlers, we can temporarily alter the value stored in the reference,
make it read-only, log all changes, and more. 
However, handlers are not meant only for overriding but also for defining effects.
In particular, we can use handlers to implement references with a state monad:
\begin{source}
  let state r s0 =
    handler
    | val x -> (fun s -> x)
    | r#lookup () k -> (fun s -> k s s)
    | r#update s' k -> (fun s -> k () s')
    | finally f -> f s0
\end{source}
The function \inline{state} gives a handler that
handles a stateful computation using reference \inline{r} into a pure function that accepts the current state~\inline{s}
and passes it around.
So, we handle \inline{lookup} by a function
that takes the state \inline{s} and passes it to the continuation \inline{k},
which expects the current state as the outcome of \inline{lookup}.
Since this continuation is further handled, it is again a function accepting current state.
Looking up a reference does not change the state, so we pass \inline{s} again,
thus \inline{k} is applied to \inline{s} twice.
In the case for \inline{update},
the expected outcome of the call is the unit value,
while the current state is overwritten by the parameter \inline{s'}.
And the \inline{finally} case says that once we get back a function accepting current state,
we apply it to a given initial state~\inline{s0}.

The inferred type of \inline{state} is
\[
  \kord{state} \T \alpha \kpost{ref}^{\rgn} \to \alpha \to (\beta \hto[\kord{lookup} \T \dotminus\rgn, \kord{update} \T \dotminus\rgn] \beta)
\]
If we want to access the final state, we define \inline{state'} to be exactly the same as \inline{state},
except that its value case is \inline{val x -> (fun s -> (x, s))}.
In this case the inferred type is
\[
  \mbox{\texttt{state'}} \T \alpha \kpost{ref}^{\rgn} \to \alpha \to (\beta \hto[\kord{lookup} \T \dotminus\rgn, \kord{update} \T \dotminus\rgn] \beta \times \alpha)
\]

We can use multiple references without a hitch.
For example, given two references \inline{r1} and \inline{r2}, the computation
\begin{source}
  with (state r1 6) handle
    with (state r2 0) handle
      let x = !r1 in
      r2 := x + 1;
      !r2 * x
\end{source}
returns \inline{42} and its inferred type is the pure $\type{int} \E \emptyset$.

If we replace \inline{state} by \inline{state'},
the handled computation returns \inline{(6, (7, 42))} and the inferred type is $\type{int} \times (\type{int} \times \type{int}) \E \emptyset$.
This shows how easy it is to change the effectful behaviour by just switching the handlers and keeping the imperative code as it is.

\section{Core Eff}
\label{sec:core-eff}

For formal development,
we restrict our attention to \emph{core \Eff}, a minimal fragment of \Eff.
Core \Eff differs from \Eff in the following aspects:
\begin{itemize}
  \item
    Core \Eff is a fine-grain call-by-value calculus~\cite{levy03modelling},
    which means that its terms are split into
    inert \emph{expressions} and possibly effectful \emph{computations}.
    The separation makes the formalization much simpler,
    but makes programming that much harder.
    For this reason, \Eff allows the programmer to freely mix the terms,
    and performs the routine separation automatically.
    For example, a program such as
    \begin{source}
  let transform f m n = (f m 42, n)
    \end{source}
    gets translated into\enlargethispage{\baselineskip}
    \begin{source}
  let transform = fun f -> val (fun m -> val (fun n ->
    let tmp1 =
      let tmp2 = f m in
      tmp2 42
    in
    val (tmp1, n)
  ))
    \end{source}
    where \inline{val} constructs a computation that immediately returns
    a value represented by a given expression.
  \item
    \Eff allows programmers to define their own parametric inductive datatypes and effect types,
    while in core \Eff, we fix the signature of effect types and drop inductive datatypes entirely.
  \item
    \Eff provides a \inline{new} construct
    that allows a programmer to create fresh instances at runtime~\cite{bauer2012programming},
    and can be used to model both exception declarations and reference allocations in ML.
    Since the formalization of this feature is still under investigation, we omit it from our development
    and only briefly discuss it in the Conclusion.
  \item
    Handlers in \Eff allow the additional \inline{finally} case in handlers,
    but as already discussed on page~\pageref{page:finally},
    this is merely a convenience that can be expressed with existing constructs.
\end{itemize}

\subsection{Terms}

We start with a given set of \emph{effects}~$E$,
which are just labels such as $\kord{channel}$, $\kord{exception}$ or $\kord{ref}$
for all possible effects we want to use in our programs.
Next, for each effect~$E$, we have a fixed set $\ops_E$ of operation symbols~$\op$
and a fixed set $\insts_E$ of instances~$\inst$.
We assume that in each operation $\hash{\inst}{\op}$,
both $\inst$ and $\op$ belong to the same effect.

The \emph{expressions}~$e$ and \emph{computations}~$c$ of core \Eff are given by:
\begin{align*}
  \text{expression}~e \bnfis {}
    &x \bnfor
    \tru \bnfor
    \fls \bnfor
    0 \bnfor
    \succ e \bnfor
    \unt \bnfor
    \fun{x} c \bnfor
    \inst \bnfor
    h
    \\  
  \text{handler}~h \bnfis {}
    &(
      \handler
      \val x \mapsto c_v \case
      (\call{e_i}{\op_i}{x}{k} \mapsto c_i)_i
    )
    \\
  \text{computation}~c \bnfis {}
    &\ifthenelse{e}{c_1}{c_2} \bnfor
    \iszero e \bnfor
    \pred e \bnfor
    \absurd e \bnfor
    e_1 \, e_2 \bnfor \\
    &\val e \bnfor
    \call{e_1}{\op}{e_2}{\cont{y}{c}} \bnfor
    \letin{x = c_1} c_2 \bnfor
    \letvalin{x = e} c \bnfor \\
    &\letrecin{f \, x = c_1} c_2 \bnfor
    \withhandle{e}{c}
\end{align*}
Here and everywhere, we write $(-)_{i \in I}$ or just $(-)_i$ to denote a finite repetition of~$-$.
The language constructs are standard except for:
the empty type eliminator $\kord{absurd}$,
the computation $\kord{val}$ that immediately evaluates to a value,
polymorphic let-binding $\kpre{let} \kord{val}$,
and the already discussed \emph{instance constants}, \emph{handlers}, \emph{operation calls} and the \emph{handling construct}.
Note that both instance constants and handlers are first-class values in core \Eff.

The operation calls in core \Eff are slightly different from the ones in \Eff.
The call~$\call{\inst}{\op}{e}{\cont{y}{c}}$ represents an application
of an operation~$\hash{\inst}{\op}$ to a parameter $e$
with a continuation $\cont{y}{c}$ waiting for the result of the call to be bound to~$y$.
Explicit continuations are convenient for operational semantics,
but we do not expect the programmer to use them.
Instead, \Eff uses \emph{generic effects}~\cite{plotkin2003algebraic}, defined as
\[
  \hash{e}{\op} \defeq \fun{x} \call{e}{\op}{x}{\cont{y}{\val y}}
\]
which take a parameter and perform the operation call with the trivial continuation.
Then, the programmer can write the more familiar $\letin{y = \hash{\inst}{\op} \, e} c$
instead of $\call{\inst}{\op}{e}{\cont{y}{c}}$,
and we shall see that the two exhibit equivalent behaviour.  

The rough idea is that each non-divergent computation either evaluates to a value or calls an operation.
We use a handler~$h = \handler \val x \mapsto c_v \case (\call{\inst_i}{\op_i}{x}{k} \mapsto c_i)_i$ on a computation~$c$ as follows:
\begin{itemize}
\item
  If $c$ evaluates to a value $\val e$, we use the value case and evaluate $c_v[e / x]$.
\item
  If $c$ performs an operation call~$\call{\inst_j}{\op_j}{e}{\cont{y}{c'}}$
  and the handler contains a matching case $\call{\inst_j}{\op_j}{x}{k} \mapsto c_j$,
  we evaluate $c_j[e / x, (\fun{y} \withhandle{h}{c'}) / k]$.
  We wrap the handler $h$ around the continuation so that it may continue handling future operation calls,
  though any operations called by $c_j$ escape its scope.
\item
  If the handler has no matching operation cases for the called operation,
  then just like in exception handlers,
  we propagate the call outwards for other handlers to catch,
  though we still wrap $h$ around the continuation as it may handle some other operations.
\end{itemize}

\noindent Let-binding $\letin{x = c_1} c_2$ works as follows:
if $c_1$ evaluates to $\val e$, we continue with $c_2[e / x]$,
but if $c_1$ calls an operation,
we propagate the call outwards just like when a handler has no matching cases.
In fact, let-binding $\letin{x = c_1} c_2$ works exactly as
$\withhandle{(\handler \val x \mapsto c_2)}{c_1}$
Though this makes let binding redundant,
we keep it in the language for convenient notation and
to serve as a stepping stone to the less familiar handling construct.

To make the above intuition more precise and to motivate the effect system,
we now give a small-step operational semantics,
determined by a relation $c \step c'$ defined in Figure~\ref{fig:small-step},
stating that a computation~$c$ takes a single step to $c'$.
Note that the relation is given for computations only and that expressions are inert.

\begin{figure}[h]
\hrulefill
  \small
  \begin{mathpar}
  \inferrule{
  }{
    \ifthenelse{\tru}{c_1}{c_2} \step c_1
  }

  \inferrule{
  }{
    \ifthenelse{\fls}{c_1}{c_2} \step c_2
  }

  \inferrule{
  }{
    \iszero 0 \step \val \tru
  }

  \inferrule{
  }{
    \iszero (\succ e) \step \val \fls
  }

  \inferrule{
  }{
    \pred 0 \step \val 0
  }

  \inferrule{
  }{
    \pred (\succ e) \step \val e
  }

  \inferrule{
  }{
    (\fun{x} c) \, e \step c[e / x]
  }

  \inferrule{
    c_1 \step c_1'
  }{
    \letin{x = c_1} c_2 \step \letin{x = c_1'} c_2
  }

  \inferrule{
  }{
    \letin{x = \val e} c \step c[e / x]
  }

  \inferrule{
  }{
    \letin{x = \call{\inst}{\op}{e}{\cont{y}{c_1}}} c_2
      \step \call{\inst}{\op}{e}{\cont{y}{\letin{x = c_1} c_2}}      
  }

  \inferrule{
  }{
    \letvalin{x = e} c \step c[e / x]
  }

  \inferrule{
  }{
    \letrecin{f \, x = c_1} c_2
      \step c_2[(\fun{x} \letrecin{f \, x = c_1} c_1) / f]
  }

  \inferrule{
    c \step c'
  }{
    \withhandle{e}{c} \step \withhandle{e}{c'}
  }

  \inferrule{
  }{
    \withhandle{h}{(\val e)} \step c_v[e / x]
  }

  \inferrule{
  }{
    \withhandle{h}{(\call{\inst_j}{\op_j}{e}{\cont{y}{c}})}
      \step c_j[e / x, (\fun{y} \withhandle{h}{c}) / k]
  }

  \inferrule{
    \hash{\inst}{\op} \not\in \set{\hash{\inst_i}{\op_i}}_i
  }{
    \withhandle{h}{(\call{\inst}{\op}{e}{\cont{y}{c}})}
      \step \call{\inst}{\op}{e}{\cont{y}{\withhandle{h}{c}}}
  }
\end{mathpar}
\caption{
  The inductive definition of the relation $c \step c'$.
  In the last three rules, we set $h = \handler
    \val x \mapsto c_v \case
    (\call{\inst_i}{\op_i}{x}{k} \mapsto c_i)_i$.}
\label{fig:small-step}
\hrulefill
\end{figure}

\begin{exa}
\label{exa:reduction}
Take a reference handler
\begin{align*}
  h \defeq {} &\handler \\
  &\case \val x \mapsto \hash{r}{\kord{update}} \, x \\
  &\case \call{r}{\kord{lookup}}{x}{k} \mapsto k \, (\succ 0) \\
  &\case \call{r}{\kord{update}}{x}{k} \mapsto k \, \unt
\end{align*}
which temporarily treats a reference $r \in \insts_{\kord{ref}}$ as if it always contains $1$,
and afterwards updates it with the final result of the handled computation.
This update is not handled by~$h$ because it escapes its scope.
If we apply $h$ on the computation
\begin{align*}
  c \defeq {} &\letin{x_1 = \\
  &\quad \letin{x_2 = \hash{r}{\kord{lookup}} \, \unt} \\
  &\quad \hash{r}{\kord{update}} \, x_2 \\
  &} \val 0
\end{align*}
the outcome of the first lookup is $1$, which is then bound to $x_2$,
while the handler continues handling the continuation.
Then, the update is ignored and finally,
the handler applies the value case on $0$ and terminates with a call of $\hash{r}{\kord{update}}$.
Note that \Eff provides \emph{resources}~\cite{bauer2012programming},
which at this point trigger real-world effects and resume the continuation.
The exact reduction sequence is given in Figure~\ref{fig:reduction}.
\end{exa}

\begin{figure}[h]
\hrulefill
  \small
\newcommand{\hilite}{\underline}
\begin{align*}
  &\withhandle{h}{\\
  &\quad \letin{x_1 = (\letin{x_2 = \hilite{\hash{r}{\kord{lookup}} \, \unt}} \hash{r}{\kord{update}} \, x_2)} \val 0} \step \\
  &\withhandle{h}{\\
  &\quad \letin{x_1 = (\letin{\hilite{x_2} = \hilite{\call{r}{\kord{lookup}}{\unt}{\cont{y_1}{\val y_1}}}} \hash{r}{\kord{update}} \, x_2)} \val 0} \step \\
  &\withhandle{h}{\\
  &\quad \letin{\hilite{x_1} = \hilite{\call{r}{\kord{lookup}}{\unt}{\cont{y_1}{\letin{x_2 = \val y_1} \hash{r}{\kord{update}} \, x_2}}}} \val 0} \step \\
  &\withhandle{\hilite{h}}{\\
  &\quad \hilite{\call{r}{\kord{lookup}}{\unt}{\cont{y_1}{\letin{x_1 = (\letin{x_2 = \val y_1} \hash{r}{\kord{update}} \, x_2)} \val 0}}}} \step \\
  &\hilite{(\fun{y_1} \withhandle{h}{(\letin{x_1 = (\letin{x_2 = \val y_1} \hash{r}{\kord{update}} \, x_2)} \val 0)})} \,\, \hilite{1\vphantom{()}} \step \\
  &\withhandle{h}{(\letin{x_1 = (\letin{\hilite{x_2} = \hilite{\val 1}} \hash{r}{\kord{update}} \, x_2)} \val 0)} \step \\
  &\withhandle{h}{(\letin{x_1 = \hilite{\hash{r}{\kord{update}} \, 1}} \val 0)} \step \\
  &\withhandle{h}{(\letin{\hilite{x_1} = \hilite{\call{r}{\kord{update}}{1}{\cont{y_2}{\val y_2}}}} \val 0)} \step \\
  &\withhandle{\hilite{h}}{(\hilite{\call{r}{\kord{update}}{1}{\cont{y_2}{\letin{x_1 = \val y_2} \val 0}}})} \step \\
  &\hilite{(\fun{y_2} \withhandle{h}{(\letin{x_1 = \val y_2} \val 0)})} \,\, \hilite{\unt\vphantom{()}} \step \\
  &\withhandle{h}{(\letin{\hilite{x_1} = \hilite{\val \unt}} \val 0)} \step \\
  &\withhandle{\hilite{h}}{\hilite{(\val 0)}} \step
  \hilite{\hash{r}{\kord{update}} \, 0} \step \call{r}{\kord{update}}{0}{\cont{y_3}{\val y_3}}
\end{align*}
\caption{
  The evaluation of $\withhandle{h}{c}$, as described in Example~\ref{exa:reduction}.
  We underline the active parts of each step and shorten $\succ 0$ to $1$.
  We can see how the operation call to $\hash{r}{\kord{lookup}}$ in the first line propagates
  outwards to the matching handler while its continuation builds up.
  Once the call reaches a handler, it is replaced with the handling term
  in which $k$ is replaced by the further handled continuation.}
\label{fig:reduction}
\hrulefill
\end{figure}

\subsection{Types}
\label{sub:types}

The types, which are also split into pure and potentially effectful (here called \emph{dirty}) ones, are given by
\begin{align*}
  \text{type}\ A, B &\bnfis
    \boolty \bnfor
    \natty \bnfor
    \unitty \bnfor
    \emptyty \bnfor
    A \to \C \bnfor
    E^\Rgn \bnfor
    \C \hto \D
    \\
  \text{dirty type}~\C, \D &\bnfis
    A \E \Drt
\end{align*}
We have the usual ground types and the function type~$A \to \C$
of functions that take expressions of type~$A$ and perform computations of type~$\C$.
Next, we have the \emph{effect type}~$E^\Rgn$ of instances of effect~$E$ from a \emph{region}~$\Rgn$,
which is just a \emph{non-empty} finite set of instances $\set{\inst_1, \dots, \inst_n} \subseteq \insts_E$.
Finally, we have the \emph{handler type}~$\C \hto \D$ of handlers that take a computation of type~$\C$ and transform it into a computation of type~$\D$.
We call $\C$ the \emph{incoming} and $\D$ the \emph{outgoing} type.
Finally, dirty type~$A \E \Drt$ contains computations that
return values of type $A$ and may cause effects described by a \emph{dirt}~$\Drt$,
which is a set of operations
\[
  \set{\hash{\inst_1}{\op_1}, \dots, \hash{\inst_n}{\op_n}}
\]
To lighten the syntax, we write $A_1 \to A_2 \E \Drt$ as $A_1 \to[\Drt] A_2$
where we also omit the outer braces around $\Drt$.

Note that for simplicity, the types of core \Eff are monomorphic.
However, we are going to shift to polymorphic types with type, region and dirt parameters
when we start with type inference in Section~\ref{sec:inferring}.

\subsection{Subtyping}
\label{sub:subtyping}

As in most effect systems,
we need to take care of the \emph{poisoning problem}~\cite{wansbrough1999once}.
For example, what type should we give to~$\mathit{ignore}$ in
\begin{align*}
  &\letin{\mathit{ignore} = \val (\fun{msg} \val \unt)} \\
  &\letin{f = \ifthenelse{b}{(\val \mathit{ignore})}{(\val \hash{\kord{std}}{\kord{write}}})} \\
  &\val \mathit{ignore}
\end{align*}
for some suitable boolean $b$?
If we give it the type~$\type{string} \to \unitty$
(for this example, we allow ourselves the type $\type{string}$ of strings),
we cannot type the conditional statement as the two branches cannot have the same type,
but if we give it the type~$\type{string} \to[\hash{\kord{std}}{\kord{write}}] \unitty$,
we lose information that the final result is a pure function.

The simplest antidote for the poisoning problem is to allow subtyping,
so that we may give $\kord{ignore}$ the type with an empty dirt,
and use subsumption to suitably enlarge this dirt in the conditional statement.

Subtyping also solves a similar problem with regions of handled instances.
Consider the computation
\begin{align*}
  &\letin{u = \val \inst} \\
  &\letin{v = \ifthenelse{b}{\val u}{\val \inst'}} \\
  &\val (\handler \val x \mapsto \cdots \case \call{u}{\op}{x}{k} \mapsto \cdots)
\end{align*}
Without subtyping we are forced to give both $u$ and $v$ the type $E^{\set{\inst, \inst'}}$.
Therefore, as discussed in Section~\ref{ssub:handled-regions},
the type of $u$ does not tell us whether $h$ handles
$\hash{\inst}{\op}$ or $\hash{\inst'}{\op}$, and so we must assume that both may remain unhandled.
With subtyping we may give $u$ the type $E^{\set{\inst}}$, which makes it clear that $h$ handles $\hash{\inst}{\op}$.

For our purposes, it is enough to use \emph{structural} subtyping~\cite{fuh1990type},
where we relate only types of the same shape.
The subtyping relations between types and between dirty types are defined in Figure~\ref{fig:subtyping}.
\begin{figure}
\hrulefill
  \small
  \begin{mathpar}
  \inferrule[Sub-$\boolty$]{
  }{
    \boolty \le \boolty
  }

  \inferrule[Sub-$\natty$]{
  }{
    \natty \le \natty
  }

  \inferrule[Sub-$\unitty$]{
  }{
    \unitty \le \unitty
  }

  \inferrule[Sub-$\emptyty$]{
  }{
    \emptyty \le \emptyty
  }

  \inferrule[Sub-$\to$]{
    A' \le A \\
    \C \le \C'
  }{
    A \to \C \le A' \to \C'
  }

  \inferrule[Sub-$E$]{
    \Rgn \subseteq \Rgn'
  }{
    E^\Rgn \le E^{\Rgn'}
  }

  \inferrule[Sub-$\hto$]{
    \C' \le \C \\
    \D \le \D'
  }{
    \C \hto \D \le \C' \hto \D'
  }

  \inferrule[Sub-$\E$]{
    A \le A' \\
    \Drt \subseteq \Drt'
  }{
    A \E \Drt \le A' \E \Drt'
  }
\end{mathpar}
\caption{
  The inductive definition of the subtyping relations $A \le A'$ and $\C \le \C'$.}
\label{fig:subtyping}
\hrulefill
\end{figure}
Sometimes, we shall be interested in types that have the same shape.
So, we define $\approx$ as the equivalence relation on types, generated by $\le$.
Equivalence classes of $\approx$ are called \emph{skeletons}~\cite{simonet2003type}.

\subsection{Effect system}
\label{sub:effect-system}

Our effect system is built on two typing judgements, defined in Figure~\ref{fig:typing}.
The judgement $\ctx \ent[\sig] e \T A$ states that in context~$\ctx$ and \emph{signature}~$\sig$, an expression~$e$ has a type~$A$.
The judgement $\ctx \ent[\sig] c \T \C$ states a similar thing for a computation~$c$ and a dirty type~$\C$.
In both cases, the context $\ctx$ is a unique assignment of (pure) types to variables,
while the signature $\sig$ consists of \emph{effect signatures}~$\sig(E)$ for each effect $E$.
These are of the form
\[
  \set{\op_1 \T A^{\op_1} \to B^{\op_1}, \dots, \op_n \T A^{\op_n} \to B^{\op_n}}
\]
and assign a \emph{parameter type}~$A^\op$ and a \emph{result type}~$B^\op$ to each listed operation~$\op$.
For example, the effect signatures for references is:
\[
  \Sigma(\kord{ref}) = \set{\kord{lookup} \T \unitty \to \natty, \kord{update} \T \natty \to \unitty}
\]
For technical reasons, we assume that both the parameter and the result type for each operation do not contain any regions or dirt,
which limits them to the basic ground types such as $\natty$ or $\unitty$.
We further discuss this restriction in Remark~\ref{rem:glitch}.

The purpose of the presented effect system is to offer guarantees on the behaviour of programs,
not (yet) to lead to an efficient inference algorithm.
One sign of that is rules like \rulename{Val}, \rulename{Inst} or \rulename{Pred},
where we assign types that are safe, but much coarser than needed.
A more obvious sign is the rule \rulename{LetVal},
where we employ a very naive form of let-polymorphism that performs an explicit substitution.
This is, of course, extremely inefficient, but lets us postpone the use of parameters to
the inference rules, which use the more efficient variant with universally quantified types.

\begin{figure}
\hrulefill
  \small
  \begin{mathpar}
  \inferrule[Var]{
    (x \T A) \in \ctx
  }{
    \ctx \ent x \T A
  }

  \inferrule[True]{
  }{
    \ctx \ent \tru \T \boolty
  }

  \inferrule[False]{
  }{
    \ctx \ent \fls \T \boolty
  }

  \inferrule[Zero]{
  }{
    \ctx \ent 0 \T \natty
  }

  \inferrule[Succ]{
    \ctx \ent e \T \natty
  }{
    \ctx \ent \succ e \T \natty
  }

  \inferrule[Unit]{
  }{
    \ctx \ent \unt \T \unitty
  }

  \inferrule[Fun]{
    \ctx, x \T A \ent c \T \C
  }{
    \ctx \ent \fun{x} c \T A \to \C
  }

  \inferrule[Inst]{
    \inst \in \Rgn \subseteq \insts_E
  }{
    \ctx \ent \inst \T E^\Rgn
  }

  \text{\rulename{Hand} --- in the text}

  \inferrule[SubExpr]{
    \ctx \ent e \T A \\
    A \le A'
  }{
    \ctx \ent e \T A'
  }

  \medskip

  \inferrule[IfThenElse]{
    \ctx \ent e \T \boolty \\
    \ctx \ent c_1 \T \C \\
    \ctx \ent c_2 \T \C
  }{
    \ctx \ent \ifthenelse{e}{c_1}{c_2} \T \C
  }

  \inferrule[IsZero]{
    \ctx \ent e \T \natty
  }{
    \ctx \ent \iszero e \T \boolty \E \Drt
  }

  \inferrule[Pred]{
    \ctx \ent e \T \natty
  }{
    \ctx \ent \pred e \T \natty \E \Drt
  }

  \inferrule[Absurd]{
    \ctx \ent e \T \emptyty
  }{
    \ctx \ent \absurd e \T \C
  }
  
  \inferrule[App]{
    \ctx \ent e_1 \T A \to \C \\
    \ctx \ent e_2 \T A
  }{
    \ctx \ent e_1 \, e_2 \T \C
  }

  \inferrule[Val]{
    \ctx \ent e \T A
  }{
    \ctx \ent \val e \T A \E \Drt
  }

  \inferrule[Op]{
    \ctx \ent e_1 \T E^\Rgn \\
    \op \T A^\op \to B^\op \in \sig(E) \\\\
    \ctx \ent e_2 \T A^\op \\
    \ctx, y \T B^\op \ent c \T A \E \Drt \\
    \fra{\inst \in \Rgn} \hash{\inst}{\op} \in \Drt
  }{
    \ctx \ent \call{e_1}{\op}{e_2}{\cont{y}{c}} \T A \E \Drt
  }

  \inferrule[Let]{
    \ctx \ent c_1 \T A \E \Drt \\
    \ctx, x \T A \ent c_2 \T B \E \Drt
  }{
    \ctx \ent \letin{x = c_1} c_2 \T B \E \Drt
  }

  \inferrule[LetVal]{
    \ctx \ent e \T A \\
    \ctx \ent c[e / x] \T \C
  }{
    \ctx \ent \letvalin{x = e} c \T \C
  }

 \inferrule[LetRec]{
    \ctx, f \T A \to \C, x \T A \ent c_1 \T \C \\
    \ctx, f \T A \to \C \ent c_2 \T \D
  }{
    \ctx \ent \letrecin{f \, x = c_1} c_2 \T \D
  }

  \inferrule[With]{
    \ctx \ent e \T \C \hto \D \\
    \ctx \ent c \T \C
  }{
    \ctx \ent \withhandle{e}{c} \T \D
  }

  \inferrule[SubComp]{
    \ctx \ent c \T \C \\
    \C \le \C'
  }{
    \ctx \ent c \T \C'
  }
\end{mathpar}
\caption{
  The inductive definition of the typing judgements $\ctx \ent[\sig] e \T A$ and $\ctx \ent[\sig] c \T \C$.
  As the signature~$\sig$ does not change, we omit its display from all the judgements.
  The rule for handlers is given in the main text.}
\label{fig:typing}
\hrulefill
\end{figure}

All the typing rules are standard except for:
\begin{description}
\item[\rulename{Inst}]
  in which we check that $\inst$ is contained in the region $\Rgn$ that belongs to an effect $E$.
\item[\rulename{Op}]
  in which we first check that $e_1$ and $\op$ belong to the same effect.
  Then, we need to check that the dirt $\Drt$ covers not just all possible operations that the operation call may cause
  (recall that $\Rgn$ may contain more than one instance), but also any operations in the continuation~$c$.
  We may assume that $c$ has the same dirt, as we can use \rulename{SubComp} otherwise.
  We use the same reasoning in rules \rulename{IfThenElse} and \rulename{Let}.
\item[\rulename{With}]
  where the handling construct is typed like an application,
  except that it is applied to a computation rather than an expression.
\item[\rulename{Hand}]
  which is a bit more daunting, so we write it out separately:
  \[
    \inferrule[Hand]{
      \ctx, x \T A \ent c_v \T B \E \Drt' \\
      (\prms_i)_i \\
      \prms_\Drt
    }{
      \ctx \ent (
        \handler
        \val x \mapsto c_v \case
        (\call{e_i}{\op_i}{x}{k} \mapsto c_i)_i
      ) \T A \E \Drt \hto B \E \Drt'
    }
  \]
  For a handler to be of type $A \E \Drt \hto B \E \Drt'$,
  we first check that it takes values of type $A$ to computations of type $B \E \Drt'$.
  Then, for each operation case $\call{e_i}{\op_i}{x}{k} \mapsto c_i$, we check the premises~$\prms_i$,
  comprising:
  \[
    \ctx \ent e_i \T E_i^{\Rgn_i} \qquad
    \op_i \T A^{\op_i} \to B^{\op_i} \in \sig(E_i) \qquad
    \ctx, x \T A^{\op_i}, k \T B^{\op_i} \to B \E \Drt' \ent c_i \T B \E \Drt'
  \]
  Like in \rulename{Op}, we check that $e_i$ and $\op_i$ belong to the same effect $E_i$.
  Then, the handling computation~$c_i$ needs to have the same type $B \E \Drt'$,
  assuming that parameter $x$ is of type $A^{\op_i}$,
  and the continuation~$k$ of type~$B^{\op_i} \to B \E \Drt'$.
  Observe that since the continuation is further handled, it already has the outgoing type.

  Finally, in $\prms_\Drt$ we check that any operation in the incoming dirt~$\Drt$
  that is not guaranteed to be caught by the handler
  must appear in the outgoing dirt~$\Drt'$ as well.
  As discussed in~Section~\ref{ssub:handled-regions},
  an operation $\hash{\inst}{\op}$ will be (if not sooner)
  surely caught by the case for $\hash{\inst_i}{\op_i}$
  whenever $\op = \op_i$ and the region~$\Rgn_i$ is the singleton~$\set{\inst_i}$.
  Thus, we define $\prms_\Drt$ to be
  \[
    \fra{\hash{\inst}{\op} \in \Drt} (\inst \in \varuniq{\op = \op_i}{\Rgn_i}) \lor (\hash{\inst}{\op} \in \Drt')
  \]
  where the \emph{singleton union} $\cupdot$ behaves like a union, except that it considers only singletons.
  For example, $\set{\inst_1, \inst_2} \cupdot \set{\inst_2} \cupdot \set{\inst_3, \inst_4} \cupdot \set{\inst_5} = \set{\inst_2, \inst_5}$.
  More precisely, we define:
  \[
    \uniq{i}{\Rgn_i} = \set{\inst \mid \set{\inst} \in \set{\Rgn_i}_i}
  \]
\end{description}\medskip

\noindent The given effect system is then safe with respect to the operational semantics:
a computation $\ent c \T A \E \Drt$ can only call operations from $\Drt$.
In particular, if $\Drt$ is empty, then $c$ is guaranteed to be pure, though it may diverge.

\begin{thm}[Safety]
\label{thm:safety}
If for a computation~$c$, the typing judgement $\ent c \T A \E \Drt$ holds, then either:
\begin{itemize}
\item
  $c$ is of the form $\val e$ for some expression $\ent e \T A$, or
\item
  $c$ is of the form $\call{\inst}{\op}{e}{\cont{y}{c'}}$ for some $\hash{\inst}{\op} \in \Drt$, or
\item
  there exists a computation $c'$ such that $c \step c'$ and $\ent c' \T A \E \Drt$.
\end{itemize}
\end{thm}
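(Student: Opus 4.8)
The statement is the combined \emph{progress} and \emph{preservation} property for this fine-grain call-by-value calculus, so I would prove it by induction on the derivation of $\ent c \T A \E \Drt$ (induction on $c$ alone fails because of \rulename{SubComp}), with a case analysis on the last rule used. First, though, I would collect the usual auxiliary lemmas. Since \rulename{SubExpr} and \rulename{SubComp} may sit on top of any derivation, the \emph{generation (inversion) lemmas} have to be phrased up to subtyping: if $\ent e \T A \to \C$ then $e = \fun{x} c_0$ with $\ctx, x \T A_0 \ent c_0 \T \C_0$ for some $A \le A_0$, $\C_0 \le \C$; if $\ent e \T E^\Rgn$ then $e$ is an instance $\inst$ with $\inst \in \Rgn$; if $\ent e \T \C \hto \D$ then $e$ is a handler whose \rulename{Hand} premises hold for some incoming type $\ge \C$ and outgoing type $\le \D$; if $\ent e \T \boolty$ (resp.\ $\natty$) then $e$ is $\tru$ or $\fls$ (resp.\ $0$ or $\succ e'$); and no closed $e$ has $\ent e \T \emptyty$ (an easy induction, using that $A \le \emptyty$ forces $A = \emptyty$). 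I also need \emph{weakening} and a \emph{substitution lemma} (if $\ctx, x \T A \ent c \T \C$ and $\ctx \ent e \T A$ then $\ctx \ent c[e/x] \T \C$, and similarly for expressions), working under the convention that bound variables are chosen fresh so that context extensions stay well-formed.

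With these in place the induction is mostly bookkeeping. \rulename{Val} and \rulename{Op} land in the first and second disjuncts — for \rulename{Op}, inverting $e_1$ makes it an instance $\inst$ and the premise $\fra{\inst \in \Rgn} \hash{\inst}{\op} \in \Drt$ supplies the membership. \rulename{IfThenElse}, \rulename{IsZero} and \rulename{Pred} use the canonical-forms clauses and the matching reduction rule, with the target type already a premise or recovered from \rulename{True}/\rulename{False}/\rulename{Zero}/\rulename{Succ} together with \rulename{Val}. \rulename{Absurd} is vacuous by the emptiness lemma. \rulename{App} inverts $e_1$, steps by beta reduction, and re-types $c_0[e_2/x]$ via the substitution lemma and \rulename{SubComp}. \rulename{LetVal} and \rulename{LetRec} always take a step; preservation for \rulename{LetVal} is literally its premise, and for \rulename{LetRec} it follows by re-applying \rulename{Fun}, \rulename{LetRec} and the substitution lemma. \rulename{Let} splits on the behaviour of $c_1$ given by the induction hypothesis: a value triggers a substitution; an operation call is commuted outward and re-typed by \rulename{Op} with $c_2$ weakened; a step is re-typed by \rulename{Let}. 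Finally \rulename{SubComp} is handled by propagating the subtyping through whichever of the three disjuncts the induction hypothesis returns.

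The real work is the \rulename{With} case, $c = \withhandle{e}{c_0}$. The generation lemma turns $e$ into a handler $h$ whose \rulename{Hand} premises hold for an incoming type $A_1 \E \Drt_1$ wider than the type $\C_0$ of $c_0$ and an outgoing type $B_1 \E \Drt_1'$ narrower than $\D$. I then case on $c_0$ via the induction hypothesis. If $c_0 = \val e'$, the handled term steps to $c_v[e'/x]$ and preservation follows from the value-case premise, the substitution lemma and subsumption; if $c_0 \step c_0'$ we re-apply \rulename{With}. The subtle subcase is $c_0 = \call{\inst}{\op}{e'}{\cont{y}{c''}}$, matched against the two handler-on-call reduction rules. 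If $h$ has a case for $\hash{\inst}{\op}$, preservation uses that operation-case premise of \rulename{Hand} together with the fact that $\fun{y} \withhandle{h}{c''}$ again has type $B^{\op} \to B_1 \E \Drt_1'$ (re-applying \rulename{Fun}, \rulename{With}, \rulename{SubComp}), after which an iterated single substitution closes it. If $h$ has no such case, I must place $\hash{\inst}{\op}$ in the ambient dirt $\Drt$: it lies in $\Drt_1$ by generation on $c_0$ and subtyping, while the operational side-condition $\hash{\inst}{\op} \notin \set{\hash{\inst_i}{\op_i}}_i$ forces $\Rgn_i \ne \set{\inst}$ for every case $i$ with $\op_i = \op$ (otherwise $\inst_i = \inst$ since $\inst_i \in \Rgn_i$, hence $\hash{\inst_i}{\op_i} = \hash{\inst}{\op}$, a contradiction), so $\inst \notin \varuniq{\op = \op_i}{\Rgn_i}$ and $\prms_\Drt$ yields $\hash{\inst}{\op} \in \Drt_1' \subseteq \Drt$; then \rulename{Op} with region $\set{\inst}$ reconstructs the typing of $\call{\inst}{\op}{e'}{\cont{y}{\withhandle{h}{c''}}}$. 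I expect this reconciliation of the operational ``not among the handled instances'' test with the singleton-union condition buried in $\prms_\Drt$ to be the main obstacle; the rest is threading subtyping through the generation lemmas.
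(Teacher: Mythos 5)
The paper does not actually contain a proof of Theorem~\ref{thm:safety}: it explicitly defers to the Twelf formalization of core \Eff, so there is no textual proof to compare yours against line by line. Judged on its own, your proposal is the standard combined progress-and-preservation argument by induction on the typing derivation, with generation lemmas phrased up to structural subtyping, canonical forms, weakening and substitution, and it is sound; in particular you have correctly identified and resolved the one genuinely delicate point, namely reconciling the operational side-condition $\hash{\inst}{\op} \not\in \set{\hash{\inst_i}{\op_i}}_i$ with the singleton union in $\prms_\Drt$: if some $\Rgn_i$ with $\op_i = \op$ were the singleton $\set{\inst}$, then $\inst_i \in \Rgn_i$ would force $\hash{\inst_i}{\op_i} = \hash{\inst}{\op}$, so $\inst \notin \varuniq{\op = \op_i}{\Rgn_i}$ and $\prms_\Drt$ pushes the operation into the outgoing dirt. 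One small point worth making explicit when you write this up: the handler syntax carries arbitrary expressions $e_i$ in its operation cases, whereas the reduction rules are stated for instance constants $\inst_i$; it is your canonical-forms lemma for effect types (a closed expression of type $E^{\Rgn_i}$ is an instance constant lying in $\Rgn_i$) that licenses identifying the two, and this identification is what makes the comparison with the regions $\Rgn_i$ in $\prms_\Drt$ legitimate. With that made explicit, the sketch fills out to a complete proof.
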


We do not give a proof of Theorem~\ref{thm:safety} in this paper.
Instead, a full formalization of core \Eff in Twelf is available at~\url{https://github.com/matijapretnar/twelf-eff/}

\begin{exa}
To illustrate the type system, let us revisit Example~\ref{exa:reduction}.
First, assume that the reference $r \in \insts_{\kord{ref}}$ is given the type $\kord{ref}^{\set{r}}$.
Then, the stateful computation~$c$ has the dirty type
$\natty \E \set{\hash{r}{\kord{lookup}}, \hash{r}{\kord{update}}}$,
while the handler~$h$ has the type
\[
  (\natty \E \set{\hash{r}{\kord{lookup}}, \hash{r}{\kord{update}}}) \hto (\unitty \E \set{\hash{r}{\kord{update}}})
\]
as it handles both $\kord{lookup}$ and $\kord{update}$, but then triggers $\kord{update}$ in the value case.
This $\kord{update}$ also changes the type of computation from $\natty$ to $\unitty$.

If the reference $r$ has a less precise type $\kord{ref}^{\set{r, r'}}$,
the dirt of $c$ is 
\[
  \Drt \defeq \set{\hash{r}{\kord{lookup}}, \hash{r'}{\kord{lookup}}, \hash{r}{\kord{update}}, \hash{r'}{\kord{update}}}
\]
while the best type we can give to $h$ is $(\natty \E \Drt) \hto (\unitty \E \Drt)$.
Since the region of $r$ is not a singleton,
we unfortunately cannot give any guarantees on the handled operations.
\end{exa}

\section{Inferring constraints}
\label{sec:inferring}

Turning to our inference algorithm, we first describe
a collection of syntax-directed inference rules
that are readily transcribed into a recursive function that
infers a type and a set of constraints from a given term.

\subsection{Parametric types}
\label{sub:parametric-types}

As indicated in Section~\ref{sub:types}, we switch to a language that is more suited for inference:
\begin{align*}
  \text{type}~A, B &\bnfis
    \alpha \bnfor
    \boolty \bnfor
    \natty \bnfor
    \unitty \bnfor
    \emptyty \bnfor
    A \to \C \bnfor
    E^{\rgn*} \bnfor
    \C \hto \D
    \\
  \text{dirty type}~\C, \D &\bnfis
    A \E \Drt
    \\
  \text{dirt}~\Drt &\bnfis
    \set{\op_1 \T \rgn_1, \dots, \op_n \T \rgn_n \mid \drt^\ops}
\end{align*}
From now on, we refer to types and dirt from Section~\ref{sub:types},
which contain no parameters, as \emph{closed} ones.

To enable polymorphism, types are extended with the \emph{type parameters}~$\alpha$.
Then, regions are not just extended,
but completely replaced with \emph{region parameters}~$\rgn$,
as this greatly simplifies the inference.
We are going to capture the information about instances using constraints instead.
Recall that regions~$\Rgn$ describing the possible instances in an effect type $E^\Rgn$ are always inhabited.
This information will prove useful in Section~\ref{sec:simplifying} as it allows further simplification.
For this reason, we designate a special subset of region parameters, called \emph{inhabited} and marked by~$\rgn*$.

Finally, we adopt a row-like~\cite{remy1993type} representation of dirt as described in Section~\ref{ssub:raising-exceptions}.
The first part is a set of operation symbols together with a region parameter
that captures the (possibly empty) region of all the instances on which these symbols are used.
This is similar to before, except that operations are grouped by their operation symbols.
The reason for this grouping is that we are always able to precisely determine the operation symbols,
but not the instances of called or handled operations.

The second part consists of a single \emph{dirt parameter}~$\drt$,
intended to capture the rest of operations.
If the first part is empty, we write the dirt simply as $\drt$.
To keep track of the operation symbols captured by $\drt$
and ensure that it does not capture any symbols from the first part,
we sometimes write the parameter as $\drt^\ops$
to emphasise the set~$\ops$ of symbols not captured by $\drt$
(though $\ops$ can always be reconstructed by looking
at the first part of any dirt in which $\drt$ appears
because our algorithm ensures that all such dirts consistently list the same operation symbols).

Any additional information is captured with \emph{constraints}.
For example, a conditional can call any operation that one of its branches does.
If these two branches cause dirt captured by $\drt_1$ and $\drt_2$,
we can represent the dirt of the whole conditional with a fresh parameter $\drt$
together with constraints $\drt_1 \le \drt$ and $\drt_2 \le \drt$.
Constraints can be of one of the following five kinds:
\begin{itemize}
\item
  $A \le A'$ states that the type $A$ needs to be smaller than $A'$,
\item
  $\C \le \C'$ states the same for dirty types,
\item
  $\rgn \le \rgn' \cup \uniq{i}{\rgn*_i}$ is a generalisation of the inequality $\rgn \le \rgn'$ due to handlers.
  It states that all instances from $\rgn$ are either in $\rgn'$ or in some $\rgn*_i$ that is a singleton,
\item
  $\inst \in \rgn \cup \uniq{i}{\rgn*_i}$ similarly states that $\inst$ is either in $\rgn$ or in some $\rgn*_i$ that is a singleton,
\item
  $\Drt \le \Drt'$ states that the dirt $\Drt$ is smaller than $\Drt'$.
\end{itemize}
In the right-hand side $\rgn \cup \uniq{i}{\rgn*_i}$ of two region constraints,
we refer to $\rgn$ as the \emph{covering},
and to $\rgn*_i$ as the \emph{handled} region parameters.

Unlike the subtyping relation, constraints do not have any inherent reasoning principles,
but are just a way of writing down the relationship between parameters.
Instead, we give constraints a meaning by specifying their solutions.

\begin{defi}
\label{defi:closed-substitution}
A \emph{closed substitution}~$\sol$ is a partial mapping that maps:
  each type parameter~$\alpha$ to a closed type~$\sol(\alpha)$,
  each region parameter~$\rgn$ to a closed region~$\sol(\rgn)$,
  each inhabited region parameter~$\rgn*$ to a non-empty closed region~$\sol(\rgn*)$,
  and each dirt parameter~$\drt^\ops$ to a closed dirt~$\sol(\drt^\ops)$
  that contains no operations with operation symbols in~$\ops$.
\end{defi}

We write substitutions by specifying a set of mappings of parameters, for example
\[
  \set{\alpha_1 \mapsto \natty, \alpha_2 \mapsto \unitty, \rgn \mapsto \emptyset, \drt \mapsto \set{\hash{\inst}{\op}}}
\]
We can extend a closed substitution to other constructs by:
\begin{gather*}
\begin{aligned}
  \sol(\boolty) &= \boolty \qquad&
  \sol(A \to \C) &= \sol(A) \to \sol(\C) \\
  \sol(\natty) &= \natty &
  \sol(E^{\rgn*}) &= E^{\sol(\rgn*)} \\
  \sol(\unitty) &= \unitty &
  \sol(\C \hto \D) &= \sol(\C) \hto \sol(\D) \\
  \sol(\emptyty) &= \emptyty &
  \sol(A \E \Drt) &= \sol(A) \E \sol(\Drt) \\
\end{aligned} \\
\sol(\set{\op_1 \T \rgn_1, \dots, \op_n \T \rgn_n \mid \drt})
  = \Big( \bigcup_{i = 1}^n \set{\hash{\inst}{\op_i} \mid \inst \in \sol(\rgn_i)} \Big)
     \cup \sol(\drt)
\end{gather*}

\begin{defi}
A closed substitution~$\sol$ is a \emph{solution} of a set of constraints~$\cstr$,
which we write as $\sol \models \cstr$,
if it satisfies all the constraints in $\cstr$.
This is defined by:
\begin{align*}
  \sol \models A \le A' &\iff
    \sol(A) \le \sol(A') \\
  \sol \models \C \le \C' &\iff
    \sol(\C) \le \sol(\C') \\
  \sol \models \rgn \le \rgn' \cup \uniq{i}{\rgn*_i} &\iff
    \sol(\rgn) \subseteq \sol(\rgn') \cup \uniq{i}{\sol(\rgn*_i)}  \\
  \sol \models \inst \in \rgn \cup \uniq{i}{\rgn*_i} &\iff
    \inst \in \sol(\rgn) \cup \uniq{i}{\sol(\rgn*_i)} \\
  \sol \models \Drt \le \Drt' &\iff
    \sol(\Drt) \subseteq \sol(\Drt')
\end{align*}
\end{defi}

A parametric type $A$ together with a set of constraints $\cstr$ between its parameters
then describes a family of closed types, obtained by taking all instances $\sol(A)$ of $A$
for all solutions $\sol \models \cstr$,
and all their supertypes.
More precisely, we define
\[
  \types{\cstr} \defeq \set{A' \mid \sol(A) \le A', \sol \models \cstr} \qquad\qquad
  \types[\C]{\cstr} \defeq \set{\C' \mid \sol(\C) \le \C', \sol \models \cstr}
\]
Our aim now is to take an expression $e$
and compute a type $A$ and a set of constraints $\cstr$
such that the set of all possible types $A'$ we can assign to $e$ is captured exactly by $\types{\cstr}$.

\subsection{Inference rules}
\label{sub:inference-rules}

We infer types and constraints using syntax-directed inference rules of the form
$\ctx; \pctx \ent[F] e \T A \while \cstr$ for expressions and
$\ctx; \pctx \ent[F] c \T \C \while \cstr$ for computations,
defined in Figure~\ref{fig:constraints}.
Here, $\cstr$ is a set of constraints,
$F$ is a set of all fresh parameters introduced in the derivation,
and $\pctx$ is the \emph{polymorphic context}, which is collection
of unique assignments $x_i \T \fra{F_i} A_i \while \cstr_i$ of
\emph{type schemes} to variables (assumed to be different from the ones in $\ctx$).
As in typing judgements, we assume (though never write) a fixed signature~$\sig$.

The type schemes are similar to polymorphic types of ML, which are types,
universally quantified over a given set of type parameters.
In our case, we may also quantify over region and dirt parameters,
but we need to keep information about the constraints these parameters need to satisfy.
Even though the ordinary context $\ctx$ can be seen as a particular instance of $\pctx$,
we keep the two separate in order to relate the inference judgements to typing judgements,
as the latter employ only $\ctx$.

Though $\cstr$ and $F$ are sets, we sometimes write them as sequences to save space.
For example, we write $F_1, F_2, \alpha, \drt$ instead of $F_1 \cup F_2 \cup \set{\alpha, \drt}$.
We also assume that all parameters listed in $F$ are distinct
and this implies the usual freshness conditions~\cite[p.~321]{pierce2002types}.
For example, the above sequence implies that sets $F_1$ and $F_2$
are disjoint and do not contain $\alpha$ or $\drt$.
In particular, in the rule \rulename{Cstr-PolyVar},
we implicitly rename any bound parameters $F$ so that a fresh copy is obtained at each use.

\begin{figure}
\hrulefill
  \small
  \begin{mathpar}
  \inferrule[Cstr-Var]{
    (x \T A) \in \ctx
  }{
    \ctx; \pctx \ent[\emptyset] x \T A \while \emptyset
  }

  \inferrule[Cstr-PolyVar]{
    \big(x \T \fra{F} A \while \cstr\big) \in \pctx
  }{
    \ctx; \pctx \ent[F] x \T A \while \cstr
  }

  \inferrule[Cstr-True]{
  }{
    \ctx; \pctx \ent[\emptyset] \tru \T \boolty \while \emptyset
  }

  \inferrule[Cstr-False]{
  }{
    \ctx; \pctx \ent[\emptyset] \fls \T \boolty \while \emptyset
  }

  \inferrule[Cstr-Zero]{
  }{
    \ctx; \pctx \ent[\emptyset] 0 \T \natty \while \emptyset
  }

  \inferrule[Cstr-Succ]{
    \ctx; \pctx \ent[F] e \T A \while \cstr
  }{
    \ctx; \pctx \ent[F] \succ e \T \natty \while \cstr, A \le \natty
  }

  \inferrule[Cstr-Unit]{
  }{
    \ctx; \pctx \ent[\emptyset] \unt \T \unitty \while \emptyset
  }

  \inferrule[Cstr-Fun]{
    \ctx, x \T \alpha; \pctx \ent[F] c \T \C \while \cstr
  }{
    \ctx; \pctx \ent[F, \alpha] \fun{x} c \T \alpha \to \C \while \cstr
  }

  \inferrule[Cstr-Inst]{
  }{
    \ctx; \pctx \ent[\rgn*] \inst \T E^{\rgn*} \while \inst \in \rgn*
  }

  \text{\rulename{Cstr-Hand} --- in the text}

  \inferrule[Cstr-IfThenElse]{
    \ctx; \pctx \ent[F] e \T A \while \cstr \\
    \ctx; \pctx \ent[F_1] c_1 \T \C_1 \while \cstr_1 \\
    \ctx; \pctx \ent[F_2] c_2 \T \C_2 \while \cstr_2
  }{
    \ctx; \pctx \ent[F, F_1, F_2, \alpha, \drt] \ifthenelse{e}{c_1}{c_2} \T \alpha \E \drt \while
      \cstr, \cstr_1, \cstr_2, A \le \boolty, \C_1 \le (\alpha \E \drt), \C_2 \le (\alpha \E \drt)
  }

  \inferrule[Cstr-IsZero]{
    \ctx; \pctx \ent[F] e \T A \while \cstr
  }{
    \ctx; \pctx \ent[F, \drt] \iszero e \T \boolty \E \drt \while \cstr, A \le \natty
  }

  \inferrule[Cstr-Pred]{
    \ctx; \pctx \ent[F] e \T A \while \cstr
  }{
    \ctx; \pctx \ent[F, \drt] \pred e \T \natty \E \drt \while \cstr, A \le \natty
  }

  \inferrule[Cstr-Absurd]{
    \ctx; \pctx \ent[F] e \T A \while \cstr
  }{
    \ctx; \pctx \ent[F, \alpha, \drt] \absurd e \T \alpha \E \drt \while \cstr, A \le \emptyty
  }
  
  \inferrule[Cstr-App]{
    \ctx; \pctx \ent[F_1] e_1 \T A_1 \while \cstr_1 \\
    \ctx; \pctx \ent[F_2] e_2 \T A_2 \while \cstr_2
  }{
    \ctx; \pctx \ent[F_1, F_2, \alpha, \drt] e_1 \, e_2 \T \alpha \E \drt
      \while \cstr_1, \cstr_2, A_1 \le (A_2 \to[\drt] \alpha)
  }

  \inferrule[Cstr-Val]{
    \ctx; \pctx \ent[F] e \T A \while \cstr
  }{
    \ctx; \pctx \ent[F, \drt] \val e \T A \E \drt \while \cstr
  }

  \inferrule[Cstr-Op]{
    \ctx; \pctx \ent[F_1] e_1 \T A_1 \while \cstr_1 \\
    \op \T A^\op \to B^\op \in \sig(E) \\
    \ctx; \pctx \ent[F_2] e_2 \T A_2 \while \cstr_2 \\
    \ctx, y \T B^\op; \pctx \ent[F] c \T A \E \Drt \while \cstr
  }{
    \ctx; \pctx \ent[F_1, F_2, F, \rgn, \rgn*, \drt] \call{e_1}{\op}{e_2}{\cont{y}{c}} \T A \E \set{\op \T \rgn \mid \drt}
      \while \cstr_1, \cstr_2, \cstr, A_1 \le E^{\rgn*}, A_2 \le A^\op, \rgn* \le \rgn, \Drt \le \set{\op \T \rgn \mid \drt}
  }

  \inferrule[Cstr-Let]{
    \ctx; \pctx \ent[F_1] c_1 \T A \E \Drt_1 \while \cstr_1 \\
    \ctx, x \T \alpha; \pctx \ent[F_2] c_2 \T B \E \Drt_2 \while \cstr_2
  }{
    \ctx; \pctx \ent[F_1, F_2, \alpha, \drt] \letin{x = c_1} c_2 \T B \E \drt
      \while \cstr_1, \cstr_2, A \le \alpha, \Drt_1 \le \drt, \Drt_2 \le \drt
  }

  \inferrule[Cstr-LetVal]{
    \ctx; \pctx \ent[F_1] e \T A \while \cstr_1 \\
    \ctx; \pctx, (x \T \fra{F_1} A \mid \cstr_1) \ent[F_2] c \T \C \while \cstr_2
  }{
    \ctx; \pctx \ent[F_2] \letvalin{x = e} c \T \C \while \cstr_2
  }

 \inferrule[Cstr-LetRec]{
    \ctx, f \T \alpha_1 \to[\drt] \alpha_2, x \T \alpha_1; \pctx \ent[F_1] c_1 \T \C \while \cstr_1 \\
    \ctx, f \T \alpha_1 \to[\drt] \alpha_2; \pctx \ent[F_2] c_2 \T \D \while \cstr_2
  }{
    \ctx; \pctx \ent[F_1, F_2, \alpha_1, \alpha_2, \drt] \letrecin{f \, x = c_1} c_2 \T \D \while \cstr_1, \cstr_2, \C \le \alpha_2 \E \drt
  }

  \inferrule[Cstr-With]{
    \ctx; \pctx \ent[F_1] e \T A \while \cstr_1 \\
    \ctx; \pctx \ent[F_2] c \T \C \while \cstr_2
  }{
    \ctx; \pctx \ent[F_1, F_2, \alpha, \drt] \withhandle{e}{c} \T \alpha \E \drt
      \while \cstr_1, \cstr_2, A \le (\C \hto \alpha \E \drt)
  }
\end{mathpar}
\caption{
  The inductive definition of the inference judgements
  $\ctx; \pctx \ent[F] e \T A \while \cstr$ and $\ctx; \pctx \ent[F] c \T \C \while \cstr$.
  The rule for handlers is given in the main text.}
\label{fig:constraints}
\hrulefill
\end{figure}

As announced at the beginning of Section~\ref{sub:parametric-types},
regions and dirts have a fixed representation with parameters.
Thus in \rulename{Cstr-Inst}, we assign each instance a fresh region parameter and add a suitable constraint.
Similarly, we cannot simply state that the dirt of $\kord{val}$ is empty.
Instead, in \rulename{Cstr-Val}, we assign it a fresh dirt parameter~$\drt$
that needs to satisfy no constraints.
That means that we may replace $\drt$ by anything, including the empty set.

Though we get an equivalent set of constraints in the rule \rulename{Cstr-Op}
if we use a single region parameter,
we introduce two for technical reasons, discussed in Section~\ref{sub:simplifying-region-constraints}.
The rule~\rulename{Cstr-With} is analogous to~\rulename{Cstr-App}.

Otherwise, the rules for the standard constructs are
similar to ones in the Hindley-Milner algorithm~\cite[p.~322]{pierce2002types},
except that we need to use (correctly oriented) inequalities instead of equalities in the constraints.
In \rulename{Cst-LetVal} we can safely generalize over all the fresh parameters $F_1$ generated
while inferring the type of $e$ because they are guaranteed to be distinct from any parameters appearing in $\ctx$.

This leaves us with
\[
  \inferrule[Cstr-Hand]{
    \ctx, x \T \alpha_\text{in}; \pctx \ent[F_v] c_v \T \D_v \while \cstr_v \\
    (\prms_i)_i
  }{
    \ctx; \pctx \ent[F] (
      \handler
      \val x \mapsto c_v \case
      (\call{e_i}{\op_i}{x}{k} \mapsto c_i)_i
    ) \T \C \hto \D
      \while \cstr
  }
\]
To start, we take type parameters $\alpha_\text{in}$ and $\alpha_\text{out}$ to represent the incoming and outgoing type of the handler.
Next, we take $\ops$ to be the set of all distinct operation symbols listed in operation cases.
For each $\op \in \ops$, we take fresh parameters $\rgn_\text{in}^\op$ and $\rgn_\text{out}^\op$
that represent the region assigned to $\op$ in the incoming and outgoing dirt of the handler.
Finally, we take fresh parameters $\drt_\text{in}^\ops$ and $\drt_\text{out}^\ops$ to represent the rest of incoming and outgoing dirt.
The incoming and outgoing types are then
\[
  \C \defeq \alpha_\text{in} \E \set{(\op \T \rgn_\text{in}^\op)_{\op \in \ops} \mid \drt_\text{in}}
  \qquad\text{and}\qquad
  \D \defeq \alpha_\text{out} \E \set{(\op \T \rgn_\text{out}^\op)_{\op \in \ops} \mid \drt_\text{out}}
\]
After introducing the necessary parameters, we infer the type and constraints of the value case.
Next, for each operation case, in the premises~$\prms_i$, consisting of:
\[
  \ctx \ent[F_i] e_i \T A_i \while \cstr_i \qquad
  \op_i \T A^{\op_i} \to B^{\op_i} \in \sig(E_i) \qquad
  \ctx, x \T A^{\op_i}, k \T B^{\op_i} \to \D \ent[F_i'] c_i \T \D_i \while \cstr_i'
\]
we check the suitability of operation,
and infer the types and constraints of the handled instance~$e_i$ and of the operation case~$c_i$.

We end up with the a set of constraints~$\cstr$ consisting of the following five parts:
\begin{itemize}
\label{pag:hand-rules}
\item
  constraints~$\cstr_v$ inherited from the value case and
  constraints~$\cstr_i$ and $\cstr_i'$ inherited from each operation case;
\item
  constraint $\D_v \le \D$ stating that the outgoing type~$\D$ subsumes the type of the value case
  and constraints $\D_i \le \D$ stating the same for all the operation cases;
\item
  constraints $A_i \le E^{\rgn*_i}$ stating that the type~$A_i$ of the instance expression~$e_i$ is subsumed by the effect type~$E^{\rgn*_i}$ for some fresh $\rgn*_i$;
\item
  constraints $\rgn_\text{in}^\op \le \rgn_\text{out}^\op \cup \uniq{\op = \op_i}{\rgn*_i}$ for each $\op \in \ops$ ---
  the outgoing dirt must be big enough to cover all operations in the incoming dirt
  that are not surely handled by one of the operation cases; and
\item
  a constraint $\drt_\text{in} \le \drt_\text{out}$ --- any operation that is not listed in a handler cannot be handled,
  so must appear in the outgoing dirt as well.
\end{itemize}
The set~$F$ of all fresh parameters gathers all fresh parameters mentioned above and equals
\[
  F \defeq
    F_v, (F_i)_i, (F_i')_i,
    \alpha_\text{in}, \alpha_\text{out}, \drt_\text{in}, \drt_\text{out}, (\rgn*_i)_i,
    (\rgn_\text{in}^\op)_{\op \in \ops}, (\rgn_\text{out}^\op)_{\op \in \ops}
\]

Looking at the presented inference rules,
we see that there is exactly one rule that applies to each language construct,
so we can assign a unique type and set of constraints to each term (up to a renaming of fresh parameters).
This allows us to turn the rules into a recursive function,
which computes exactly the information about all the types we can assign to a given term:

\begin{thm}[Soundness \& completeness]
\label{thm:completeness}
Let $\ctx$ be a closed context.
\begin{itemize}
\item
  If we have $\ctx \ent e \T A'$ for some $A'$,
  then $\ctx \ent[F] e \T A \while \cstr$ holds and
  \[
    \types{\cstr} = \set{ A'' \mid (\ctx \ent e \T A'') }
  \]
\item
  We have $\ctx \ent[F] c \T \C \while \cstr$
  if and only if $\ctx \ent c \T \C'$ holds for some $\C'$.
  In this case
  \[
    \types[\C]{\cstr} = \set{ \C'' \mid (\ctx \ent c \T \C'') }
  \]
\end{itemize}
\end{thm}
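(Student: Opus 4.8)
The plan is to split the statement into a \emph{soundness} direction --- every closed type in $\types{\cstr}$ is assignable in the effect system --- and a \emph{completeness} direction --- every assignable type lies in $\types{\cstr}$, and inference succeeds whenever a typing exists --- and to prove each by its own induction. Since the inference judgements carry a polymorphic context $\pctx$ and an arbitrary context $\ctx$, whereas the statement fixes $\ctx$ closed and takes $\pctx$ empty, both inductions must first be generalised: I would allow $\ctx$ to be arbitrary and interpret it through a closed substitution $\sol$, and I would equip each entry $x_j \T \fra{F_j} A_j \while \cstr_j$ of $\pctx$ with the expression $e_j$ that produced it in \rulename{Cstr-LetVal}, writing $\widehat{(-)}$ for the operation that inlines every $x_j$ by $e_j$. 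The statement to be proved is then the instance with $\ctx$ closed, $\pctx = \emptyset$, and $\widehat{(-)}$ the identity.

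For soundness I would prove, by induction on the inference derivation (mutually for expressions and computations): if every entry of $\pctx$ satisfies the one-sided condition $\types[A_j]{\cstr_j} \subseteq \set{A' \mid \sol(\ctx) \ent \widehat{e_j} \T A'}$, and $\ctx; \pctx \ent[F] e \T A \while \cstr$ with $\sol \models \cstr$, then $\sol(\ctx) \ent \widehat{e} \T \sol(A)$. Because $\sol$ distributes over all type formers, applying it to a generated constraint such as $A_1 \le E^{\rgn*}$, $\Drt_1 \le \drt$, or $\rgn* \le \rgn$ turns it into exactly the side condition of the matching effect-system rule, modulo one use of \rulename{SubExpr}/\rulename{SubComp}; the only case needing the hypothesis on $\pctx$ is \rulename{Cstr-PolyVar} (a solution of $\cstr$ restricts, along the fresh renaming, to a solution of the copy of $\cstr_j$, hence selects a type of $\widehat{e_j}$), while \rulename{Cstr-LetVal} goes through precisely because the new entry inherits that one-sided condition from the induction hypothesis applied to $e$ over all solutions of $\cstr_1$. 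Upward-closure of the effect-system judgement under \rulename{SubExpr}/\rulename{SubComp} then upgrades $\sol(A)$ to every supertype, giving $\types{\cstr} \subseteq \set{A' \mid \sol(\ctx) \ent \widehat{e} \T A'}$.

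For completeness I would induct on the \emph{term} rather than on a derivation, so that in the \rulename{Cstr-LetVal} case the induction hypothesis for the defining expression $e$ is available in full (both inclusions), which is exactly what is needed to see that the scheme it is given captures all its typings, i.e.\ that the extended $\pctx$ still satisfies the dual one-sided condition $\types[A_j]{\cstr_j} \supseteq \set{A' \mid \sol(\ctx) \ent \widehat{e_j} \T A'}$ required to recurse into the body. Within each case I would peel any trailing \rulename{SubExpr}/\rulename{SubComp} off the relevant typing transparently (they change neither the term nor, hence, the inference run, and $\types{\cstr}$ is upward-closed), and for the constructor-matching rule assemble the solutions handed back by the hypotheses on the immediate subterms: their fresh parameter blocks are disjoint, so their union is well defined, and the genuinely fresh parameters introduced by rules like \rulename{Cstr-App}, \rulename{Cstr-IfThenElse}, \rulename{Cstr-Let} or \rulename{Cstr-With} are set to the value types and dirts read off the target type. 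The let-polymorphic gluing --- combining the solutions coming from the several occurrences of $e$ onto the several fresh copies of $\cstr_1$ spawned by \rulename{Cstr-PolyVar} inside the body --- is the one delicate manoeuvre, and it is exactly where the freshness discipline recorded in $F$ is used; this, together with keeping consistent the bookkeeping of which operation symbols each $\drt^\ops$ tracks, is what I expect to be the main obstacle in the completeness half.

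The heaviest routine case in both halves is the handler, \rulename{Cstr-Hand} against \rulename{Hand}: one must match the five families of generated constraints (the inherited $\cstr_v, \cstr_i, \cstr_i'$; the outgoing-type subsumptions $\D_v \le \D$ and $\D_i \le \D$; the instance-type constraints $A_i \le E^{\rgn*_i}$; the covering constraints $\rgn_\text{in}^\op \le \rgn_\text{out}^\op \cup \uniq{\op = \op_i}{\rgn*_i}$; and $\drt_\text{in} \le \drt_\text{out}$) with the premises $\prms_i$ and $\prms_\Drt$ of \rulename{Hand}. The only non-mechanical point there is checking that, under a closed substitution, the covering constraints together with $\drt_\text{in} \le \drt_\text{out}$ are equivalent to $\fra{\hash{\inst}{\op} \in \Drt} (\inst \in \varuniq{\op = \op_i}{\Rgn_i}) \lor (\hash{\inst}{\op} \in \Drt')$, using that $\sol$ sends inhabited region parameters to non-empty regions and that the singleton union $\cupdot$ commutes with $\sol$ on singleton images. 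Everything else --- the ground, arithmetic and \rulename{Cstr-Absurd} rules, \rulename{Cstr-Op}, \rulename{Cstr-LetRec}, \rulename{Cstr-Fun}, \rulename{Cstr-Inst} --- is verification of the pattern already described, and I would relegate it, as the paper does, to the appendix.
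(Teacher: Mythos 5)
Your proposal matches the paper's own proof in all essentials: it splits the theorem into a soundness proposition (proved by induction on the inference derivation, with the polymorphic context $\pctx$ interpreted by inlining the defining expressions $e_j$ and a side condition on each entry) and a completeness proposition generalised over $\pctx$ and a closed substitution, with the same key cases (\rulename{Cstr-Hand} versus \rulename{Hand}, \rulename{Cstr-Op}, and the \rulename{Cstr-LetVal}/\rulename{Cstr-PolyVar} bookkeeping) carrying the real content. The only organisational difference is that the paper proves completeness by induction on the typing derivation and isolates your let-val manoeuvre as a separate lemma (an inference judgement for $c[e/x]$ yields one for $c$ under $\pctx$ extended with the scheme of $e$, proved by induction on the term together with weakening and exchange for $\pctx$), whereas you fold that content into a single term-level induction and instead peel off trailing subsumptions by hand; both routes are sound.
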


\begin{rem}
\label{rem:glitch}
We limit the parameter and result types in the signature~$\sig$ to basic types
because $\sig$ is shared between typing judgements, which feature concrete regions and dirt,
and inference judgements, which represent regions and dirt exclusively with parameters.
There are three ways of reconciling this conflict:
\begin{itemize}
\item
  Extend the language of constraints with concrete upper bounds of the form $\rgn \le \Rgn$ and $\drt \le \emptyset$.
  Then we may, say, replace any occurrence of $A^\op = E^{\set{\inst_1, \inst_2}}$ in inference judgements
  with $E^{\rgn*}$ for some fresh $\rgn*$,
  and add constraints $\inst_1 \in \rgn*, \inst_2 \in \rgn*, \rgn* \le \set{\inst_1, \inst_2}$,
  or replace $B^\op = \unitty \to[\hash{\inst}{\op}] \unitty$
  with a suitably fresh $\unitty \to[\op \T \rgn \mid \drt] \unitty$ and
  constraints $\inst \in \rgn, \rgn \le \set{\inst}, \drt \le \emptyset$.
\item
  Extend monomorphic types with \emph{wild card} regions and dirt.
  For example, the type $\kord{exception}^\wild$ would capture any exception,
  no matter which concrete region it comes from.
  Similarly, the dirt $\hash{\wild}{\kord{raise}}$ would mean that a computation raises some exception,
  though we do not know which one,
  while the dirt $\wild$ would mean that any operation may get called.

  This solution agrees with practice~\cite{bauer2012programming},
  where most of types that appear in the signature are already basic,
  and the only two deviations so far are cooperative multithreading and delimited continuations,
  which both take functions as parameters.
  However, in both cases, we are not interested in imposing any limits on this dirt,
  so a wild card dirt would fit our goal.

  To add wild card regions and dirt to core \Eff, we need to:
  (1) add subtyping rules such as $R \subseteq \wild$,
  (2) extend \rulename{Op} with a condition that if $e_1 \T E^\wild$ then $\hash{\wild}{\op} \in \Drt$,
  and (3) adapt the rule \rulename{Hand} to ensure that any wild card dirt cannot be handled.
\item
  Each of the above approaches has its advantages,
  and in addition, the two are compatible,
  so one may consider both in a practical implementation.
  However,
  the first approach leads a more powerful \emph{prescriptive} effect system
  which is well beyond the scope of this paper
  (discussed more in the Conclusion),
  while the second one is routine but messy.
  Thus, we opt for the simplest option:
  prohibit any types that include regions and dirt from appearing in $\sig$.
\end{itemize}
\end{rem}

\section{Unifying constraints}
\label{sec:unifying}

Unfortunately, unlike in ML,
subtyping prevents us from computing a principal type from a given set of constraints~\cite{pottier1998type}.
For example (ignoring dirt for a moment), if we just drop the constraint in the type
\[
  (\alpha \to \beta) \to (\alpha \to \beta) \while \beta \le \alpha
\]
we get a type that captures too many closed types.
On the other hand, the more restricted parametric type $(\gamma \to \gamma) \to (\gamma \to \gamma)$ is
too strict because it fails to capture the type
\[
  (E^{\set{\inst, \inst'}} \to E^{\set{\inst}}) \to (E^{\set{\inst, \inst'}} \to E^{\set{\inst}})
\]
or any of its subtypes (otherwise, the subtyping rules would imply $E^{\set{\inst, \inst'}} \le \gamma \le E^{\set{\inst}}$).

Instead, the best we can do is to simplify the constraints as much as possible.
First, we are going to reduce the constraints down to a more convenient and basic form.
Constraints in this form always admit a solution,
so the reduction also detects any unsolvable constraints.

\begin{defi}
A set of constraints~$\cstr$ is \emph{unified},
if all constraints are \emph{decomposed} down to ones between parameters
and the set is \emph{closed} under logical implication.
In particular, $\cstr$ may contain only constraints of the form
\[
  \alpha \le \alpha', \qquad
  \rgn \le \rgn' \cup \uniq{i}{\rgn*_i}, \qquad
  \inst \in \rgn \cup \uniq{i}{\rgn*_i}, \qquad
  \drt \le \drt',
\]
and the following closure properties must hold:
\begin{itemize}
\item
  if $(\alpha_1 \le \alpha_2) \in \cstr$ and $(\alpha_2 \le \alpha_3) \in \cstr$,
  then $(\alpha_1 \le \alpha_3) \in \cstr$;
\item
  if $(\rgn_1 \le \rgn_2 \cup \uniq{i \in I} \rgn*_i) \in \cstr$
  and $(\rgn_2 \le \rgn_3 \cup \uniq{i \in J} \rgn*_i) \in \cstr$,
  then $(\rgn_1 \le \rgn_3 \cup \uniq{i \in I \cup J} \rgn*_i) \in \cstr$;
\item
  if $(\inst \in \rgn_1 \cup \uniq{i \in I} \rgn*_i) \in \cstr$
  and $(\rgn_1 \le \rgn_2 \cup \uniq{i \in J} \rgn*_i) \in \cstr$,
  then $(\inst \in \rgn_2 \cup \uniq{i \in I \cup J} \rgn*_i) \in \cstr$;
\item
  if $(\drt_1 \le \drt_2) \in \cstr$ and $(\drt_2 \le \drt_3) \in \cstr$,
  then $(\drt_1 \le \drt_3) \in \cstr$.
\end{itemize}
To avoid circular types, we need to track all type parameters of the same shape.
So, we assume that $\cstr$ is equipped with an equivalence relation~$\approx_\cstr$
on type parameters, such that $(\alpha \le \alpha') \in \cstr$ implies $\alpha \approx_\cstr \alpha'$.
For solutions of a unified set of constraints~$\cstr$, we consider only such $\sol \models \cstr$,
for which we also have $\sol(\alpha) \approx \sol(\alpha')$ for any $\alpha \approx_\cstr \alpha'$.
\end{defi}

\begin{lem}
\label{lem:unify}
If a set of constraints~$\cstr$ is unified, there exists a solution~$\sol \models \cstr$.
\end{lem}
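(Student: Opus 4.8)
The plan is to exhibit an explicit solution $\sol$, building it separately on the three kinds of parameters; this works because in a unified set no constraint relates parameters of different kinds, apart from instance-membership constraints, which pair a concrete instance with a region parameter. For type parameters I would simply take $\sol(\alpha) \defeq \natty$ for every $\alpha$: the only type constraints left in a unified set have the form $\alpha \le \alpha'$, and $\natty \le \natty$ holds, so all of them — and also the side condition $\sol(\alpha) \approx \sol(\alpha')$ demanded for solutions of unified sets — are trivially met. For dirt parameters I would take $\sol(\drt^\ops) \defeq \emptyset$ for every $\drt^\ops$: this is a legal image, since $\emptyset$ contains no operation symbol and in particular none from $\ops$, and every remaining constraint $\drt \le \drt'$ becomes $\emptyset \subseteq \emptyset$.

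The real content is in the regions. Fix an arbitrary instance $\inst_0$ (in the degenerate case where the language has no instances, no region parameter can occur and there is nothing to prove), and for \emph{every} region parameter $\rgn$, inhabited or not, set
\[
  \sol(\rgn) \defeq \set{\inst_0} \cup \set{\inst \mid (\inst \in \rgn \cup \uniq{i}{\rgn*_i}) \in \cstr \text{ for some handled } \rgn*_i}.
\]
Only finitely many instances occur in $\cstr$, so this is a finite set, and it is non-empty, so it is an admissible image for ordinary and for inhabited region parameters alike. Every instance-membership constraint $(\inst \in \rgn \cup \uniq{i}{\rgn*_i}) \in \cstr$ is then satisfied immediately, because $\inst \in \sol(\rgn)$ by construction. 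For a region inequality $(\rgn \le \rgn' \cup \uniq{i \in K}{\rgn*_i}) \in \cstr$ I would instead prove the stronger inclusion $\sol(\rgn) \subseteq \sol(\rgn')$: the padding element $\inst_0$ lies in $\sol(\rgn')$ by definition, and for any other $\inst \in \sol(\rgn)$, witnessed by some $(\inst \in \rgn \cup \uniq{i \in I}{\rgn*_i}) \in \cstr$, the closure rule of a unified set that chains an instance-membership through a region inequality gives $(\inst \in \rgn' \cup \uniq{i \in I \cup K}{\rgn*_i}) \in \cstr$, whence $\inst \in \sol(\rgn')$.

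The step that needs care is the singleton-union operator $\uniq{}{}$ in the region constraints: a handled parameter $\rgn*_i$ contributes to a right-hand side only when $\sol(\rgn*_i)$ turns out to be a singleton, so a naive choice of images could leave a constraint unmet. The $\set{\inst_0}$-padding is exactly what neutralizes this — by forcing every region image to contain $\inst_0$, we make all the work fall on the covering part, so each region inequality holds through the honest inclusion $\sol(\rgn) \subseteq \sol(\rgn')$ and the $\uniq{}{}$ on the right can simply be discarded. Padding does not jeopardize the other constraints, since none of them bounds a region from above (by the restriction of Remark~\ref{rem:glitch}, concrete upper bounds are not part of the system); and the one nontrivial ingredient of the argument, used in the region-inequality case, is precisely the closure of $\cstr$ under logical implication, which is part of the hypothesis that $\cstr$ is unified.
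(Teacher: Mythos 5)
Your proposal is correct and follows essentially the same route as the paper's proof: trivial images for type and dirt parameters, region images built from the instance lower bounds recorded in $\cstr$, and closure under logical implication to discharge the region inequalities via the inclusion $\sol(\rgn) \subseteq \sol(\rgn')$, which lets the singleton unions be ignored. Your only deviation is the $\set{\inst_0}$-padding, a harmless refinement that explicitly secures the non-emptiness required of images of inhabited region parameters~$\rgn*$ — a point the paper's own construction leaves implicit.
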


In Figure~\ref{fig:unify}, we define an algorithm $\unify$,
which is similar to Robinson's unification algorithm~\cite[p.~327]{pierce2002types},
except that it returns a set of unified constraints in addition to the unifying substitution.
The algorithm is defined recursively, passing around a triple $(\sol; \cstr; \queue)$, where
$\sol$ is a unifying substitution of replaced parameters (initially taken to be the identity),
$\cstr$ is a set of already unified constraints (initially $\cstr$ is empty while $\approx_\cstr$ is the identity relation on all type parameters in $\queue$), and
the queue $\queue$ is a set of constraints yet to be processed.

The unifying substitution is not a closed substitution (Definition~\ref{defi:closed-substitution}),
which maps type parameters to closed types, etc., but one that maps them to parametric ones. 
\begin{defi}
A \emph{substitution}~$\sol$ is a mapping that maps:
  each type parameter~$\alpha$ to a type~$\sol(\alpha)$,
  each region parameter~$\rgn$ to a region parameter~$\sol(\rgn)$,
  each inhabited region parameter~$\rgn*$ to an inhabited region parameter~$\sol(\rgn*)$,
  and each dirt parameter~$\drt^\ops$ to a dirt~$\sol(\drt^\ops)$
  of the form $\set{(\op \T \rgn_\op)_{\op \in \ops'} \mid \drt'^{\ops \cup \ops'}}$,
  where $\ops'$ is disjoint from $\ops$.
This ensures that the dirt $\sol(\drt^\ops)$
does not capture any operations from $\ops$.
\end{defi}
We write substitutions by listing a set of all the non-idempotent rules.
We can extend a substitution from parameters to other constructs just like we extended closed substitutions.
This allows us to compose substitutions and additionally,
to compose a closed substitution~$\sol$ with a substitution~$\sol'$
and obtain a closed substitution~$\sol \circ \sol'$.

\begin{figure}
\small
\hrulefill
\begin{align*}
  &\unify(\sol;\ \cstr;\ \cdot) =
    (\sol,\ \cstr) \\
  &\unify(\sol;\ \cstr;\ A \le A', \queue) = \\
  &\quad \kpre{match} A \le A' \kpost{with} \\
  &\quad \case A \le A \mapsto
    \unify(\sol;\ \cstr;\ \queue) \\
  &\quad \case \alpha \le \alpha' \mapsto
    \unify(\sol;\ \cstr \uplus (\alpha \le \alpha');\ \queue) \\
  &\quad \case \alpha \le A \mapsto
    \kpre{if} \exs{\alpha' \approx_\cstr \alpha} \alpha' \in \kord{free}(A) \kop{then} \kord{failure} \kpost{else} \\
  &\qquad \letin{\sol' = \set{\alpha' \mapsto \kord{refresh}(A) \mid \alpha' \approx_\cstr \alpha}} \\
  &\qquad \letin{\cstr' = \set{(\alpha' \le \alpha'') \in \cstr \mid \alpha \approx_\cstr \alpha' \approx_\cstr \alpha''}} \\
  &\qquad \unify(
    \sol' \circ \sol;\
    \cstr - \cstr';\ 
    \sol'(\queue), \sol'(\alpha) \le A, \sol'(\cstr')) \\
  &\quad \case A \le \alpha \mapsto
    \kpre{if} \exs{\alpha' \approx_\cstr \alpha} \alpha' \in \kord{free}(A) \kop{then} \kord{failure} \kpost{else} \\
  &\qquad \letin{\sol' = \set{\alpha' \mapsto \kord{refresh}(A) \mid \alpha' \approx_\cstr \alpha}} \\
  &\qquad \letin{\cstr' = \set{(\alpha' \le \alpha'') \in \cstr \mid \alpha \approx_\cstr \alpha' \approx_\cstr \alpha''}} \\
  &\qquad \unify(
    \sol' \circ \sol;\
    \cstr - \cstr';\ 
    \sol'(\queue), A \le \sol'(\alpha), \sol'(\cstr')) \\
  &\quad \case E^{\rgn*} \le E^{\rgn*'} \mapsto
    \unify(\sol;\ \cstr \uplus \rgn* \le \rgn*';\ \queue) \\
  &\quad \case (A \to \C) \le (A' \to \C') \mapsto
    \unify(\sol;\ \cstr;\ \queue, A' \le A, \C \le \C') \\
  &\quad \case (\C \hto \D) \le (\C' \hto \D') \mapsto
    \unify(\sol;\ \cstr;\ \queue, \C' \le \C, \D \le \D') \\
  &\quad \case \kord{otherwise} \mapsto \kord{failure} \\
  &\unify(\sol;\ \cstr;\ A \E \Drt \le A' \E \Drt', \queue) =
    \unify(\sol;\ \cstr;\ \queue, A \le A', \Drt \le \Drt') \\
  &\unify(\sol;\ \cstr;\ \rgn \le \rgn' \cup \uniq{i}{\rgn*_i}, \queue) =
    \unify(\sol;\ \cstr \uplus (\rgn \le \rgn' \cup \uniq{i}{\rgn*_i});\ \queue) \\
  &\unify(\sol;\ \cstr;\ \inst \in \rgn \cup \uniq{i}{\rgn*_i}, \queue) =
    \unify(\sol;\ \cstr \uplus (\inst \in \rgn \cup \uniq{i}{\rgn*_i});\ \queue) \\
  &\unify(\sol;\ \cstr;\ \set{(\op \T \rgn_\op)_{\op \in \ops} \mid \drt_1^\ops} \le \set{(\op \T \rgn_\op')_{\op \in \ops'} \mid \drt_2^{\ops'}}, \queue) = \\
  &\quad\kpre{if} \ops = \ops' \kop{then} \\
  &\qquad \unify(\sol;\ \cstr \uplus \set{\rgn_\op \le \rgn_\op' \mid \op \in \ops} \uplus (\drt_1 \le \drt_2),\ \queue) \\
  &\quad\kord{else} \\
  &\qquad \letin{\sol_1 = \kpre{if} \ops' \subset \ops \kop{then} \set{} \kop{else} \set{\drt_1 \mapsto \set{(\op \T \rgn_\op)_{\op \in \ops' - \ops} \mid \drt_1'^{\ops \cup \ops'}}}} \\
  &\qquad \letin{\sol_2 = \kpre{if} \ops \subset \ops' \kop{then} \set{} \kop{else} \set{\drt_2 \mapsto \set{(\op \T \rgn_\op')_{\op \in \ops - \ops'} \mid \drt_2'^{\ops \cup \ops'}}}} \\
  &\qquad \letin{\cstr' =
    \set{(\drt \le \drt') \in \cstr \mid \drt \sim_\cstr \drt' \sim_\cstr \drt_1} \cup
    \set{(\drt \le \drt') \in \cstr \mid \drt \sim_\cstr \drt' \sim_\cstr \drt_2}
  } \\
  &\qquad \unify(\sol_1 \circ \sol_2 \circ \sol;\ (\cstr - \cstr'),\ \sol'(\queue) \cup \sol'(\cstr'))
\end{align*}
\caption{Definition of the constraint unification algorithm~$\unify$.}
\label{fig:unify}
\hrulefill
\end{figure}

The unification works as follows.
Take, say, a constraint $\alpha_1 \le (\alpha_2 \to[\drt_1] \boolty)$.
Since the rules of structural subtyping admit only comparison between types of the same shape,
the only way to satisfy this constraint is to set $\alpha_1$ to be some function type into $\boolty$.
So, we decompose the constraint by replacing $\alpha_1$ with some fresh $\alpha_3 \to[\drt_2] \boolty$
and adding constraints $\alpha_2 \le \alpha_3$ and $\drt_2 \le \drt_1$

We add these constraints using the closure operator $\uplus$, defined in Figure~\ref{fig:closure},
which extends $\cstr$ with a given constraint and all constraints it implies.
This is done with a simplified version of an algorithm for computing the transitive closure of a graph~\cite{pottier1998type},
except that for region parameters, we also have to take instances and handled regions into account.

\begin{figure}[h]
\small
\hrulefill
\begin{align*}
  \cstr \uplus (\alpha_1 \le \alpha_2) =
    \cstr
    &\cup \set{\alpha_1' \le \alpha_2' \mid (\alpha_1' \le \alpha_1) \in \cstr, (\alpha_2 \le \alpha_2') \in \cstr} \\
  \cstr \uplus (\rgn_1 \le \rgn_2 \cup \varuniq{i \in I}{\rgn*_i}) =
    \cstr
    &\cup \set{\rgn_1' \le \rgn_2' \cup \varuniq{\hspace{-10em}i \in I \cup J \cup K\hspace{-10em}}{\rgn*_i} \mid
      (\rgn_1' \le \rgn_1 \cup \varuniq{i \in J}{\rgn*_i}) \in \cstr,
      (\rgn_2 \le \rgn_2' \cup \varuniq{i \in K}{\rgn*_i}) \in \cstr} \\
    &\cup \set{\inst \in \rgn_2' \cup \varuniq{\hspace{-10em}i \in I \cup J \cup K\hspace{-10em}}{\rgn*_i} \mid
      (\inst \le \rgn_1 \cup \varuniq{i \in J}{\rgn*_i}) \in \cstr,
      (\rgn_2 \le \rgn_2' \cup \varuniq{i \in K}{\rgn*_i}) \in \cstr} \\
  \cstr \uplus (\inst \in \rgn \cup \varuniq{i \in I}{\rgn*_i}) =
    \cstr
    &\cup \set{\inst \in \rgn' \cup \varuniq{\hspace{-10em}i \in I \cup J\hspace{-10em}}{\rgn*_i} \mid
      (\rgn \le \rgn' \cup \varuniq{i \in J}{\rgn*_i}) \in \cstr} \\
  \cstr \uplus (\drt_1 \le \drt_2) =
    \cstr
      &\cup \set{\drt_1' \le \drt_2' \mid (\drt_1' \le \drt_1) \in \cstr, (\drt_2 \le \drt_2') \in \cstr}
\end{align*}
\caption{Definition of the closure operator~$\uplus$.
In all cases, we assume that $\cstr$ implicitly contains all reflexive constraints,
so that, for example in the first case,
the added set also includes $\alpha_1 \le \alpha_2$ and all constraints of the form $\alpha_1 \le \alpha_2'$ and $\alpha_1' \le \alpha_2$.}
\label{fig:closure}
\hrulefill
\end{figure}

Before decomposition, we need to perform an \emph{occur check}
in order to prevent ill-formed types and ensure termination.
This check is slightly more involved than usually~\cite[p.~327]{pierce2002types}.
Say that we want to unify the set of constraints (let us again ignore dirt):
\[
  \set{(\alpha_1 \to \alpha_2) \le \alpha_3, \alpha_4 \le \alpha_1, \alpha_4 \le \alpha_3}
\]
We decompose $\alpha_3$ as some $\alpha_5 \to \alpha_6$ and end up with constraints
\[
  \set{\alpha_5 \le \alpha_1, \alpha_2 \le \alpha_6, \alpha_4 \le \alpha_1, \alpha_4 \le (\alpha_5 \to \alpha_6)}
\]
Now, we need to decompose $\alpha_4$, then $\alpha_1$, then $\alpha_5$ and the whole thing repeats.
So we need to check not only that $\alpha$ is not in $\kord{free}(A)$,
the set of all parameters that occur in $A$,
but also that no other parameters in its skeleton are.

When decomposing $\alpha \le A$,
we also replace each $\alpha' \approx_\cstr \alpha$ with a (distinct) fresh copy of $A$.
We do this by repeatedly calling a function $\kord{refresh}(A)$ that on each call returns a type of the same form as $A$,
except with all its type, region, and dirt parameters replaced with fresh ones.
Because of this expansion, any constraints~$\cstr' \subseteq \cstr$
that mention parameters from the skeleton of $\alpha$ are no longer decomposed.
So we need to take them out of the otherwise unified $\cstr$ and put them back into the queue~$\queue$.

On the remaining unified set of constraints $\cstr - \cstr'$,
we define $\approx_{\cstr - \cstr'}$ to be as before,
except that we remove the whole skeleton of $\alpha$,
and add the freshly generated parameters into the skeletons of matching parameters in $A$.
For example, if the skeletons of $\approx_\cstr$ were
\[
  \set{\alpha_1, \alpha_2}, \set{\alpha_3, \alpha_4, \alpha_5}, \set{\alpha_6}
\]
and we decompose the constraint $\alpha_1 \le \alpha_3 \to[\drt_1] \alpha_6$,
we replace $\alpha_1$ by $\alpha_7 \to[\drt_2] \alpha_8$ and $\alpha_2$ by $\alpha_9 \to[\drt_3] \alpha_{10}$,
and the skeletons of $\approx_{\cstr - \cstr'}$ are
\[
  \set{\alpha_3, \alpha_4, \alpha_5, \alpha_7, \alpha_9}, \set{\alpha_6, \alpha_8, \alpha_{10}}
\]
When we unify a constraint $\alpha \le \alpha'$, we merge the skeletons of $\alpha$ and $\alpha'$.

For dirt constraints,
we similarly expand both sides so to list the same operations.
For example, if we have a constraint
$\set{\op \T \rgn_1 \mid \drt_1^{\op}} \le \set{\op' \T \rgn_2 \mid \drt_2^{\op'}}$,
we replace $\drt_1$ with some fresh $\set{\op' \T \rgn_3 \mid \drt_3^{\op, \op'}}$
and $\drt_2$ with $\set{\op \T \rgn_4 \mid \drt_4^{\op, \op'}}$
and add constraints $\rgn_1 \le \rgn_4$, $\rgn_3 \le \rgn_2$, and $\drt_3 \le \drt_4$.
We define $\sim_\cstr$ to be the equivalence relation on dirt parameters generated by $(\drt \le \drt') \in \cstr$,
so that we can capture all related dirt parameters and expand them at the same time.

An example of a full run of the unification algorithm is given in Figure~\ref{fig:unification}.

\begin{figure}[h]
\small
\hrulefill
\newcommand{\bigset}[1]{\Bigl\{#1\Big\}}
\newcommand{\bigunify}[4][\quad]{\unify\Big(\bigset{#2};#1\bigset{#3};\quad\bigset{#4}\Big)}
\begin{align*}
  \bigunify
    {&}
    {}
    {(\alpha_1 \to[\drt_1] \natty) \le \alpha_2,\ \set{\op \T \rgn*_1 \mid \drt_2} \le \drt_1)}
  = \\
  \bigunify
    {&\alpha_2 \mapsto (\alpha_3 \to[\drt_3] \natty)}
    {}
    {(\alpha_1 \to[\drt_1] \natty) \le (\alpha_3 \to[\drt_3] \natty),\ \set{\op \T \rgn*_1 \mid \drt_2} \le \drt_1}
  = \\
  \bigunify
    {&\alpha_2 \mapsto (\alpha_3 \to[\drt_3] \natty)}
    {}
    {\alpha_3 \le \alpha_1,\ (\natty \E \drt_1) \le (\natty \E \drt_3),\ \set{\op \T \rgn*_1 \mid \drt_2} \le \drt_1}
  = \\
  \bigunify
    {&\alpha_2 \mapsto (\alpha_3 \to[\drt_3] \natty)}
    {\alpha_3 \le \alpha_1}
    {(\natty \E \drt_1) \le (\natty \E \drt_3),\ \set{\op \T \rgn*_1 \mid \drt_2} \le \drt_1}
  = \\
  \bigunify
    {&\alpha_2 \mapsto (\alpha_3 \to[\drt_3] \natty)}
    {\alpha_3 \le \alpha_1}
    {\natty \le \natty,\ \drt_1 \le \drt_3,\ \set{\op \T \rgn*_1 \mid \drt_2} \le \drt_1}
  = \\
  \bigunify
    {&\alpha_2 \mapsto (\alpha_3 \to[\drt_3] \natty)}
    {\alpha_3 \le \alpha_1}
    {\drt_1 \le \drt_3,\ \set{\op \T \rgn*_1 \mid \drt_2} \le \drt_1}
  = \\
  \bigunify
    {&\alpha_2 \mapsto (\alpha_3 \to[\drt_3] \natty)}
    {\alpha_3 \le \alpha_1,\ \drt_1 \le \drt_3}
    {\set{\op \T \rgn*_1 \mid \drt_2} \le \drt_1}
  = \\
  \bigunify[\\]
    {&\alpha_2 \mapsto (\alpha_3 \to[\op \T \rgn*_3 \mid \drt_5] \natty),\ 
      \drt_1 \mapsto \set{\op \T \rgn*_2 \mid \drt_4},\ 
      \drt_3 \mapsto \set{\op \T \rgn*_3 \mid \drt_5}}
    {&\alpha_3 \le \alpha_1}
    {\set{\op \T \rgn*_1 \mid \drt_2} \le \set{\op \T \rgn*_2 \mid \drt_4},\ 
      \set{\op \T \rgn*_2 \mid \drt_4} \le \set{\op \T \rgn*_3 \mid \drt_5}}
  = \\
  \bigunify[\\]
    {&\alpha_2 \mapsto (\alpha_3 \to[\op \T \rgn*_3 \mid \drt_5] \natty),\ 
      \drt_1 \mapsto \set{\op \T \rgn*_2 \mid \drt_4},\ 
      \drt_3 \mapsto \set{\op \T \rgn*_3 \mid \drt_5}}
    {&\alpha_3 \le \alpha_1,\ \rgn*_1 \le \rgn*_2,\ \drt_2 \le \drt_4}
    {\set{\op \T \rgn*_2 \mid \drt_4} \le \set{\op \T \rgn*_3 \mid \drt_5}}
  = \\
  \bigunify[\\]
    {&\alpha_2 \mapsto (\alpha_3 \to[\op \T \rgn*_3 \mid \drt_5] \natty),\ 
      \drt_1 \mapsto \set{\op \T \rgn*_2 \mid \drt_4},\ 
      \drt_3 \mapsto \set{\op \T \rgn*_3 \mid \drt_5}}
    {&\alpha_3 \le \alpha_1,\ \rgn*_1 \le \rgn*_2 \le \rgn*_3,\ \drt_2 \le \drt_4 \le \drt_5}
    {}
  = \\
  \Bigl(\bigset{
    &\alpha_2 \mapsto (\alpha_3 \to[\op \T \rgn*_3 \mid \drt_5] \natty),\ 
    \drt_1 \mapsto \set{\op \T \rgn*_2 \mid \drt_4},\ 
    \drt_3 \mapsto \set{\op \T \rgn*_3 \mid \drt_5}}, \\
   \bigset{
    &\alpha_3 \le \alpha_1,\ \rgn*_1 \le \rgn*_2 \le \rgn*_3,\ \drt_2 \le \drt_4 \le \drt_5
    }\Bigr)
\end{align*}
\caption{The unification of the set of constraints $\set{(\alpha_1 \to[\drt_1] \natty) \le \alpha_2, \set{\op \T \dot{\rho}_2 \mid \drt_2} \le \drt_1}$.
We display constraints in a coalesced form.
For example, we write $\dot\rho_1 \le \dot\rho_2 \le \dot\rho_3$ instead of $\dot\rho_1 \le \dot\rho_2$, $\dot\rho_2 \le \dot\rho_3$ and $\dot\rho_1 \le \dot\rho_3$.}
\label{fig:unification}
\hrulefill
\end{figure}

\begin{prop}
\label{prop:unify}
For any set of constraints~$\cstr$,
the algorithm $\unify(\emptyset; \emptyset; \cstr)$ always halts.
\begin{itemize}
\item
  If it halts with a failure, then $\cstr$ has no solutions.
\item
  If it halts returning $(\sol', \cstr')$, then $\cstr$ has a solution.
  Furthermore, solutions $\sol \models \cstr$ are exactly all
  the ones of the form $\sol = \sol'' \circ \sol'$,
  where $\sol'' \models \cstr'$.
\end{itemize}
\end{prop}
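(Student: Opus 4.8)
\emph{Plan.} I would prove the three assertions in two stages. First I establish \emph{termination} of $\unify$ via a well-founded measure on its state. Once termination is known, the remaining two assertions are consequences of a single invariant, proved by induction on the (now well-founded) recursion, which characterises the solutions of the input constraint set in terms of the current state; the failure claim falls out of the clauses that produce $\kord{failure}$. The proof leans only on Lemma~\ref{lem:unify} (for the final ``$\cstr$ has a solution'') together with two routine auxiliary lemmas: a substitution lemma ($\tau \models \sol'(\kappa) \iff (\tau \circ \sol') \models \kappa$, extended to sets) and a closure lemma stating that $\cstr \uplus \kappa$ has exactly the solutions of $\cstr$ that also satisfy $\kappa$ --- every constraint added by $\uplus$ in Figure~\ref{fig:closure} being an immediate consequence of transitivity of $\le$ and $\subseteq$ and of the monotonicity of $\,\cup \uniq{i}{\rgn*_i}\,$.

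\emph{Termination.} I equip the state $(\sol;\cstr;\queue)$ with the lexicographically ordered triple $(m_1, m_2, m_3)$, where $m_1$ is the number of equivalence classes of $\approx_\cstr$ among the type parameters occurring in $\cstr \cup \queue$; $m_2 = \sum_{\drt^\ops}(N - |\ops|)$ ranges over the distinct dirt parameters of $\cstr \cup \queue$, with $N$ the (finite) number of operation symbols in play; and $m_3$ is the structural size of $\queue$. The clauses that discharge a trivial $A \le A$, push an atomic constraint into $\cstr$ through $\uplus$, merge two type-parameter classes on processing $\alpha \le \alpha'$ or a matching dirt, or break a compound type, dirty type or dirt constraint into strictly smaller pieces, all leave $m_1$ and $m_2$ unchanged (or lower $m_1$ on a merge) while strictly shrinking $m_3$. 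The dirt-expansion clause replaces some $\drt^\ops$ by a dirt parameter with a strictly larger exclusion set, lowering $m_2$ without touching $m_1$. The type-substitution clause is where the occur check earns its keep: since no parameter of the skeleton $S$ of $\alpha$ occurs in $A$, replacing every $\alpha' \approx_\cstr \alpha$ by a fresh copy of $A$ deletes the class $S$ outright, and --- by the skeleton update described just before the proposition --- the refreshed type parameters are absorbed into the already-present classes of the type parameters of $A$, so $m_1$ strictly decreases. Hence every recursive call strictly decreases $(m_1,m_2,m_3)$, and the algorithm halts.

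\emph{The invariant.} I would prove by induction on the recursion that at every call $\unify(\sol;\cstr;\queue)$ reached from $\unify(\emptyset;\emptyset;\cstr_0)$ the set $\cstr$ is unified and, for every closed substitution $\tau$ (respecting $\approx_\cstr$ and the dirt-exclusion discipline),
\[
  \tau \models \cstr_0 \iff \bigl(\exs{\tau'} \tau = \tau' \circ \sol \text{ and } \tau' \models \cstr \cup \queue\bigr).
\]
The base case is the top-level call, where $\sol$ is the identity and $\queue = \cstr_0$. Each step is checked with the substitution and closure lemmas. For the type-substitution clause the key structural fact is that, since $\cstr$ is unified, any (necessarily decomposed) type constraint it contains that mentions a parameter of the skeleton $S$ of $\alpha$ has \emph{both} sides in $S$; hence $\cstr - \cstr'$ mentions no parameter of $S$ and the refreshing substitution acts on it as the identity, so the passage of solutions across the step is essentially bookkeeping --- given $\tau \models \cstr_0$, one recovers the post-step witness by reading off, from the common shape of $\tau(\alpha')$ and $A$, the closed values to assign to the refreshed parameters (splitting a closed dirt between the explicitly listed operation symbols and the residual dirt parameter), and one checks that the reconstruction respects the updated $\approx$; the dirt-expansion clause is handled the same way with $\sim_\cstr$ in place of skeletons. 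For the failure clauses: the $\kord{otherwise}$ branch of the type match fires on a constraint $A \le A'$ with $A, A'$ of incompatible head shape (e.g.\ $\boolty \le \natty$, or $(A \to \C) \le E^{\rgn*}$), which no $\tau$ satisfies since structural subtyping relates only types of the same shape; an occur check fires when some $\alpha' \approx_\cstr \alpha$ occurs inside a non-variable $A$ with $\alpha \le A$ (or $A \le \alpha$) pending, and any solution would force $\tau(\alpha)$ and $\tau(A)$ --- hence $\tau(\alpha')$, which shares the skeleton of $\tau(\alpha)$ --- to have the same finite shape, while $\tau(\alpha')$ is also a proper subtree of $\tau(A)$, which is impossible. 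By the invariant, in both failure cases $\cstr_0$ has no solution.

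\emph{Conclusion and main obstacle.} If $\unify(\emptyset;\emptyset;\cstr_0)$ halts with $(\sol',\cstr')$, then $\queue = \emptyset$ at that point, so the invariant specialises to exactly the stated equivalence: the solutions of $\cstr_0$ are precisely the substitutions $\sol'' \circ \sol'$ with $\sol'' \models \cstr'$. Since $\cstr'$ is unified, Lemma~\ref{lem:unify} provides such a $\sol''$, whence $\cstr_0$ is solvable. I expect the genuine work to be the skeleton and $\sim_\cstr$ bookkeeping: on the termination side, verifying that $m_1$ (resp.\ $m_2$) behaves as claimed under the refresh-and-update step --- in particular that refreshed parameters join \emph{existing} classes; and on the correctness side, carefully transferring a solution across such a step in both directions, including the decomposition of $\tau(\alpha')$ along the shape of $A$ and the compatibility of the reconstructed values with the updated equivalence relations. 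Everything else is a routine, if lengthy, case analysis over the clauses of $\unify$.
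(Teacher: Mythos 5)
Your proposal is correct and follows essentially the same route as the paper: termination via a lexicographic measure whose components are the number of skeletons, a dirt-parameter potential, and the size of the queue (you merely fold the paper's separate ``type constructors in $\queue$'' and ``number of constraints in $\queue$'' counts into one structural-size component), a step-preserved invariant equating solutions of the original constraints with solutions of $\cstr \cup \queue$ composed with the accumulated substitution, and an appeal to Lemma~\ref{lem:unify} for solvability of the unified residue. Your treatment of the failure clauses and of the skeleton bookkeeping in the $\alpha \le A$ case matches (and in places elaborates on) the paper's argument.
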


\noindent Note that $\unify$ can fail for two reasons:
either we detect a cyclic constraint during occur check,
or we try to unify two incompatible types ---
no failure can happen due to unification of dirt and region constraints.
These are exactly the cases in which the Hindley-Milner algorithm fails~\cite[p.~327]{pierce2002types},
so our algorithm indeed infers the usual ML types,
except annotated with information about effects.

\begin{cor}
\label{cor:unify}
The two-way inference rules
\begin{mathpar}
  \mprset{fraction={===}}
  \inferrule[Unify-Expr]{
    \ctx \ent[F] e \T A \while \cstr
  }{
    \sol(\ctx) \ent[F] e \T \sol(A) \while \cstr'
  }

  \inferrule[Unify-Comp]{
    \ctx \ent[F] c \T \C \while \cstr
  }{
    \sol(\ctx) \ent[F] c \T \sol(\C) \while \cstr'
  }
\end{mathpar}
where $\unify(\cstr) = (\sol, \cstr')$, are sound.
\end{cor}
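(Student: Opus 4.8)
The plan is to read Corollary~\ref{cor:unify} as the statement that replacing the pair $(A,\cstr)$ produced by the inference rules with the pair $(\sol(A),\cstr')$ returned by $\unify$ does not change the family of closed types being described, and then to invoke Theorem~\ref{thm:completeness}. First I would record that, by Proposition~\ref{prop:unify}, the call $\unify(\emptyset;\emptyset;\cstr)$ always halts, so the side condition $\unify(\cstr)=(\sol,\cstr')$ is well posed whenever the rule fires; moreover $\cstr'$ is unified and hence, by Lemma~\ref{lem:unify}, solvable, so the conclusion judgment is not vacuous. The remaining content is purely semantic.

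The heart of the argument is the identity
\[
  \types[\sol(A)]{\cstr'} = \types{\cstr}, \qquad\qquad \types[\sol(\C)]{\cstr'} = \types[\C]{\cstr},
\]
which I would prove first for a closed $\ctx$ (and $\pctx$), so that $\sol(\ctx)=\ctx$. Unfolding the definition of $\types{-}{-}$, a closed type $A'$ lies in $\types[\sol(A)]{\cstr'}$ exactly when $\tau(\sol(A))\le A'$ for some $\tau\models\cstr'$; since substitutions extend to types compatibly with composition (and, as noted after the definition of a substitution, $\tau\circ\sol$ is again a closed substitution), this reads $(\tau\circ\sol)(A)\le A'$, and by the last clause of Proposition~\ref{prop:unify} the closed substitutions of this shape with $\tau\models\cstr'$ are precisely the solutions $\sigma\models\cstr$. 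Hence $A'\in\types[\sol(A)]{\cstr'}$ iff $\sigma(A)\le A'$ for some $\sigma\models\cstr$, i.e. iff $A'\in\types{\cstr}$; the dirty-type case is identical. One subtlety to flag explicitly is that for the unified set $\cstr'$ one must use the restricted notion of solution (those respecting $\approx_{\cstr'}$), but this is exactly the notion Proposition~\ref{prop:unify} is stated with, so the characterization of solutions transfers with no mismatch.

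With the identity in hand the corollary is immediate: by Theorem~\ref{thm:completeness} the right-hand sides already enumerate exactly the types and dirty types derivable for $e$ and $c$ in the closed context $\ctx=\sol(\ctx)$, so the unified judgment enumerates the same sets and both readings of the two-way rule are meaning-preserving, hence sound. For a non-closed $\ctx$ one composes with an arbitrary closing substitution and uses the (routine) stability of the inference judgments and of the displayed identity under substitution to reduce to the closed case. The main obstacle I expect is bookkeeping rather than mathematics: one has to verify that the $\sol$ returned by $\unify$ really is a substitution in the technical sense --- sending each $\drt^\ops$ to a dirt that omits the symbols in $\ops$ and each inhabited region parameter to an inhabited one --- so that $\tau\circ\sol$ is legitimately a closed substitution, and one has to agree that the $F$ in the conclusion denotes the fresh parameters obtained after the $\kord{refresh}$ steps internal to $\unify$, not the original $F$. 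Both facts are already part of what Proposition~\ref{prop:unify} and its proof establish, so nothing beyond quoting them carefully is required.
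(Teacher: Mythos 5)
Your proposal is correct and rests on exactly the same key fact as the paper's proof: the characterization in Proposition~\ref{prop:unify} of the solutions of $\cstr$ as precisely the compositions $\sol''\circ\sol$ with $\sol''\models\cstr'$, which transfers the semantic content between the two judgements in both directions. The paper packages this more directly --- it takes an arbitrary solution of either constraint set and applies the (inductive) soundness of the premise judgement to produce the corresponding typing judgement, without the detour through the sets $\types{\cstr}$, Theorem~\ref{thm:completeness}, and closing substitutions --- but the mathematical content is the same.
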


Since \rulename{Unify-Expr} and \rulename{Unify-Comp} are two-way rules,
we can be sure that no information is lost,
so we can perform unification phase not only after,
but also while gathering the constraints.
In fact, in \Eff, constraints are always kept in unified form,
so unification is performed each time we add a new constraint.
Though this strategy seems expensive, it offers many advantages:
\begin{enumerate}
\item
  Any ill-typed terms are caught as soon as possible.
  This does not influence the efficiency too much,
  but gives much more informative error messages.
\item
  In each rule, the inferred constraints consist mostly of constraints inherited from subterms.
  In \Eff, the inference algorithm uses a technique called \emph{$\lambda$-lifting}~\cite{pottier1998type}
  to ensure that subderivations share no common parameters.
  Then, if constraints $\cstr_1$ and $\cstr_2$, say, are unified,
  so is their union $\cstr_1 \cup \cstr_2$,
  and we need to perform closure only for the small number of additional constraints, specific to the current rule.
\item
  As we shall soon see, unified constraints may be garbage collected,
  drastically reducing their size and speeding up the algorithm.
\end{enumerate}

\section{Simplifying constraints}
\label{sec:simplifying}

\subsection{Garbage collection}
\label{sub:garbage-collection}

The main simplification technique we employ is
\emph{garbage collection}~\cite{pottier2001simplifying, simonet2003type, trifonov1996subtyping}.
We first recap the existing idea for type parameters,
and then extend it to dirt and region parameters so to fit into our setting.

Recall that a parametric type $A$ together with a unified set of constraints $\cstr$
captures exactly the closed types in the set $\types{\cstr}$.
Can we obtain a smaller set of constraints $\cstr'$ but keep $\types{\cstr'} = \types{\cstr}$?

Mark type parameters in $A$ as \emph{positive} or \emph{negative},
if they appear in a covariant or contravariant position, respectively.
For example, in $(\alpha_1 \to \alpha_2) \to \alpha_3$,
the parameters $\alpha_1$ and $\alpha_3$ are positive while $\alpha_2$ is negative.
It turns out that in $\cstr'$,
we only need to keep the constraints of the form $\alpha^- \le \alpha^+$,
where $\alpha^-$ is a negative and $\alpha^+$ is a positive.

Since $\cstr'$ contains less constraints than $\cstr$,
any solution of $\cstr'$ is a solution of $\cstr$ as well,
so we get $\types{\cstr} \subseteq \types{\cstr'}$.
For the other direction, we need to show that for any solution $\sol' \models \cstr'$,
there exists some $\sol \models \cstr$ such that $\sol(\alpha^+) \le \sol'(\alpha^+)$ holds for all positive~$\alpha^+$,
and $\sol'(\alpha^-) \le \sol(\alpha^-)$ holds for all negative $\alpha^-$.
It is then easy to see that $\sol(A) \le \sol'(A)$ holds as well,
and so any type $A'$ such that $\sol'(A) \le A'$ already appears in $\types{\cstr}$.
For exact details, see the proof of Proposition~\ref{prop:garbage-collection}.

\begin{defi}
The sets $\pos(A)$ of positive and $\neg(A)$ negative parameters in a given type~$A$ are defined as:
\begin{align*}
  \pos(\alpha) &= \set{\alpha} &
  \neg(\alpha) &= \emptyset \\
  \pos(A \to \C) &= \neg(A) \cup \pos(\C) &
  \neg(A \to \C) &= \pos(A) \cup \neg(\C) \\
  \pos(E^{\rgn}) &= \set{\rgn} &
  \neg(E^{\rgn}) &= \emptyset \\
  \pos(\C \hto \D) &= \neg(\C) \cup \pos(\D) &
  \neg(\C \hto \D) &= \pos(\C) \cup \neg(\D)
\end{align*}
The sets of both positive and negative parameters
in ground types ($\boolty$, $\natty$, \dots) are empty.
For dirty types, the sets of positive and negative parameters are defined as:
\begin{align*}
  \pos(A \E \set{\op_1 \T \rgn_1, \dots, \op_n \T \rgn_n \mid \drt}) &= \pos(A) \cup \set{\rgn_1, \dots, \rgn_n, \drt} \\
  \neg(A \E \Drt) &= \neg(A)
\end{align*}
\end{defi}

\begin{defi}
For a unified set of constraints~$\cstr$ and sets of parameters $P$ and $N$,
we define the \emph{garbage collected} constraints~$\gc(\cstr)$ as:
\begin{align*}
  \gc(\cstr) = {}
    &\set{(\alpha^- \le \alpha^+) \in \cstr \mid \alpha^- \in N, \alpha^+ \in P} \cup {} \\
    &\set{(\rgn^- \le \rgn^+ \cup \uniq{i}{\rgn*_i}) \in \cstr \mid \rgn^- \in N, \rgn^+ \in P} \cup {} \\
    &\set{(\inst \in \rgn^+ \cup \uniq{i}{\rgn*_i}) \in \cstr \mid \rgn^+ \in P} \cup {} \\
    &\set{(\drt^- \le \drt^+) \in \cstr \mid \drt^- \in N, \drt^+ \in P}
\end{align*}
On $\gc(\cstr)$, we define $\approx_{\gc(\cstr)}$ to be the restriction of $\approx_\cstr$ to the set $P \cup N$.
\end{defi}

To perform garbage collection on constraints in a typing judgement $\ctx \ent e \T A \while \cstr$,
we define $P$ to contain not just $\pos(A)$, but also $\neg(A_i)$ for all $(x_i \T A_i) \in \ctx$.
Conversely, $N$ must contain $\neg(A)$ and all $\pos(A_i)$.
We similarly extend $P$ with all region parameters $\rgn*_i$ in constraints of the form
$\rgn \le \rgn' \cup \uniq{i}{\rgn*_i}$ and $\inst \in \rgn \cup \uniq{i}{\rgn*_i}$.
Otherwise, garbage collection may drop some of their lower bounds and thus relax the constraints.
Recall that regions $\sol(\rgn*_i)$ are always non-empty, so decreasing them can only increase $\uniq{i}{\rgn*_i}$.

\begin{prop}
\label{prop:garbage-collection}
If the set of constraints~$\cstr$ is unified, the two-way inference rules
\begin{mathpar}
  \mprset{fraction={===}}
  \inferrule[GC-Expr]{
    \ctx \ent[F] e \T A \while \cstr
  }{
    \ctx \ent[F \cap (P \cup N)] e \T A \while \gc(\cstr)
  }

  \inferrule[GC-Comp]{
    \ctx \ent[F] c \T \C \while \cstr
  }{
    \ctx \ent[F \cap (P \cup N)] c \T \C \while \gc(\cstr)
  }
\end{mathpar}
where
\begin{align*}
  P_0 = {} &\pos(A) \cup \neg(\ctx) \\
  P = {}
      &P_0 \cup \set{\rgn*_i \mid (\rgn^- \le \rgn^+ \cup \uniq{i}{\rgn*_i}) \in \cstr, \rgn^- \in N, \rgn^+ \in P_0} \cup {} \\
      &\hphantom{P_0 \cup {}}\set{\rgn*_i \mid (\inst \in \rgn^+ \cup \uniq{i}{\rgn*_i}) \in \cstr, \rgn^+ \in P_0} \\
  N = {} &\neg(A) \cup \pos(\ctx)  
\end{align*}
are sound.

Furthermore, the set $\gc(\cstr)$ is unified.
\end{prop}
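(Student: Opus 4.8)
The rules \rulename{GC-Expr} and \rulename{GC-Comp} are two-way, so I read ``sound'' as: replacing $\cstr$ by $\gc(\cstr)$ (and $F$ by $F\cap(P\cup N)$) preserves the family of closed typings the judgement denotes, i.e. $\types{\cstr}=\types{\gc(\cstr)}$ in the context-aware sense, and $\gc(\cstr)$ is again unified. I would start with a bookkeeping observation: every parameter occurring in $A$, in $\ctx$, or in $\gc(\cstr)$ lies in $P\cup N$ — for $A$ and $\ctx$ because $\pos(A)\cup\neg(A)\subseteq P\cup N$ together with the definitions of $P,N$, and for $\gc(\cstr)$ because each retained constraint has both endpoints in $P\cup N$ and every handled parameter $\rgn*_i$ it mentions in $P$ (this is exactly why $P$ is enlarged beyond $P_0$). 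Hence the fresh parameters that $F\cap(P\cup N)$ drops occur nowhere relevant, so dropping them is harmless. One inclusion is then immediate: $\gc(\cstr)\subseteq\cstr$, so every $\sol\models\cstr$ also solves $\gc(\cstr)$, whence $\types{\cstr}\subseteq\types{\gc(\cstr)}$.

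For the converse, fix $\sol'\models\gc(\cstr)$. The heart of the proof is to build a closed substitution $\sol\models\cstr$ lying below $\sol'$ on every parameter of $P$ and above $\sol'$ on every parameter of $N$ — both in the subtyping order on types and in the $\subseteq$ order on regions and dirts — with $\sol(\rgn*_i)=\sol'(\rgn*_i)$ on the handled region parameters (so $\uniq{i}{\sol(\rgn*_i)}=\uniq{i}{\sol'(\rgn*_i)}$). Granting such a $\sol$, monotonicity and antitonicity of the type formers, together with $\pos(A)\subseteq P$, $\neg(A)\subseteq N$, $\neg(\ctx)\subseteq P$, $\pos(\ctx)\subseteq N$, give $\sol(A)\le\sol'(A)$ and $\sol'(\ctx)\le\sol(\ctx)$ pointwise; hence the closed typing witnessed by $\sol'$ is obtained from the one witnessed by $\sol$ by subsumption, so it already belongs to $\types{\cstr}$ and $\types{\gc(\cstr)}\subseteq\types{\cstr}$.

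The substitution $\sol$ is defined componentwise. For type parameters I work one skeleton $S$ at a time: since $\sol'$ respects $\approx_\cstr$, the $\sol'(\beta)$ for $\beta\in S$ all share a common shape, so least upper bounds exist within that shape, and I set $\sol(\alpha)$ to be the join of $\set{\sol'(\beta)\mid\beta\in N\cap S,\ (\beta\le\alpha)\in\cstr}$. Transitivity plus closure of $\cstr$ make $\sol$ satisfy every $\alpha\le\beta$ in $\cstr$; reflexivity gives $\sol'(\alpha)\le\sol(\alpha)$ for $\alpha\in N$; and for $\alpha\in P$ every $\beta\le\alpha$ with $\beta\in N$ is retained in $\gc(\cstr)$, so $\sol'(\beta)\le\sol'(\alpha)$ and hence $\sol(\alpha)\le\sol'(\alpha)$. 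Dirt parameters are handled identically, with $\sim_\cstr$ for $\approx_\cstr$, union for join and $\emptyset$ as bottom. Region parameters are the genuinely new ingredient: for a covering parameter I take $\sol$ of it to be the region generated by the instances and covering parameters that the retained membership and inequality constraints force below it, intersected with $\sol'$ of it when it is in $P$, while the handled parameters reuse $\sol'$. That this $\sol$ satisfies all region constraints of $\cstr$, not only the retained ones, is exactly what the last two closure laws of ``unified'' provide — and this is why $P$ must contain every $\rgn*_i$ in a kept region or membership constraint, since otherwise $\gc$ could drop a lower bound of $\rgn*_i$ and really weaken the constraint.

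It remains to check $\gc(\cstr)$ is unified: it contains only decomposed constraints, inherited from $\cstr$; $\approx_{\gc(\cstr)}$ is the restriction of $\approx_\cstr$; and closure under the four implication rules is inherited because each rule carries retained hypotheses to a retained conclusion — for type and dirt transitivity the middle parameter lies in $P\cap N$, so the endpoints stay split between $N$ and $P$, and for the two region laws the union of handled index sets still lands inside $P$. The step I expect to be the real obstacle is the region-parameter part of the construction: confirming that the explicitly built $\sol$ satisfies \emph{every} generalised constraint $\rgn\le\rgn'\cup\uniq{i}{\rgn*_i}$ and $\inst\in\rgn\cup\uniq{i}{\rgn*_i}$ of $\cstr$ while staying below $\sol'$ on $P$, since the singleton-union right-hand sides interact delicately with which bounds survive garbage collection — together with the degenerate cases of the type/dirt construction (empty joins, and skeleton classes meeting neither $P$ nor $N$), which need a careful global assignment over the skeleton's quotient rather than a blind choice. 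The polarity bookkeeping and the unifiedness check are the routine part of the standard garbage-collection argument.
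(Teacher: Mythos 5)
Your overall strategy --- one inclusion from $\gc(\cstr)\subseteq\cstr$, and for the converse a solution $\sol\models\cstr$ sandwiched by polarity around a given $\sol'\models\gc(\cstr)$, defined as a join of negative lower bounds --- is exactly the paper's, and your treatment of type and dirt parameters and of the unifiedness of $\gc(\cstr)$ matches the paper's proof (including the empty-join case, which the paper resolves by taking the intersection of $\sol'$ over the skeleton). The gap is precisely where you predicted it: the region parameters. Three concrete problems remain. First, your $\sol(\rgn)$ is generated by the lower bounds that the \emph{retained} constraints force below $\rgn$; but $\sol$ must satisfy every constraint of $\cstr$, including dropped ones of the form $\rgn^-\le\rgn\cup\uniq{i}{\rgn*_i}$ with $\rgn^-\in N$ but $\rgn\notin P$, for which your $\sol(\rgn)$ is empty and the constraint fails. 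The join has to range over all of $\cstr$, and --- this is the ingredient you are missing --- each contribution must have the handled regions subtracted off: the paper sets $\sol(\rgn)$ to the union of $\sol'(\rgn^-)-\uniq{i}{\sol'(\rgn*_i)}$ and $\set{\inst}-\uniq{i}{\sol'(\rgn*_i)}$ over the corresponding lower-bound constraints in $\cstr$, and then recovers the constraint by adding the singleton union back to both sides.

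Second, your choice $\sol(\rgn*_i)=\sol'(\rgn*_i)$ is not viable for handled parameters outside $P$: their lower bounds $\inst\in\rgn*_i$ are garbage-collected, so $\sol'(\rgn*_i)$ need not contain $\inst$ and $\sol$ would violate $\cstr$. The paper instead only needs $\sol(\rgn*_i)\subseteq\sol'(\rgn*_i)$ together with non-emptiness of $\sol(\rgn*_i)$, observing that shrinking a non-empty region can only enlarge $\uniq{i}{\rgn*_i}$, so $\uniq{i}{\sol'(\rgn*_i)}\subseteq\uniq{i}{\sol(\rgn*_i)}$ and the verified inclusion transfers. Third, even that argument requires every $\rgn*_i$ occurring in a singleton union of \emph{any} constraint of $\cstr$ (not just the retained ones) to lie in $P$, which the stated $P$ does not guarantee; the paper patches this with a two-stage argument, first garbage-collecting with $P'$ containing all such $\rgn*_i$, then showing a second pass with the resulting constraint set yields $P''=P$. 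Your proposal asserts that the enlargement of $P$ beyond $P_0$ suffices, but without this iteration that claim is unsupported. Since you flagged the region construction as the open obstacle rather than resolving it, the proof is incomplete at its genuinely non-routine step.
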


We present garbage collection in the form of two-way rules so that,
like unification, we can perform it while still gathering constraints.
This allows us to dispose any intermediate constraints as soon as possible,
making the whole algorithm very efficient.

Ignoring handled region parameters in region constraints,
the number of parameters is linear in the size of type,
and each constraint relates two parameters,
so the number of constraints is roughly quadratic in the size of the inferred type after garbage collection.
In practice, though, the number of constraints is often much smaller
and for typical functional programs,
the current implementation of the inference algorithm in \Eff is on a par with the one in OCaml
(more details can be found in Table~\ref{tab:benchmark}).
Without garbage collection, the inference algorithm experiences exponential blow-up and is unusable.

\begin{table}[h]
  \centering
  \begin{tabular}{rcc}
    \toprule
    filename                                & \Eff  & OCaml \\
    \midrule
    \texttt{garsia\_wachs.eff}/\texttt{.ml}  & 16 ms & 16 ms \\
    \texttt{list.eff}/\texttt{.ml}          & 46 ms & 42 ms \\ 
    \texttt{map.eff}/\texttt{.ml}           & 93 ms & 60 ms \\
    \texttt{set.eff}/\texttt{.ml}           & 53 ms & 33 ms \\ 
    \bottomrule
  \end{tabular}
  \caption{
    The table contains average ($N = 50, \sigma < 15\%$) file loading times
    for \Eff 3.1 and OCaml 4.01.0 on Mac OS X 10.9.3
    running on a 1.7 GHz Intel Core i5 with 4 GB of RAM.
    Except for a few initial definitions needed to bridge the differences,
    the \Eff and OCaml files are identical
    (they can be found in the \texttt{examples/benchmarks/} folder of the \Eff distribution).
    The test files contain only definitions and no executable code,
    so most of the loading time is spent on type-checking.
    To compensate for the time needed to start the interpreter and load the standard library,
    we subtracted the average time spent to load a blank file from all the entries (23 ms for \Eff and 12 ms for OCaml).
  }
  \label{tab:benchmark}
\end{table}

\subsection{Simplifying region constraints}
\label{sub:simplifying-region-constraints}

Garbage collection is extremely efficient,
but there is a small number of additional tactics we can offer
that simplify region constraints.
These are not meant to make the algorithm fast, but to make its output simpler.
For as we shall see in Section~\ref{sub:displaying-region-parameters},
region constraints tend to be the most difficult to present succinctly,
so it is crucial that they are as simple as possible before we display them.
The main idea behind all optimizations is that in region constraints,
handled region parameters~$\rgn*_i$ contribute to
the right-hand side $\rgn \cup \uniq{i}{\rgn*_i}$ only when they denote a singleton.

When determining if an inhabited region parameter $\rgn*$ denotes a singleton,
it is helpful to see that $\rgn*$ may appear as a covering region only in constraints of the form
$\rgn \le \rgn*$ and $\inst \in \rgn*$,
so ones with no handled regions $\uniq{i}{\rgn*_i}$.
To see this, one needs to laboriously check that no other constraints are added during inference, unification, or garbage collection. 
Adding an extra region parameter to \rulename{Cstr-Op} was a part of that effort.

Unlike garbage collection, each tactic is very basic, so we only sketch how they work.
\begin{itemize}
\item
Any parameter $\rgn*$ for which we have
both $\inst \in \rgn*$ and $\inst' \in \rgn*$ for some $\inst \ne \inst'$
cannot denote a singleton.
Thus, it may be removed from all singleton unions $\uniq{i}{\rgn*_i}$ in which it appears.
Similarly, if we have $\inst \in \rgn*$,
we may remove any occurrence of $\rgn*$ as a handled region in
constraints of the form $\inst' \in \rgn \cup \uniq{i}{\rgn*_i}$
as it cannot contain $\inst'$ and be a singleton at the same time.
\item
Next, take inhabited region parameters $\rgn*_1 \le \rgn*_2$
(recall that all inequalities between inhabited region parameters are of that form)
that both appear in some $\uniq{i}{\rgn*_i}$.
If $\rgn*_1$ does not denote a singleton, neither does $\rgn*_2$,
and if $\rgn*_1$ does denote a singleton,
$\rgn*_2$ can make no further contribution to $\uniq{i}{\rgn*_i}$.
So, we may safely remove $\rgn*_2$ in both cases.
\item
We may reduce the number of constraints by observing that if $I \subseteq J$,
the constraint
$\rgn \le \rgn' \cup \uniq{i \in I}{\rgn*_i}$
implies
$\rgn \le \rgn' \cup \uniq{i \in J}{\rgn*_i}$
and we may safely throw the latter one away.
\item
After all the simplifications,
we may further reduce the number of constraints
by another round of garbage collection,
in which we remove all lower bounds on region parameters
that no longer occur in singleton unions.
\end{itemize}

\subsection{Further simplifications}
\label{sub:further-simplifications}

Our algorithm also applies to the much simpler case when one
triggers effects with only operation symbols and no instances~\cite{kammar2013handlers}.
As this amounts to having a single instance~$\star$ of each effect,
we can drop all instance constraints $\inst \in \rgn \cup \uniq{i}{\rgn*_i}$
with at least one handled region parameter $\rgn*_i$.
This is because $\inst$ must be $\star$
and all inhabited region parameters $\rgn*$ must denote the singleton region $\set{\star}$,
so any such constraint is always satisfied.

Another tempting way to drop constraints is to drop all constraints
$\inst \in \rgn \cup \uniq{i}{\rgn*_i}$
for which $\inst \in \rgn*_i$ is the only lower bound for some $\rgn*_i$.
But, unlike other simplification techniques,
this may not be done until after we have gathered all the constraints,
because $\rgn*_i$ may still receive further lower bounds.

Similarly, we may want to improve the second tactic from Section~\ref{sub:simplifying-region-constraints},
and remove any handled parameters $\rgn*_2$ not only if we have $\rgn*_1 \le \rgn*_2$,
but also in case all lower bounds of $\rgn*_1$ are also lower bounds of $\rgn*_2$.
If $\rgn*_1 \le \rgn*_2$, this is implied by closure properties.
For example, if we have lower bounds $\rgn*_3 \le \rgn*_1$, $\rgn*_3 \le \rgn*_2$ and $\rgn*_4 \le \rgn*_2$,
we can always assign a smaller region to $\rgn*_1$ than to $\rgn*_2$.
Thus, if $\rgn*_1$ and $\rgn*_2$ appear in the same singleton union, we can safely drop $\rgn*_2$,
for if it denotes a singleton, so does $\rgn*_1$.
This condition is also stable under garbage collection ---
if $\rgn*_1$ is not negative, the constraint $\rgn*_1 \le \rgn*_2$ will be dropped,
but since $\rgn*_1$ and $\rgn*_2$ are both positive, all their lower bounds will be kept.
However, $\rgn*_1$ may similarly receive further lower bounds,
so we can perform the simplification only at the very end.

\section{Displaying inferred types}
\label{sec:displaying}

However, once the algorithm finishes,
we may perform further simplifications that lose information
but make the output much easier to understand.
These simplifications can be regarded as configurable,
so one may, if desired, omit some of them to reveal more details.
We emphasize that the presented techniques incur information loss
and should be used only for displaying output to a programmer.
For example, in \Eff's interactive loop, when we define a value,
\Eff shows the simplified type, but stores the full type in its typing environment.

\subsection{Displaying dirt parameters}
First, we can get rid of \emph{all} inequalities between dirt parameters.
Recall that increasing positive parameters always yields a valid typing,
so we are interested only in their smallest possible value.
Since this is exactly the union of all lower bounds,
we display a positive dirt parameter $\drt^+$ as the union $\bigcup_{\drt^- \le \drt^+} \drt^-$
of all smaller negative parameters.
So, given the constraints
\[
  \drt_1^- \le \drt_2^+, \drt_1^- \le \drt_3^+, \drt_4^- \le \drt_2^+
\]
we can replace $\drt_2^+$ with its lower bound $\drt_1^- \cup \drt_4^-$
and $\drt_3^+$ with its lower bound $\drt_1^-$.
In particular, if a positive dirt parameter $\drt$ has no lower bounds,
we write $A \E \drt$ as $A \E \emptyset$, and $A \to[\drt] B$ as $A \to B$.
We keep the negative parameters as they are,
because each usually captures a dirt of some input argument.

Since we can recover the constraints back from the displayed unions,
this technique loses no information,
though due to complex dirts,
the displayed types are no longer in a form we can use for inference.

\subsection{Displaying type parameters}
Next, we could use the same approach for representing type parameters.
However, though inequalities between type parameters are crucial for inference,
they do not offer much additional information to the programmer
besides telling what type parameters are of the same shape.
So, we can get rid of \emph{all} inequalities between type parameters,
and label all type parameters in the same skeleton with the same symbol.
So, instead of
\[
  (\alpha_1 \to[\drt] \alpha_2) \to \alpha_3 \to[\drt] \alpha_4
  \while \alpha_3 \le \alpha_1, \alpha_2 \le \alpha_4
\]
which is the inferred type of application $\fun{f} \val (\fun{x} f \, x)$, we can write
\[
  (\alpha \to[\drt] \beta) \to \alpha \to[\drt] \beta
\]
This simplification does incur some information loss --- see example at the beginning of Section~\ref{sec:unifying}.

\subsection{Displaying region parameters}
\label{sub:displaying-region-parameters}
Finally, we can get rid of region inequalities in the same way as dirt inequalities,
if we replace positive region parameters with their lower bounds.
The problem is that these lower bounds are more complex due to handled regions.
For inhabited parameters $\rgn*$, which have simpler constraints,
we can use the same technique, except that we also need to collect instance lower bounds.
For example, we may write a parameter $\rgn*$ with lower bounds
\[
  \inst_1 \in \rgn*, \inst_2 \in \rgn*, \rgn_1 \le \rgn*, \rgn_2 \le \rgn*
\]
as $\set{\inst_1, \inst_2} \cup \rgn_1 \cup \rgn_2$.

For other region parameters, we first employ the techniques suggested in Section~\ref{sub:further-simplifications}.
Then, we display each constraint $\rgn^- \le \rgn^+ \cup (\rgn*_1 \cupdot \cdots \cupdot \rgn*_n)$
by adding $\rgn^- \dotminus \rgn*_1 \dotminus \cdots \dotminus \rgn*_n$ to the lower bounds of $\rgn^+$.
We similarly add lower bounds of the form $\set{\inst} - \cdots$
for constraints involving instances.
We write $\dotminus$ to emphasise the fact that we remove a region only if it is a singleton.
If it turns out that any $\rgn*$ has a single lower bound $\inst \in \rgn*$,
we write $-\inst$ instead of $\dotminus \rgn*$.

If $\rgn \dotminus \cdots$ appears in multiple lower bounds but with different handled regions,
we display only the regions that appear in all constraints.
Thus, instead of
\[
  (\rgn \dotminus \rgn*_1 \dotminus \rgn*_2 \dotminus \rgn*_3) \cup
  (\rgn \dotminus \rgn*_2 \dotminus \rgn*_3 \dotminus \rgn*_4) \cup
  (\rgn \dotminus \rgn*_2 \dotminus \rgn*_3 \dotminus \rgn*_5)
\]
we write just $\rgn \dotminus \rgn*_2 \dotminus \rgn*_3$.

If there are multiple constraints with the same handled regions,
which happens when we use the same handler more than once,
we may merge them.
For example,
we may write
$\set{\inst} \cup (\set{\inst'} \cup \rgn) \dotminus \rgn*_1 \dotminus \rgn*_2$.
instead of
$\set{\inst} \cup (\set{\inst'} \dotminus \rgn*_1 \dotminus \rgn*_2) \cup (\rgn \dotminus \rgn*_1 \dotminus \rgn*_2)$

Still, this may be too much information in some cases, so we can also display $\rgn'$ as
$\set{\inst} \cup (\set{\inst'} \cup \rgn)?$
where the question mark lets the programmer know that some of instances in $\set{\inst'} \cup \rgn$ may be handled
and that this lower bound may be decreased if further information is available.
Finally, we can safely omit the question mark and over-approximate the lower bound with $\set{\inst, \inst'} \cup \rgn$.

Determining the exact level of detail to show is subjective,
and further techniques may arise when the effect system is used in practice.

\subsection{Displaying handler types}
As mentioned in Section~\ref{ssub:handling-exceptions},
handler types often have a repetitive form that we can write in a more compact way.
For example, a handler type
\[
  \alpha \E \set{\kord{lookup} \T \rgn_1, \kord{raise} \T \rgn_2 \mid \drt} \hto
  \beta \E \set{\kord{lookup} \T \rgn_1 \dotminus \rgn, \kord{raise} \T (\rgn_2 \dotminus \rgn') \cup \rgn'' \mid \drt}
\]
which describes a handler that handles memory lookups on a region $\rgn$,
handles exceptions from $\rgn'$ and raises exceptions from $\rgn''$,
can be written simply as
\[
  \alpha \hto[\kord{lookup} \T \dotminus\rgn,\ \kord{raise} \T \dotminus\rgn', +\rgn''] \beta
\]

\section{Examples}
\label{sec:examples}

\subsection{Function composition}

Before we turn to handlers,
we display a typical run of the algorithm on an example with no special algebraic features:
function composition, defined as
\[
  \kord{compose} \defeq \fun{f} \val (\fun{g} \val (\fun{x}
          \letin{y = f \, x}
          g \, y))
\]

The constraints are computed by the following derivation
\[
  \inferrule{
    \inferrule{
      \ctx \ent f \T \alpha_f \while \emptyset \\
      \ctx \ent x \T \alpha_x \while \emptyset
    }{
      \ctx \ent f \, x \T \alpha_1 \E \drt_1 \while \alpha_f \le (\alpha_x \to[\drt_1] \alpha_1)
    }
    \and
    \inferrule{
      \ctx, y\!:\!\alpha_y \ent g \T \alpha_g \\
      \ctx, y\!:\!\alpha_y \ent y \T \alpha_y
    }{
      \ctx, y\!:\!\alpha_y \ent g \, y \T \alpha_2 \E \drt_2 \while \alpha_g \le (\alpha_y \to[\drt_2] \alpha_2)
    }
  }{
    \inferrule{
      \inferrule{\ctx \ent \letin{y = f \, x} g \, y \T \alpha_2 \E \drt_3 \while \cstr}{\vdots}
    }{
      \emptyset \ent \kord{compose} \T \alpha_f \to[\drt_5] \alpha_g \to[\drt_4] \alpha_x \to[\drt_3] \alpha_2 \while \cstr
    }
  }
\]
where $\ctx = f \T \alpha_f, g \T \alpha_g, x \T \alpha_x$ and
\[
  \cstr = \set{
    \alpha_f \le (\alpha_x \to[\drt_1] \alpha_1),
    \alpha_g \le (\alpha_y \to[\drt_2] \alpha_2),
    \alpha_1 \le \alpha_y, \drt_1 \le \drt_3, \drt_2 \le \drt_3
  }
\]
Unifying the constraints, we see that $\alpha_f$ and $\alpha_g$ need to be replaced with fresh function types,
and the result of $\unify(\cstr)$ is the substitution
\[
  \sol = \set{
    \alpha_f \mapsto (\alpha_3 \to[\drt_6] \alpha_4),
    \alpha_g \mapsto (\alpha_5 \to[\drt_7] \alpha_6)  
  }
\]
and the unified constraints, which are:
\[
  \cstr' = \set{
    \alpha_x \le \alpha_3,
    \alpha_4 \le \alpha_1 \le \alpha_y \le \alpha_5,
    \alpha_6 \le \alpha_2,
    \drt_6 \le \drt_1 \le \drt_3,
    \drt_7 \le \drt_2 \le \drt_3  
  }
\]
Under $\sol$, the inferred type is
\[
  \kord{compose} \T (\alpha_3 \to[\drt_6] \alpha_4) \to[\drt_5] (\alpha_5 \to[\drt_7] \alpha_6) \to[\drt_4] (\alpha_x \to[\drt_3] \alpha_2) \]
so the sets of positive and negative parameters are
\[
  N = \set{\alpha_x, \alpha_4, \alpha_6, \drt_6, \drt_7}
  \qquad \text{and} \qquad
  P = \set{\alpha_2, \alpha_3, \alpha_5, \drt_3, \drt_4, \drt_5}
\]
After the garbage collection, the only constraints that remain are
\[
  \gc(\cstr') = \set{
    \alpha_x \le \alpha_3,
    \alpha_4 \le \alpha_5,
    \alpha_6 \le \alpha_2,
    \drt_6 \le \drt_3,
    \drt_7 \le \drt_3  
  }
\]
Finally, we replace all positive dirt parameters with the union of their lower bounds,
merge all type parameters in the same skeleton,
introduce fresh and readable parameters,
and obtain the final type
\[
  \kord{compose} \T
    (\alpha \to[\drt] \beta)
    \to (\beta \to[\drt'] \gamma)
    \to (\alpha \to[\drt \cup \drt'] \gamma)
\]

\subsection{Counting printouts}

Next, let us define a handler that computes the number of calls to $\kord{print}$ on a given channel $c$:
\begin{align*}
  \kord{count\_print} \defeq \fun{c} \val (&\handler \\
    &\case \val x \mapsto \val 0 \\
    &\case \call{c}{\kord{print}}{y}{k} \mapsto \letin{n = k \, \unt} \val (\succ n)\\
  &)
\end{align*}
The computed type of $\kord{count\_print}$ is $\alpha_c \to[\drt_0] (\C \hto \D)$,
where the form of the dirty types
\[
  \C \defeq \alpha_\text{in} \E \set{\kord{print} \T \rgn_\text{in} \mid \drt_\text{in}}
  \qquad\text{and}\qquad
  \D \defeq \alpha_\text{out} \E \set{\kord{print} \T \rgn_\text{out} \mid \drt_\text{out}}
\]
reflects that $\kord{print}$ is the only operation symbol, appearing in the handler,
while the computed constraints (in the order described on page~\pageref{pag:hand-rules}) are
\begin{multline*}
  \big\{
    (\unitty \to \D) \le (\unitty \to \alpha_2 \E \drt_2),
    \alpha_3 \le \natty,
    \alpha_2 \le \alpha_3,
    \drt_2 \le \drt_3,
    \drt_4 \le \drt_3, \\
    \natty \E \drt_1 \le \D,
    \natty \E \drt_3 \le \D,
    \alpha_c \le \kord{channel}^{\rgn*},
    \rgn_\text{in} \le \rgn_\text{out} \cupdot \rgn*,
    \drt_\text{in} \le \drt_\text{out}
  \big\}
\end{multline*}
After unification and garbage collection, we get that the type of $\kord{count\_print}$ is
\[
  \kord{channel}^{\rgn} \to[\drt_0]
  (\alpha_\text{in} \E \set{\kord{print} \T \rgn_\text{in} \mid \drt_\text{in}}
  \hto \natty \E \set{\kord{print} \T \rgn_\text{out} \mid \drt_\text{out}})
\]
under the constraints $\set{\drt_\text{in} \le \drt_\text{out}, \rgn \le \rgn*, \rgn_\text{in} \le \rgn_\text{out} \cupdot \rgn*}$.
If we rename the parameters and use notation introduced in Section~\ref{sec:displaying},
we may write the type as
\[
  \kord{channel}^{\rgn_1} \to (\alpha \E \set{\kord{print} \T \rgn_2 \mid \drt}
  \hto \natty \E \set{\kord{print} \T \rgn_2 \dotminus \rgn_1 \mid \drt})
\]
or even as $\kord{channel}^{\rgn} \to (\alpha \hto[\kord{print} \T \dotminus \rgn] \natty)$.

Define $c$ to be a simple imperative computation \inline{std#print "Hello, world!"} with the inferred type
$\unitty \E \set{\kord{print} \T \rgn \mid \drt}$ under the constraint $\kord{std} \in \rgn$.
If we use $\kord{count\_print}$ to count the number of $\hash{\kord{std}}{\kord{print}}$ calls as
\[
  \letin{h = (\kord{count\_print} \, \kord{std})} (\withhandle{h}{c})
\]
the inferred type of the handled computation is $\natty \E \set{\kord{print} \T \rgn \mid \drt}$
under the constraints $\set{\kord{std} \in \rgn*, \kord{std} \in \rgn \cupdot \rgn*}$.
Since $\rgn*$ will receive no further lower bounds, we may drop the second constraint
as described in Section~\ref{sub:further-simplifications},
thus both $\rgn$ and $\drt$ are completely unconstrained
and we may write the type as $\natty \E \emptyset$.

\section*{Conclusion}

\subsection*{Related work}

Our inference algorithm borrows heavily from existing inference algorithms:
type inference is based on one for structural subtyping of Simonet~\cite{simonet2003type},
region inference is based on one for non-structural subtyping by Pottier~\cite{pottier1998type},
and dirt inference is based on Remy's row-typing~\cite{remy1993type}.
Furthermore, our algorithm would be unusable in practice without garbage collection~\cite{pottier2001simplifying, simonet2003type, trifonov1996subtyping} ---
other lossless techniques presented in Section~\ref{sub:simplifying-region-constraints} merely complement it.
Our decision to store inferred types in one form but display them in the other is similar to one in~\cite{pottier1998type}
except that to simplify display, we discard not only invariants that were necessary for inference, but inferred information as well.
One point where we differ from current approaches is the use of unification for solving constraints.
Though unification is simple and familiar,
repeated substitution makes it very inefficient,
so in practice, it would be better to replace it with a constraint-based algorithm as described in~\cite{simonet2003type}.

The related effect systems can be roughly divided into three groups:
  effect systems with no handlers,
  effect systems for exception handlers,
  and effect systems for handlers of arbitrary algebraic effects.
So far, there are no approaches that offer general handlers and are not based on algebraic effects.

Most of the ongoing research considers effect systems for languages without any handlers.
One of the main aims in the early work was
a detailed analysis of memory allocation~\cite{lucassen1988polymorphic,talpin1992type,wadler1999marriage}.
Due to the lack of handlers, there is no interest in the exact locations being accessed,
only in the parts of the program that share locations from the same memory region.
Hence, ordinary unification together with a simple constraint resolution is enough to infer the information,
though our approach yields the same results.

Recent research acknowledges the importance of a user-friendly output (or input in prescriptive systems):
Scala~\cite{rytz2012lightweight}, a popular functional and object-oriented language,
has a prescriptive effect system where a programmer annotates functions with simple labels such as $\kord{@pure}$ or $\kord{@throws[IOException]}$.
Unfortunately, our effect system is descriptive, so we cannot compare the two at this point.

Koka~\cite{leijen2014koka}, a recently developed functional language,
is based on a descriptive effect system that represents inferred effects with a row of labels such as $\kord{exn}$, $\kord{io}$ or $\kord{div}$.
It is interesting to note that the effect system of Koka used to be based on constraints similar to ours~\cite{tate2010convenient},
but they were dropped in favour of rows as they were found to be too complex in practice.
We hope that our decision to keep constraints in the background alleviates this problem.

We can simulate the row-based approach to effect inference by assigning a single instance to each effect,
though this still gives us results with too precise dirt descriptions.
In order to obtain an output similar to Koka,
we need to merge all related dirt parameters into a single one,
just as we do for type parameters.

One effect that we do not treat, but Koka does, is divergence.
We can add a dummy operation $\hash{\star}{\kord{div}}$ and suitably modify the rule~\rulename{Cstr-LetRec} to track divergence,
but to prevent $\kord{div}$ from appearing in almost every practical program,
we need to augment the effect system with some form of termination analysis, just like Koka does.
For a practical language like Koka, this is essential,
though we leave it as future work in our development.

Effect systems for exceptions and their handlers~\cite{fahndrich1997program, leroy2000type, yi2002cost} often ignore other effects,
but provide much richer information about exception flow.
For example, Pessaux \& Leroy~\cite{leroy2000type} provide a row-based effect inference algorithm for OCaml
that uses control-flow analysis
in order to provide information about the values that exceptions carry as arguments.
This is because handlers in OCaml may be written so that they handle only exceptions with particular arguments,
for example only \inline{Failure "tl"}, which is raised when the tail function \inline{tl} is applied to an empty list.
We believe that in \Eff, declaring a new exception \inline{emptyListTail} is a cleaner solution
and one for which our approach already infers all important information.

Exception arguments aside, the inference algorithm of Pessaux \& Leroy produces results similar to ours
(for programs using only exceptions, that is).
We already saw at the beginning of Section~\ref{sec:eff} that both algorithms infer the same types of polymorphic higher-order functions.
To simulate the row-based approach to exception inference,
we could utilize our row-based dirt and take an \emph{operation symbol}~\inline{exc} with a single instance~$\star$ for each exception~\inline{exc}.
However, this prevents us from passing around exceptions as first-class values,
so it is better to represent each exception with a separate instance and obtain the same information in an unrestricted setting.

Handlers of algebraic effects are a recent discovery and so far,
there are only three effect systems beside ours that employ them.
First, Frank~\cite{mcbridefrank} is a prototype dependently-typed language with handlers and a prescriptive effect system.
Next, there is a library that provides a simpler form of handlers embedded as a domain specific language (DSL) inside a dependently-typed language Idris~\cite{brady2013programming}.
Finally, the closest to our approach is an effect system by Kammar, Lindley \& Oury~\cite{kammar2013handlers},
with safety results similar to ours,
and an implementation as a DSL inside Haskell.
This implementation also uses the type class mechanism of Haskell to infer effects.
The main contributions we bring to the group are first-class handlers and instances (other approaches use only operations to trigger effects),
and a stand-alone inference algorithm proven to be complete.

\subsection*{Future work}

Our inference algorithm is general,
offers strong guarantees on the effectful behaviour,
and presents the programmer with information that is easy to understand.
There are, as always, many possible improvements.

In its current form,
the presented inference algorithm infers the same types as the Hindley-Milner algorithm,
except that it provides an additional description of effects.
However, we can make the effect system \emph{prescriptive}
by adding constraints of the form $\rgn \le \Rgn$, which limit the allowed instances.
Then, a programmer may ensure that a given function will not raise exceptions
by simply ascribing it the type $\alpha \to[\set{\kord{raise} \T \emptyset \mid \drt}] \beta$.
Having such constraints allows us to lift the restrictions placed on the effect signature (discussed in Remark~\ref{rem:glitch}),
but determining their satisfiability is quite involved.
For example, $\rgn \le \emptyset$ may be satisfiable simply
because there are no constraints that give a lower bound to $\rgn$.
However, a more comprehensive treatment must also consider the case
when a $\rgn$ is empty because all of its instances have been removed by handlers,
and this is much more difficult to determine.

Next, let-polymorphism is currently based on value restriction, so that only types of expressions are generalized.
Since we already have an effect system in place,
we could relax this restriction and generalize the types of all pure computations,
but this again leads to the problem of determining empty regions as described in the above paragraph.

Furthermore, \Eff allows dynamic creation of fresh instances~\cite{bauer2012programming}.
Since instances are a crucial part of our type system,
instance generation has to be represented at that level as well.
One option is having dirty types of the form $\nu \Rgn. A \E \Drt$,
where $\Rgn$ captures the set of created instances, bound by $\nu$ in $A \E \Drt$.
Then, the type inference is particularly tricky as,
for example, the type of \inline{fun f -> f (); f ()} should reflect that it creates twice as much instances as \inline{f}.
A combination with recursion is a separate problem,
since programs can then create a potentially infinite number of instances,
though a wild card region $\top$ as discussed in Remark~\ref{rem:glitch} may be used in this case.

Finally, before algebraic effects and handlers can be considered practical,
we need an efficient way of evaluating programs that use them.
This may be difficult to achieve in general because of the freedom that handlers allow when manipulating the continuations.
At the very least, running any existing ML programs in the algebraic setting should induce a minimal overhead.

In this line, we could also pass any information inferred by the effect system
to an optimizing compiler:
pure computations may be postponed,
and computations with disjoint effects may be exchanged or even ran in parallel~\cite{tolmach1998optimizing}.
This looks like a promising and valuable direction of research,
especially because such optimizations have already been studied in the context of algebraic effects,
though without handlers~\cite{kammar2012algebraic}.

We hope that the presented work will help the ML community to recognize
algebraic effects as a natural progression of their current type system,
and that the Haskell community will acknowledge the additional flexibility that handlers have to offer.

\section*{Acknowledgements}

I would like to thank Andrej Bauer, Chris Stone, Ohad Kammar and the anonymous referees for all their extremely helpful comments and support.

\bibliographystyle{plain}
\bibliography{bibliography}

\begin{thebibliography}{10}

\bibitem{bauer2013effect}
Andrej Bauer and Matija Pretnar.
\newblock An effect system for algebraic effects and handlers.
\newblock In Reiko Heckel and Stefan Milius, editors, {\em Algebra and
  Coalgebra in Computer Science --- 5th International Conference, {CALCO} 2013,
  Warsaw, Poland, September 3--6, 2013. Proceedings}, volume 8089 of {\em
  Lecture Notes in Computer Science}, pages 1--16. Springer, 2013.

\bibitem{bauer2012programming}
Andrej Bauer and Matija Pretnar.
\newblock Programming with algebraic effects and handlers.
\newblock {\em Journal of Logical and Algebraic Methods in Programming}, 2014.

\bibitem{benton2002monads}
Nick Benton, John Hughes, and Eugenio Moggi.
\newblock Monads and effects.
\newblock In Gilles Barthe, Peter Dybjer, Luis Pinto, and Jo{\~{a}}o Saraiva,
  editors, {\em Applied Semantics, International Summer School, {APPSEM} 2000,
  Caminha, Portugal, September 9--15, 2000, Advanced Lectures}, volume 2395 of
  {\em Lecture Notes in Computer Science}, pages 42--122. Springer, 2000.

\bibitem{benton1998compiling}
Nick Benton, Andrew Kennedy, and George Russell.
\newblock Compiling standard {ML} to java bytecodes.
\newblock In Matthias Felleisen, Paul Hudak, and Christian Queinnec, editors,
  {\em Proceedings of the third {ACM} {SIGPLAN} International Conference on
  Functional Programming {(ICFP} '98), Baltimore, Maryland, USA, September
  27--29, 1998}, pages 129--140. {ACM}, 1998.

\bibitem{brady2013programming}
Edwin Brady.
\newblock Programming and reasoning with algebraic effects and dependent types.
\newblock In Morrisett and Uustalu \cite{DBLP:conf/icfp/2013}, pages 133--144.

\bibitem{fahndrich1997program}
Manuel F\"{a}hndrich and Alexander Aiken.
\newblock Program analysis using mixed term and set constraints.
\newblock In Pascal~Van Hentenryck, editor, {\em Static Analysis, 4th
  International Symposium, {SAS} '97, Paris, France, September 8--10, 1997,
  Proceedings}, volume 1302 of {\em Lecture Notes in Computer Science}, pages
  114--126. Springer, 1997.

\bibitem{fuh1990type}
{You-Chin} Fuh and Prateek Mishra.
\newblock Type inference with subtypes.
\newblock {\em Theoretical Computer Science}, 73(2):155--175, 1990.

\bibitem{hyland2006combining}
Martin Hyland, Gordon~D. Plotkin, and John Power.
\newblock Combining effects: Sum and tensor.
\newblock {\em Theoretical Computer Science}, 357(1):70--99, 2006.

\bibitem{ocaml}
{INRIA}.
\newblock O{C}aml.
\newblock \url{http://www.ocaml.org/}.

\bibitem{jones2003haskell}
Simon~Peyton Jones, editor.
\newblock {\em Haskell 98 Language and Libraries -- The Revised Report}.
\newblock Cambridge University Press, Cambridge, England, 2003.

\bibitem{kammar2013handlers}
Ohad Kammar, Sam Lindley, and Nicolas Oury.
\newblock Handlers in action.
\newblock In Morrisett and Uustalu \cite{DBLP:conf/icfp/2013}, pages 145--158.

\bibitem{kammar2012algebraic}
Ohad Kammar and Gordon~D. Plotkin.
\newblock Algebraic foundations for effect-dependent optimisations.
\newblock In John Field and Michael Hicks, editors, {\em Proceedings of the
  39th {ACM} {SIGPLAN-SIGACT} Symposium on Principles of Programming Languages,
  {POPL} 2012, Philadelphia, Pennsylvania, USA, January 22--28, 2012}, pages
  349--360. {ACM}, 2012.

\bibitem{leijen2014koka}
Daan Leijen.
\newblock Koka: Programming with row polymorphic effect types.
\newblock In Paul Levy and Neel Krishnaswami, editors, {\em Proceedings 5th
  Workshop on Mathematically Structured Functional Programming, {MSFP} 2014,
  Grenoble, France, 12 April 2014}, pages 100--126, 2014.

\bibitem{leroy2000type}
Xavier Leroy and Fran\c{c}ois Pessaux.
\newblock Type-based analysis of uncaught exceptions.
\newblock {\em {ACM} Transactions on Programming Languages and Systems},
  22(2):340--377, 2000.

\bibitem{levy03modelling}
Paul~Blain Levy, John Power, and Hayo Thielecke.
\newblock Modelling environments in call-by-value programming languages.
\newblock {\em Information and Computation}, 185(2):182--210, 2003.

\bibitem{lucassen1988polymorphic}
John~M. Lucassen and David~K. Gifford.
\newblock Polymorphic effect systems.
\newblock In Jeanne Ferrante and P.~Mager, editors, {\em Conference Record of
  the Fifteenth Annual {ACM} Symposium on Principles of Programming Languages,
  San Diego, California, USA, January 10--13, 1988}, pages 47--57. {ACM} Press,
  1988.

\bibitem{mcbridefrank}
Conor McBride.
\newblock Frank: An experimental programming language with typed algebraic
  effects.
\newblock \url{https://hackage.haskell.org/package/Frank/}.

\bibitem{milner1997definition}
Robin Milner.
\newblock {\em The definition of standard ML: revised}.
\newblock The MIT press, Cambridge, MA, USA, 1997.

\bibitem{DBLP:conf/icfp/2013}
Greg Morrisett and Tarmo Uustalu, editors.
\newblock {\em {ACM} {SIGPLAN} International Conference on Functional
  Programming, ICFP'13, Boston, MA, {USA} - September 25 - 27, 2013}. {ACM},
  2013.

\bibitem{pierce2002types}
Benjamin~C Pierce.
\newblock {\em Types and programming languages}.
\newblock The MIT press, Cambridge, MA, USA, 2002.

\bibitem{plotkin2001adequacy}
Gordon~D. Plotkin and John Power.
\newblock Adequacy for algebraic effects.
\newblock In Furio Honsell and Marino Miculan, editors, {\em Foundations of
  Software Science and Computation Structures, 4th International Conference,
  {FOSSACS} 2001 Held as Part of the Joint European Conferences on Theory and
  Practice of Software, {ETAPS} 2001 Genova, Italy, April 2-6, 2001,
  Proceedings}, pages 1--24. Springer, 2001.

\bibitem{plotkin2003algebraic}
Gordon~D. Plotkin and John Power.
\newblock Algebraic operations and generic effects.
\newblock {\em Applied Categorical Structures}, 11(1):69--94, 2003.

\bibitem{plotkin2009handlers}
Gordon~D. Plotkin and Matija Pretnar.
\newblock Handlers of algebraic effects.
\newblock In Giuseppe Castagna, editor, {\em Programming Languages and Systems,
  18th European Symposium on Programming, {ESOP} 2009, Held as Part of the
  Joint European Conferences on Theory and Practice of Software, {ETAPS} 2009,
  York, UK, March 22--29, 2009. Proceedings}, pages 80--94. Springer, 2009.

\bibitem{pottier1998type}
Fran{\c{c}}ois Pottier.
\newblock Type inference in the presence of subtyping: from theory to practice.
\newblock Technical Report RR-3483, INRIA, 1998.

\bibitem{pottier2001simplifying}
Fran{\c{c}}ois Pottier.
\newblock Simplifying subtyping constraints: {A} theory.
\newblock {\em Information and Computation}, 170(2):153--183, 2001.

\bibitem{remy1993type}
Didier R{\'e}my.
\newblock {\em Type inference for records in a natural extension of {ML}}.
\newblock Theoretical Aspects Of Object-Oriented Programming. Types, Semantics
  and Language Design. MIT Press, 1993.

\bibitem{rytz2012lightweight}
Lukas Rytz, Martin Odersky, and Philipp Haller.
\newblock Lightweight polymorphic effects.
\newblock In James Noble, editor, {\em {ECOOP} 2012 --- Object-Oriented
  Programming --- 26th European Conference, Beijing, China, June 11--16, 2012.
  Proceedings}, pages 258--282. Springer, 2012.

\bibitem{simonet2003type}
Vincent Simonet.
\newblock Type inference with structural subtyping: {A} faithful formalization
  of an efficient constraint solver.
\newblock In Atsushi Ohori, editor, {\em Programming Languages and Systems,
  First Asian Symposium, {APLAS} 2003, Beijing, China, November 27--29, 2003,
  Proceedings}, pages 283--302. Springer, 2003.

\bibitem{talpin1992type}
Jean{-}Pierre Talpin and Pierre Jouvelot.
\newblock The type and effect discipline.
\newblock {\em Information and Computation}, 111(2):245--296, 1994.

\bibitem{tate2010convenient}
Ross Tate and Daan Leijen.
\newblock Convenient explicit effects using type inference with subeffects.
\newblock Technical Report MSR-TR-2010-80, Microsoft Research, 2010.

\bibitem{tolmach1998optimizing}
Andrew~P. Tolmach.
\newblock Optimizing {ML} using a hierarchy of monadic types.
\newblock In Xavier Leroy and Atsushi Ohori, editors, {\em Types in
  Compilation, Second International Workshop, {TIC} '98, Kyoto, Japan, March
  25--27, 1998, Proceedings}, pages 97--115. Springer, 1998.

\bibitem{trifonov1996subtyping}
Valery Trifonov and Scott~F. Smith.
\newblock Subtyping constrained types.
\newblock In Radhia Cousot and David~A. Schmidt, editors, {\em Static Analysis,
  Third International Symposium, SAS'96, Aachen, Germany, September 24--26,
  1996, Proceedings}, pages 349--365. Springer, 1996.

\bibitem{wadler1999marriage}
Philip Wadler and Peter Thiemann.
\newblock The marriage of effects and monads.
\newblock {\em {ACM} Transactions on Computational Logic}, 4(1):1--32, 2003.

\bibitem{wansbrough1999once}
Keith Wansbrough and Simon L.~Peyton Jones.
\newblock Once upon a polymorphic type.
\newblock In Andrew~W. Appel and Alex Aiken, editors, {\em {POPL} '99,
  Proceedings of the 26th {ACM} {SIGPLAN-SIGACT} Symposium on Principles of
  Programming Languages, San Antonio, TX, USA, January 20--22, 1999}, pages
  15--28. {ACM}, 1999.

\bibitem{yi2002cost}
Kwangkeun Yi and Sukyoung Ryu.
\newblock A cost-effective estimation of uncaught exceptions in standard {ML}
  programs.
\newblock {\em Theoretical Computer Science}, 277(1):185--217, 2002.

\end{thebibliography}

\appendix

\section{Proofs}
\label{app:proofs}

\begin{lem}
The following two rules are admissible:
\begin{mathpar}
  \inferrule[Sub-Refl]{
  }{
    A \le A
  }

  \inferrule[Sub-Trans]{
    A \le A' \\
    A' \le A''
  }{
    A \le A''
  }
\end{mathpar}
\end{lem}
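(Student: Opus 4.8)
The plan is to prove both rules by a simultaneous structural induction, taking advantage of the fact that the subtyping relation of Figure~\ref{fig:subtyping} is \emph{structural}: any derivable $A \le A'$ or $\C \le \C'$ relates two types (respectively two dirty types) of the same shape, and its derivation ends with the unique rule matching that shape. Hence in each inductive step I may invert the hypotheses simply by reading off the premises of that rule, and reassemble the conclusion with the same rule.

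For \rulename{Sub-Refl} I would induct on the structure of $A$, mutually with the statement $\C \le \C$ for dirty types. The ground cases $\boolty$, $\natty$, $\unitty$, $\emptyty$ are exactly the corresponding axioms. For $A \to \C$ the induction hypotheses give $A \le A$ and $\C \le \C$, so \rulename{Sub-$\to$} yields $A \to \C \le A \to \C$; the case $\C \hto \D$ is analogous via \rulename{Sub-$\hto$}; for $E^\Rgn$ we use $\Rgn \subseteq \Rgn$ with \rulename{Sub-$E$}; and for $A \E \Drt$ we combine $A \le A$ from the induction hypothesis with $\Drt \subseteq \Drt$ in \rulename{Sub-$\E$}.

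For \rulename{Sub-Trans}, assuming $A \le A'$ and $A' \le A''$, I would induct on the structure of $A'$, again mutually with the dirty-type statement. Since all three types share the shape of $A'$, both hypotheses must end with the rule for that shape. In the case $A' = B' \to \C'$, inversion gives $B' \le B$, $\C \le \C'$ from the first hypothesis and $B'' \le B'$, $\C' \le \C''$ from the second; the induction hypothesis then yields $B'' \le B$ (composed in the reverse order, because the domain is contravariant) and $\C \le \C''$, and \rulename{Sub-$\to$} reassembles $B \to \C \le B'' \to \C''$. The handler case $\C \hto \D$ is identical up to notation, the cases $E^\Rgn$ and $A \E \Drt$ reduce to transitivity of set inclusion on the region or dirt component together with the induction hypothesis on the type component, and the ground cases are trivial.

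I do not expect a genuine obstacle here. The two points that need a little care are setting the induction up \emph{simultaneously} over types and dirty types — each refers to the other through $A \to \C$, $\C \hto \D$ and $A \E \Drt$ — and tracking variance: the domain of $\to$ and the incoming type of $\hto$ are contravariant, so in the transitivity argument the two smaller inequalities fed to the induction hypothesis are composed in the opposite order in those positions. Everything else is mechanical bookkeeping over the finitely many type formers.
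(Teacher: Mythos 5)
Your proof is correct and is exactly the argument the paper intends: the paper's own proof is the single sentence ``by induction on the structure of types,'' and your write-up simply fills in the details of that same structural induction, including the two points (mutual induction over types and dirty types, and the reversed composition in contravariant positions for \rulename{Sub-Trans}) that genuinely require care.
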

\begin{proof}
  The proof proceeds by an induction on the structure of types.
\end{proof}

The subsumption rules \rulename{SubExpr} and \rulename{SubComp} allow us to
increase types of expressions and computations.
Conversely, we may decrease the types in contexts.

\begin{lem}
Take contexts $\ctx$ and $\ctx'$ that bind the same variables and assume that
for all $(x_i \T A_i) \in \ctx$, we have $(x_i \T A_i') \in \ctx'$ for some $A_i \le A_i'$.
Then, the following rules are admissible:
\begin{mathpar}
  \inferrule[SubCtxExpr]{
    \ctx' \ent e \T A
  }{
    \ctx \ent e \T A
  }

  \inferrule[SubCtxComp]{
    \ctx' \ent c \T \C
  }{
    \ctx \ent c \T \C
  }
\end{mathpar}
\end{lem}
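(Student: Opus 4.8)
The plan is to prove the two admissibility statements simultaneously, by mutual induction on the derivations of $\ctx' \ent e \T A$ and $\ctx' \ent c \T \C$, casing on the last rule applied. The mutual induction is needed because expressions contain computations (inside $\fun{x} c$ and inside handlers) and computations contain expressions.

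The only case that actually uses the subtyping hypothesis on the contexts is \rulename{Var}. If the derivation ends with $\ctx' \ent x \T A'$ because $(x \T A') \in \ctx'$, then by assumption there is $(x \T A) \in \ctx$ with $A \le A'$; applying \rulename{Var} gives $\ctx \ent x \T A$, and one use of \rulename{SubExpr} promotes this to $\ctx \ent x \T A'$, as required. All the remaining rules will be handled by a routine push of the induction hypothesis: for every rule whose premises keep the context fixed (for instance \rulename{Succ}, \rulename{App}, \rulename{IfThenElse}, \rulename{Absurd}, \rulename{With}, \rulename{SubExpr}, \rulename{SubComp}), I would apply the induction hypothesis to each premise and re-apply the same rule; \rulename{LetVal} is likewise immediate, since both of its premises are stated over the unextended context.

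The one point that needs a moment's care is the rules that grow the context — \rulename{Fun}, \rulename{Op}, \rulename{Let}, \rulename{LetRec}, and the value and operation-case premises of \rulename{Hand}. There the relevant premise has the form $\ctx', y \T B \ent \cdots$, and I would invoke the induction hypothesis on $\ctx, y \T B$ against $\ctx', y \T B$, observing that the hypothesis relating the two contexts is preserved: the original bindings are untouched, and for the new binding $y \T B$ one has $B \le B$ by reflexivity of $\le$, i.e. the \rulename{Sub-Refl} rule established in the preceding lemma. Re-applying the rule then yields the desired judgement over $\ctx$. I do not expect a genuine obstacle; the argument is entirely mechanical, and the main thing to get right is the bookkeeping — identifying precisely which premises extend the context, appealing to \rulename{Sub-Refl} at exactly those points, and remembering that \rulename{Hand} contributes several such premises at once.
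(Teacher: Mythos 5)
Your proof is correct and is exactly the argument the paper has in mind: the paper dispatches this lemma with ``routine induction on the derivation of the typing judgement,'' and your write-up fills in the same induction, correctly isolating \rulename{Var} (plus one use of \rulename{SubExpr}) as the only case where the hypothesis is used and handling the context-extending rules via reflexivity of $\le$.
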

\begin{proof}
  The proof proceeds by a routine induction on the derivation of the typing judgement.
\end{proof}

To relate monomorphic typing judgements of core \Eff to polymorphic inference judgements,
we first need to substitute away any variables from the polymorphic context $\pctx$.
For a polymorphic context $\pctx = (x_j \T \fra{F_j} A_j \while \cstr_j)_{j = 1}^n$
and expressions $(e_j)_{j = 1}^n$, we write $\ctx \ent (e_j)_j \models \pctx$
if for all $j = 1, \dots, n$, we have
$\ctx; (x_i \T \fra{F_i} A_i \while \cstr_i)_{i = 1}^{j - 1} \ent[F_j] e_j \T A_j \while \cstr_j$.
In this case, we define the expression $e[e_j / x_j]_j \defeq e[e_1 / x_1][e_2 / x_2] \cdots[e_n / x_n]$,
and the computation $c[e_j / x_j]_j \defeq c[e_1 / x_1][e_2 / x_2] \cdots[e_n / x_n]$.
Note that because polymorphic definitions may build on one another,
we need to use nested, and not simultaneous substitution.

\begin{prop}[Soundness]
\label{prop:soundness}
\hfill
\begin{itemize}
\item
  Assume that $\ctx; \pctx \ent[F] e \T A \while \cstr$ holds and take
  any expressions $(e_j)_j$ such that
  $\ctx \ent (e_j)_j \models \pctx$ holds
  and we have $\sol_j(\ctx) \ent e_j[e_i / x_j]_{i = 1}^{j - 1} \T \sol_j(A_j)$
  for any $\sol_j \models \cstr_j$.
  Then, for any solution $\sol \models \cstr$,
  we also have $\sol(\ctx) \ent e[e_j / x_j]_j \T \sol(A)$.
\item
  Assume that $\ctx; \pctx \ent[F] c \T \C \while \cstr$ holds and take
  any expressions $(e_j)_j$ such that
  $\ctx \ent (e_j)_j \models \pctx$ holds
  and we have $\sol_j(\ctx) \ent e_j[e_i / x_j]_{i = 1}^{j - 1} \T \sol_j(A_j)$
  for any $\sol_j \models \cstr_j$.
  Then, for any solution $\sol \models \cstr$,
  we also have $\sol(\ctx) \ent c[e_j / x_j]_j \T \sol(\C)$.
\end{itemize}
\end{prop}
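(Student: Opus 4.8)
The plan is to prove both clauses simultaneously by induction on the structure of the inference derivation --- equivalently, on the term --- which is legitimate because the rules of Figure~\ref{fig:constraints} are syntax-directed, one per construct. In every case I fix a solution $\sol \models \cstr$, apply the induction hypotheses to the subderivations (enlarging the list $(e_j)_j$ and the context exactly when the rule binds a variable), and rebuild the matching monomorphic judgement of Figure~\ref{fig:typing}. The subtyping constraints recorded by an inference rule are absorbed by the subsumption rules \rulename{SubExpr} and \rulename{SubComp}: whenever the rule emitted a constraint $A \le A'$ (or $\C \le \C'$), then $\sol(A) \le \sol(A')$ holds because $\sol \models \cstr$, so the induction hypothesis followed by subsumption gives the subterm the required closed type; the fresh type, region and dirt parameters a rule introduces (the $\alpha, \drt$ of \rulename{Cstr-App}, \rulename{Cstr-With}, etc.) are sent by $\sol$ to arbitrary closed data, which is precisely the freedom the corresponding typing rule has; and the substitution $e[e_j/x_j]_j$ commutes with the term formers since bound variables are kept fresh. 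The base case \rulename{Cstr-Var} is immediate, as $x$ is untouched by the substitution and $(x \T \sol(A)) \in \sol(\ctx)$, so \rulename{Var} applies; the other leaf rules and the congruence cases (\rulename{Cstr-Fun}, \rulename{Cstr-App}, \rulename{Cstr-IfThenElse}, \rulename{Cstr-Op}, \rulename{Cstr-Let}, \rulename{Cstr-LetRec}, \rulename{Cstr-With}, and the rest) are all of this routine kind.

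The two cases that genuinely use the polymorphic bookkeeping are \rulename{Cstr-PolyVar} and \rulename{Cstr-LetVal}. For \rulename{Cstr-PolyVar}, $\big(x \T \fra{F} A \while \cstr\big) \in \pctx$, and the hypothesis $\ctx \ent (e_j)_j \models \pctx$ supplies the defining expression $e_j$ of $x = x_j$ together with (by the standing assumption on $(e_j)_j$) the fact that $\sol_j(\ctx) \ent e_j[e_i/x_i]_{i<j} \T \sol_j(A_j)$ for every $\sol_j \models \cstr_j$; since the rule uses a fresh renaming of the bound parameters $F$, the restriction of $\sol$ to those parameters transports back to such a $\sol_j$ with $\sol_j(A_j) = \sol(A)$, and by the definition of the nested substitution $x[e_j/x_j]_j$ equals $e_j[e_i/x_i]_{i<j}$, so this is exactly the desired conclusion. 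For \rulename{Cstr-LetVal}, the induction hypothesis applied to the subderivation for $e$ states that $\sol'(\ctx) \ent e[e_j/x_j]_j \T \sol'(A)$ for every $\sol' \models \cstr_1$, which is precisely the extra hypothesis needed to legitimately extend $(e_j)_j$ by $e$ and $\pctx$ by $\big(x \T \fra{F_1} A \while \cstr_1\big)$; feeding this to the induction hypothesis for $c$ delivers $\sol(\ctx) \ent \big(c[e_j/x_j]_j\big)\big[(e[e_j/x_j]_j)/x\big] \T \sol(\C)$, which together with the typing of $e[e_j/x_j]_j$ coming from the first induction hypothesis is exactly what \rulename{LetVal} needs to reconstruct the judgement for $(\letvalin{x=e}c)[e_j/x_j]_j$.

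The case \rulename{Cstr-Hand} requires the most care, because the typing rule \rulename{Hand} has several interlocking premises. With the incoming type $\C$ and outgoing type $\D$ as set up in the rule, the induction hypothesis applied to the value case and to each operation case yields, for every $\sol \models \cstr$, typings of $c_v$ and of each $c_i$ in the contexts prescribed by \rulename{Hand} but with types $\sol(\D_v)$ and $\sol(\D_i)$. The constraints $\D_v \le \D$ and $\D_i \le \D$ let \rulename{SubComp} raise all of these to the common outgoing type $\sol(\D)$; the constraints $A_i \le E^{\rgn*_i}$ give, via \rulename{SubExpr}, the handled instance expression the effect type $E^{\sol(\rgn*_i)}$, where $\sol(\rgn*_i)$ is a genuine non-empty region since $\sol$ sends inhabited parameters to non-empty regions; and the remaining constraints --- one of the form $\rgn \le \rgn' \cup \uniq{\op = \op_i}{\rgn*_i}$ per operation symbol, together with $\drt_\text{in} \le \drt_\text{out}$ --- translate under $\sol$ exactly into the side condition $\prms_\Drt$, because the singleton union $\cupdot$ acts on the closed regions $\sol(\rgn*_i)$ just as it does on the parameters $\rgn*_i$, so that each $\hash{\inst}{\op}$ in the incoming dirt lands either in the outgoing dirt or in $\varuniq{\op = \op_i}{\sol(\rgn*_i)}$. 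I expect this last matching --- the row-style region constraints against $\prms_\Drt$ --- to be the main obstacle to getting the bookkeeping precisely right; the polymorphic cases are conceptually the subtler ones, but they are handled uniformly once the extra hypothesis on $(e_j)_j$ is threaded through the induction as above.
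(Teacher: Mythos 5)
Your proposal is correct and follows essentially the same route as the paper's proof: a mutual induction on the (syntax-directed) inference derivation in which subsumption absorbs the emitted subtyping constraints, the extra hypothesis on $(e_j)_j$ is threaded through \rulename{Cstr-PolyVar} and \rulename{Cstr-LetVal} exactly as you describe, and the \rulename{Cstr-Hand} region and dirt constraints are matched against the side condition $\prms_\Drt$ by the same case split on whether the operation symbol is listed in the handler. If anything, your treatment of the parameter renaming in \rulename{Cstr-PolyVar} is slightly more explicit than the paper's one-line dismissal of that case.
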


\begin{proof}
  We proceed by a mutual induction on the derivation of inference judgements.
  In all cases, we use the same labels as in the considered rule.
  Also, as we never need the set of fresh parameters $F$, we do not write it.
  Take any suitable $(e_j)_j$ and any $\sol$ that satisfies the set of constraints in the conclusion,
  and consider the case when the last rule used in the derivation is:
  \begin{description}
    \item[\rulename{Cstr-PolyVar}]
      The case for a variable $x_i \in \pctx$
      is immediate because we can use the assumption made for $e_i = x_i[e_j / x_j]_j$.
    \item[\rulename{Cstr-Inst}]
      Since $\sol \models \inst \in \rgn*$,
      we have $\inst \in \sol(\rgn*)$.
      Because $\inst[e_j / x_j]_j = \inst$, we can use \rulename{Inst} to get
        $\sol(\ctx) \ent \inst[e_j / x_j]_j \T \sol(E^{\rgn*})$.
    \item[\rulename{Cstr-Hand}]
      Let $A \E \Drt = \sol(\C)$ and $B \E \Drt' = \sol(\D)$.
      By assumption, we have $\sol \models \cstr_v$, so we get $\sol(\ctx), x \T \sol(\alpha_\text{in}) \ent c_v[e_j / x_j]_j \T \sol(\D_v)$
      by the induction hypothesis.
      Since $\sol \models \D_v \le \D$, we further get $\sol(\ctx), x \T A \ent c_v[e_j / x_j]_j \T B \E \Drt'$.
      Similarly, we get appropriate results for each~$\Psi_i$.

      Next, for each $\op \in \ops$, let $\Rgn_\text{in}^\op = \sol(\rgn_\text{in}^\op)$ and $\Rgn_\text{out}^\op = \sol(\rgn_\text{out}^\op)$.
      Now, take any $\hash{\inst}{\op} \in \Drt$.
      Then, if $\op \in \ops$, we have $\inst \in \Rgn_\text{in}^\op$.
      Since $\sol \models \rgn_\text{in}^\op \le \rgn_\text{out}^\op \cup \uniq{\op = \op_i}{\rgn*_i}$,
      we either have $\inst \in \Rgn_\text{out}^\op$ hence $\hash{\inst}{\op} \in \Drt'$,
      or $\inst \in \uniq{\op = \op_i}{\Rgn_i}$.
      If $\op \not\in \ops$, then $\hash{\inst}{\op}$ must be in $\sol(\drt_{\text{in}})$,
      thus also in $\sol(\drt_\text{out}) \subseteq \Drt'$.
      We can then conclude by applying \rulename{Hand}.
    \item[\rulename{Cstr-Op}]
      As we have $\sol \models \cstr_1$, $\sol \models \cstr_2$, and $\sol \models \cstr$, we get
        $\sol(\ctx) \ent e_1[e_j / x_j]_j \T \sol(A_1)$,
        $\sol(\ctx) \ent e_2[e_j / x_j]_j \T \sol(A_2)$,
      and
        $\sol(\ctx), y \T B^\op \ent c[e_j / x_j]_j \T \sol(A \E \Drt)$
      by the induction hypothesis.

      Next, because $\sol \models A_1 \le E^{\rgn*}$ and $\sol \models A_2 \le A^\op$,
        we can use \rulename{SubExpr} to get
        $\sol(\ctx) \ent e_1[e_j / x_j]_j \T E^{\sol(\rgn*)}$
      and
        $\sol(\ctx) \ent e_2[e_j / x_j]_j \T A^\op$.
      Then, $\sol \models \Drt \le \set{\op \T \rgn \mid \drt}$,
      and by \rulename{SubComp}, we get
        $\sol(\ctx), y \T B^\op \ent c[e_j / x_j]_j \T \sol(A \E \set{\op \T \rgn \mid \drt})$.

      Finally, since $\sol(\rgn*) \subseteq \sol(\rgn)$, we have
        $\hash{\inst}{\op} \in \sol(\set{\op \T \rgn \mid \drt})$
      for any $\inst \in \sol(\rgn*)$, so we may use \rulename{Op} to conclude.
    \item[\rulename{Cstr-LetVal}]
      By the induction hypothesis for $e$, we have that the terms $e_1, \dots, e_n, e$ satisfy
      the conditions of the induction hypothesis for $c$,
      thus $\sol(\ctx) \ent (c[e / x])[e_j / x_j]_j \T \sol(\C)$ holds,
      and so, we have $\sol(\ctx) \ent (\letvalin{x = e} c)[e_j / x_j]_j \T \sol(\C)$ by \rulename{LetVal}.
    \item[\rulename{Cstr-With}]
      As $\sol \models \cstr_1$ and $\sol \models \cstr_2$,
      we can use induction to get
        $\sol(\ctx) \ent e[e_j / x_j]_j \T \sol(A)$
      and 
        $\sol(\ctx) \ent c[e_j / x_j]_j \T \sol(\C)$.
      Next, we have
        $\sol \models A \le (\C \hto \alpha \E \drt)$,
      thus we may use \rulename{SubExpr} and get
        $\sol(\ctx) \ent e[e_j / x_j]_j \T \sol(\C) \hto \sol(\alpha \E \drt)$.
      So by \rulename{With}, we get
        $\sol(\ctx) \ent (\withhandle{e}{c})[e_j / x_j]_j \T \sol(\alpha \E \drt)$.
  \end{description}
  In all other cases, the proof proceeds routinely.
\end{proof}

\begin{lem}[Weakening]
\label{lem:weakening}
\hfill
\begin{itemize}
\item
  If $\ctx; \pctx \ent[F] e \T A \while \cstr$ holds,
  so does $\ctx; \pctx, (x \T \fra{F'} A' \while \cstr') \ent[F] e \T A \while \cstr$
  for any $x$ that does not appear in $\ctx$, $\pctx$ or $e$.
\item
  If $\ctx; \pctx \ent[F] c \T \C \while \cstr$ holds,
  so does $\ctx; \pctx, (x \T \fra{F'} A' \while \cstr') \ent[F] c \T \C \while \cstr$
  for any $x$ that does not appear in $\ctx$, $\pctx$ or $c$.
\end{itemize}
\end{lem}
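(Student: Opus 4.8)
The plan is to prove both parts simultaneously by a mutual induction on the structure of the inference derivation $\ctx; \pctx \ent[F] e \T A \while \cstr$ (resp.\ for computations). The key observation is that the polymorphic context $\pctx$ is consulted in exactly one rule, \rulename{Cstr-PolyVar}, and extended in exactly one rule, \rulename{Cstr-LetVal}; every other rule in Figure~\ref{fig:constraints} (and the rule \rulename{Cstr-Hand} in the text) merely threads $\pctx$ through its premises unchanged. So for every rule other than these two, I would apply the induction hypothesis to each premise to obtain the weakened premise, and then reassemble the conclusion with literally the same $A$ (or $\C$), the same $F$, and the same $\cstr$, now over the larger $\pctx, (x \T \fra{F'} A' \while \cstr')$. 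The recursive premises that bind new term variables only extend the ordinary context $\ctx$ (as in \rulename{Cstr-Fun}, \rulename{Cstr-Op}, \rulename{Cstr-Let}, \rulename{Cstr-LetRec}), or extend $\pctx$ with the let-bound variable; by the Barendregt convention the bound variables may be taken distinct from $x$, so $x$ remains fresh for all subderivations, and the induction hypothesis applies verbatim.

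For the two interesting rules: in \rulename{Cstr-PolyVar}, the variable in question is some $x' \ne x$ (since $x$ does not occur in the term), and its assignment $(x' \T \fra{F''} A'' \while \cstr'') \in \pctx$ is still present in $\pctx, (x \T \fra{F'} A' \while \cstr')$, so the rule fires again with the same output. In \rulename{Cstr-LetVal}, I would apply the induction hypothesis to the premise for $c$, inserting $x$ into the polymorphic context that has already been extended with the let-bound variable; since $\pctx$ is just a collection of unique assignments, the relative order of $x$ and the let-bound entry is immaterial for the inference judgment, so no separate exchange lemma is needed.

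The step I expect to require the most care is purely bookkeeping: tracking the freshness side conditions so that the added variable $x$ is genuinely distinct from every bound variable introduced along the derivation, and so that adjoining $x$'s type scheme $\fra{F'} A' \while \cstr'$ does not disturb the freshness conventions on the set $F$ of fresh parameters — this is harmless because $x$ never occurs, hence its scheme is never instantiated and its parameters $F'$ may be chosen apart from $F$. Once these conventions are fixed, every case is mechanical, which is why the proof is only sketched in the appendix.
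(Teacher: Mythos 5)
Your proposal is correct and matches the paper's proof, which is given only as ``routine induction on the derivation of inference judgements''; your case analysis (all rules thread $\pctx$ except \rulename{Cstr-PolyVar} and \rulename{Cstr-LetVal}) is exactly the elaboration that one-liner is standing in for. The only wrinkle is the \rulename{Cstr-LetVal} case, where the induction hypothesis places $x$ after the let-bound entry and the paper's formalization does track order in $\pctx$ (it proves Lemma~\ref{lem:exchange} separately and invokes it in the analogous spot of Lemma~\ref{lem:polyvar}); your appeal to order-irrelevance is defensible but the cleanest patch is to either strengthen the statement to insertion at an arbitrary position or cite exchange.
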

\begin{proof}
The proof proceeds by routine induction on the derivation of inference judgements.
\end{proof}

\begin{lem}[Exchange]
\label{lem:exchange}
\hfill
\begin{itemize}
\item
  If $\ctx; \pctx, (x_1 \T \fra{F_1} A_1 \while \cstr_1), (x_2 \T \fra{F_2} A_2 \while \cstr_2) \ent[F] e \T A \while \cstr$ holds,
  then so does $\ctx; \pctx, (x_2 \T \fra{F_2} A_2 \while \cstr_2), (x_1 \T \fra{F_1} A_1 \while \cstr_1) \ent[F] e \T A \while \cstr$.
\item
  If $\ctx; \pctx, (x_1 \T \fra{F_1} A_1 \while \cstr_1), (x_2 \T \fra{F_2} A_2 \while \cstr_2) \ent[F] c \T \C \while \cstr$ holds,
  then so does $\ctx; \pctx, (x_2 \T \fra{F_2} A_2 \while \cstr_2), (x_1 \T \fra{F_1} A_1 \while \cstr_1) \ent[F] c \T \C \while \cstr$.
\end{itemize}
\end{lem}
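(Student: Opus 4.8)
The plan is to prove a mild strengthening of the statement by structural induction on the inference derivation, since the bare two-entry form does not quite close under the rule \rulename{Cstr-LetVal}. The key point is that the inference rules of Figure~\ref{fig:constraints}, together with \rulename{Cstr-Hand}, touch the polymorphic context $\pctx$ in only two ways: the rule \rulename{Cstr-PolyVar} consults $\pctx$ purely through a membership test $(x \T \fra{F} A \while \cstr) \in \pctx$, and \rulename{Cstr-LetVal} extends $\pctx$ by appending one fresh binding at its end; every other rule passes $\pctx$ unchanged to its premises, modifying only the ordinary context $\ctx$. Because a type scheme $\fra{F} A \while \cstr$ contains no term variables, no entry of $\pctx$ can refer to another, and membership is insensitive to order. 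Hence the set of derivable judgements should not depend on the order of the entries of $\pctx$ at all.

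Concretely, I would establish the following: whenever $\pctx'$ is a permutation of $\pctx$ (the same variables assigned the same type schemes, in a possibly different order), $\ctx; \pctx \ent[F] e \T A \while \cstr$ holds iff $\ctx; \pctx' \ent[F] e \T A \while \cstr$ holds, and likewise for computations. Lemma~\ref{lem:exchange} is then the special case in which $\pctx'$ differs from $\pctx$ only by a transposition of two adjacent entries. The induction is routine. For every rule other than \rulename{Cstr-PolyVar} and \rulename{Cstr-LetVal}, each premise carries the same polymorphic context, so I apply the induction hypothesis to each subderivation with $\pctx$ replaced by $\pctx'$ and reassemble the conclusion with the same rule; the sets $F$, the constraint sets, and all freshness side conditions are unaffected. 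For \rulename{Cstr-PolyVar}, the membership premise $(x \T \fra{F} A \while \cstr) \in \pctx'$ holds precisely because $(x \T \fra{F} A \while \cstr) \in \pctx$ and $\pctx'$ has the same entries. For \rulename{Cstr-LetVal}, the first premise is over $\pctx$ and the second over $\pctx, (x \T \fra{F_1} A \while \cstr_1)$; the induction hypothesis handles the first, and since $\pctx', (x \T \fra{F_1} A \while \cstr_1)$ is again a permutation of $\pctx, (x \T \fra{F_1} A \while \cstr_1)$, it handles the second as well, after which \rulename{Cstr-LetVal} rebuilds the goal.

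I expect no genuine difficulty here: the whole argument is bookkeeping of the same flavour as the Weakening lemma above. The only thing worth getting right is the choice of the permutation-closed induction hypothesis in place of the literal two-entry statement, which is exactly what lets the \rulename{Cstr-LetVal} step go through, since there the two entries being exchanged are no longer adjacent to the end of the context.
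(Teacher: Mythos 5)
Your proof is correct and follows the same route as the paper, which simply states that the result follows by routine induction on the derivation of the inference judgements. The strengthening of the induction hypothesis to arbitrary permutations of $\pctx$ (so that the \rulename{Cstr-LetVal} case goes through after its context is extended) is a sensible piece of bookkeeping that the paper leaves implicit.
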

\begin{proof}
The proof proceeds by routine induction on the derivation of inference judgements.
\end{proof}

\begin{lem}
\label{lem:polyvar}
\hfill
\begin{itemize}
\item
  For any $e$, we have that
  if $\ctx; \pctx \ent[F] e[e' / x] \T A \while \cstr$ holds
  for some $\ctx; \pctx \ent[F'] e' \T A' \while \cstr'$,
  so does $\ctx; \pctx, (x \T \fra{F'} A' \while \cstr') \ent[F] e \T A \while \cstr$.
\item
  For any $c$, we have that
  if $\ctx; \pctx \ent[F] c[e' / x] \T \C \while \cstr$ holds
  for some $\ctx; \pctx \ent[F'] e' \T A' \while \cstr'$,
  so does $\ctx; \pctx, (x \T \fra{F'} A' \while \cstr') \ent[F] c \T \C \while \cstr$.
\end{itemize}
\end{lem}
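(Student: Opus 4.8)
The plan is to prove both statements simultaneously by structural induction on the term $e$ (respectively $c$), exploiting the fact that the inference rules of Figure~\ref{fig:constraints}, together with \rulename{Cstr-Hand}, are syntax-directed: the last rule in the given derivation of $\ctx; \pctx \ent[F] e[e'/x] \T A \while \cstr$ is fixed by the outermost term former of $e$. In each case I would inspect that derivation, apply the induction hypothesis to the immediate subterms, and reassemble the derivation with $x$ now bound polymorphically in $\pctx$ by the scheme $\fra{F'} A' \while \cstr'$, keeping $F$, $A$ and $\cstr$ unchanged.

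The one interesting base case is $e = x$: here $e[e'/x] = e'$, so the given derivation is itself a derivation of $e'$, which by uniqueness of inferred types and constraints up to renaming of fresh parameters agrees, after such a renaming, with the assumed derivation $\ctx; \pctx \ent[F'] e' \T A' \while \cstr'$; then \rulename{Cstr-PolyVar} applied to $(x \T \fra{F'} A' \while \cstr') \in \pctx$, instantiating its bound parameters precisely to the ones occurring in $F$, yields $\ctx; \pctx, (x \T \fra{F'} A' \while \cstr') \ent[F] x \T A \while \cstr$. If $e$ is another variable or a constant, the substitution does nothing and the very same rule still applies, since extending $\pctx$ with $x$ disturbs neither the lookup of other variables nor the rules for constants.

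For the inductive step I would split on whether the outermost former binds a variable. For the non-binding formers ($e_1\,e_2$, $\succ e$, $\iszero e$, an operation call, $\withhandle{e}{c}$, $\ifthenelse{e}{c_1}{c_2}$, and so on) the substitution commutes with the former, so I apply the induction hypothesis to each immediate subterm and rebuild by the same rule. For the term formers that bind a variable $y$ (\rulename{Cstr-Fun}, \rulename{Cstr-Let}, \rulename{Cstr-LetRec}, \rulename{Cstr-Hand}, \rulename{Cstr-LetVal}) I first $\alpha$-rename so that $y \ne x$ and $y$ is not free in $e'$; then $e'$ still type-checks with the same $F', A', \cstr'$ in the enlarged context, and in the case \rulename{Cstr-LetVal}, where it is $\pctx$ that gets enlarged by the let-bound scheme, this uses Weakening (Lemma~\ref{lem:weakening}); the induction hypothesis then applies to the subcomputation, and for \rulename{Cstr-LetVal} I additionally invoke Exchange (Lemma~\ref{lem:exchange}) to place the scheme for $x$ and the scheme for $y$ in the order demanded by the rule before applying it.

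I expect the main obstacle to be the bookkeeping of fresh parameters rather than any individual case: inferring $e[e'/x]$ effectively inlines a fresh copy of the derivation of $e'$ at each occurrence of $x$, whereas in the reconstructed derivation every such occurrence is handled by \rulename{Cstr-PolyVar}, which itself introduces a fresh instance of $\fra{F'} A' \while \cstr'$; making these two families of fresh parameters coincide so that the conclusion genuinely carries the same set $F$ is exactly what the standing ``up to renaming of fresh parameters'' convention is for, and it also forces one to check in the binding cases that the implicit disjointness and freshness conditions on the parameters in $F$ and on the bound variable survive when a subderivation is moved from $e[e'/x]$ to $e$.
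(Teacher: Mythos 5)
Your proposal is correct and follows essentially the same route as the paper's proof: structural induction on the term, with the key case $e = x$ resolved by uniqueness of the inferred type and constraints up to renaming of fresh parameters followed by \rulename{Cstr-PolyVar}, Weakening for the cases where $x$ does not occur, and Exchange to reorder the polymorphic context in the \rulename{Cstr-LetVal} case. No gaps.
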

\begin{proof}
  We proceed by a mutual induction on the structure of the term:
  \begin{itemize}
     \item If $e$ is some variable $y \ne x$, we have that $e[e' / x] = e$,
     hence $\ctx; \pctx \ent[F] e \T A \while \cstr$.
     We conclude by using Lemma~\ref{lem:weakening}.
     \item If $e$ is the variable $x$,
     we also have that $\ctx; \pctx \ent[F] e' \T A \while \cstr$ holds.
     By induction on the derivation of inference judgements,
     we can show that the inferred types and constraints are unique up to renaming.
     Thus $(\fra{F'} A' \while \cstr') = (\fra{F} A \while \cstr)$ up to $\alpha$-equivalence,
     so $\ctx; \pctx, (x \T \fra{F'} A' \while \cstr') \ent[F] x \T A \while \cstr$ holds.
     \item If $e$ is $\fun{x'} c$, the only way of obtaining the inference judgement
     is by using \rulename{Cstr-Fun} to get some
     $\ctx; \pctx \ent[F, \alpha] \fun{x} c[e' / x] \T \alpha \to \C \while \cstr$.
     Hence, we have that
     $\ctx, x \T \alpha; \pctx \ent[F] c[e' / x] \T \C \while \cstr$,
     so we get
     $\ctx, x \T \alpha; \pctx, (x \T \fra{F'} A' \while \cstr') \ent[F] c \T \C \while \cstr$
     by the induction hypothesis.
     Using \rulename{Cstr-Fun} we obtain the desired conclusion
     $\ctx; \pctx, (x \T \fra{F'} A' \while \cstr') \ent[F, \alpha] \fun{x} c \T \alpha \to \C \while \cstr$.
     \item If $e$ is $\letvalin{x' = e''} c$, the only rule that applies is \rulename{Cstr-LetVal}.
     We proceed just as in the previous case by using the induction hypothesis and reapplying the rule \rulename{Cstr-LetVal},
     except that we also need to use Lemma~\ref{lem:exchange} to obtain the proper ordering of variables in $\pctx$.
  \end{itemize}
  In all other cases that do not touch $\pctx$,
  the proof proceeds routinely just like for functions.
\end{proof}

\begin{prop}[Completeness]
\label{prop:completeness}
\hfill
\begin{itemize}
\item
  For any polymorphic context $\pctx$, closed substitution $\sol$, expressions $\sol(\ctx) \ent (e_j)_j \models \pctx$ and
  any $\sol(\ctx) \ent e[e_j / x_j]_j \T A$,
  we have $\pctx; \ctx \ent[F] e \T A' \while \cstr$ and
  there exists a solution $\sol' \models \cstr$, which extends $\sol$ to $F$,
  such that $\sol(A') \le A$.
\item
  For any polymorphic context $\pctx$, closed substitution $\sol$, expressions $\sol(\ctx) \ent (e_j)_j \models \pctx$ and
  any $\sol(\ctx) \ent c[e_j / x_j]_j \T \C$,
  we have $\pctx; \ctx \ent[F] c \T \C' \while \cstr$ and
  there exists a solution $\sol' \models \cstr$, which extends $\sol$ to $F$,
  such that $\sol(\C') \le \C$.
\end{itemize}
\end{prop}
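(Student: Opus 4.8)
The plan is to prove the two items simultaneously by induction on the derivation of the typing judgement for the \emph{substituted} term, $\sol(\ctx) \ent e[e_j/x_j]_j \T A$ (respectively $\sol(\ctx) \ent c[e_j/x_j]_j \T \C$), with a secondary induction on the length of $\pctx$ reserved for the case of a polymorphic variable. The subsumption rules \rulename{SubExpr} and \rulename{SubComp} are handled first and once: the term does not change, so the induction hypothesis already supplies an inference judgement together with an extension $\sol'$ of $\sol$ with $\sol'(A') \le A$, and from $A \le A''$ transitivity of subtyping (the admissible rule \rulename{Sub-Trans}) gives $\sol'(A') \le A''$. Since the inference rules of Figure~\ref{fig:constraints} carry no subsumption, it then suffices to match each of them against the syntax-directed typing rule at the head of the term.

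For each such rule the argument follows one recipe. Apply the induction hypothesis to the subderivations to obtain inference judgements for the immediate subterms, each with its own extension of $\sol$; because the inference rules keep the sets of fresh parameters of distinct subderivations disjoint, these extensions agree wherever their domains overlap, so their union is again a well-defined extension of $\sol$. Extend it once more to the handful of fresh parameters the matching inference rule introduces, sending them to the concrete types, regions and dirts that actually occur in the typing derivation (for instance, in \rulename{Cstr-App} the fresh $\alpha \E \drt$ goes to the concrete computation type $\C$, and in \rulename{Cstr-Op} the parameters $\rgn$ and $\rgn*$ go to the concrete region $\Rgn$). Finally verify, one at a time, that the newly added constraints hold under this $\sol'$; these checks use precisely the correctly oriented inequalities baked into the inference rules together with the variance of the type constructors --- e.g.\ in \rulename{Cstr-App} the constraint $A_1 \le (A_2 \to[\drt] \alpha)$ follows from $\sol'(A_1) \le A \to \C$, $\sol'(A_2) \le A$ and contravariance of $\to$ in its domain, while in \rulename{Cstr-Op} the side condition $\fra{\inst \in \Rgn} \hash{\inst}{\op} \in \Drt$ of \rulename{Op} yields $\rgn* \le \rgn$ and $\Drt \le \set{\op \T \rgn \mid \drt}$.

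Two cases need a bit more. For $\letvalin{x = e} c$ the rule \rulename{LetVal} provides subderivations for $e$ and for $c[e/x]$; I would apply the induction hypothesis to both (the second with $\pctx$ unchanged), then use Lemma~\ref{lem:polyvar} to re-fold the inference judgement for $c[e/x]$ into one for $c$ in the enlarged polymorphic context $\pctx, (x \T \fra{F_1} A' \while \cstr_1)$, and conclude by \rulename{Cstr-LetVal}. For a polymorphic variable $e = x_j$, the substituted term coincides with the closed term obtained from $e_j$ by substituting its own polymorphic predecessors, so the given typing derivation is already a typing derivation for (the unfolding of) $e_j$; applying the proposition to the strictly shorter polymorphic context $(x_i)_{i<j}$ and the prefix $(e_i)_{i<j}$, and using that inferred type schemes are unique up to renaming of fresh parameters, produces exactly a fresh renaming of the scheme $\fra{F_j} A_j \while \cstr_j$ recorded in $\pctx$ --- which is what \rulename{Cstr-PolyVar} returns --- and composing with the renaming gives the required solution.

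The main obstacle is the handler case \rulename{Cstr-Hand}. From a typing derivation ending in \rulename{Hand} with incoming type $A \E \Drt$ and outgoing type $B \E \Drt'$, I would set $\alpha_\text{in} \mapsto A$ and $\alpha_\text{out} \mapsto B$, send $\rgn_\text{in}^\op \mapsto \set{\inst \mid \hash{\inst}{\op} \in \Drt}$ and $\rgn_\text{out}^\op \mapsto \set{\inst \mid \hash{\inst}{\op} \in \Drt'}$ for each $\op \in \ops$, send $\drt_\text{in}$ and $\drt_\text{out}$ to the parts of $\Drt$ and $\Drt'$ carried by operation symbols outside $\ops$, and send each $\rgn*_i$ to the region of the effect type that \rulename{Hand} assigns to $e_i$, and only then apply the induction hypothesis to the value case and to each $e_i$ and $c_i$. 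The delicate step is reconciling the premise $\prms_\Drt$ of \rulename{Hand}, which is phrased with the singleton union $\cupdot$, with the region constraints $\rgn_\text{in}^\op \le \rgn_\text{out}^\op \cup \uniq{\op = \op_i}{\rgn*_i}$ and the dirt constraint $\drt_\text{in} \le \drt_\text{out}$ of \rulename{Cstr-Hand}: by $\prms_\Drt$, every operation of the incoming dirt is either already in the outgoing dirt or surely caught by an operation case whose instance region is a singleton, and under the above choice of $\sol'$ this is exactly membership in the respective right-hand sides. The remaining constraints --- $\D_v \le \D$, the $\D_i \le \D$, and the $A_i \le E^{\rgn*_i}$ --- are immediate from the induction hypotheses and subtyping, and \rulename{Cstr-Hand} then delivers the inference judgement with $\sol'$ carrying its conclusion type onto exactly $\C \hto \D$.
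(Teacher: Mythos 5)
Your proposal is correct and follows essentially the same route as the paper's proof: a mutual induction on the typing derivation of the substituted term, building the extended solution by unioning the (disjointly supported) extensions from the subderivations and sending the freshly introduced parameters to the concrete types, regions and dirts of the typing derivation, with Lemma~\ref{lem:polyvar} discharging the \rulename{LetVal} case, transitivity discharging subsumption, and the same explicit substitution in the handler case. The only difference is that you spell out the polymorphic-variable case via a secondary induction on the length of $\pctx$ and uniqueness of inferred schemes up to renaming, which the paper leaves among the ``routine'' cases.
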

\begin{proof}
  We again proceed by a mutual induction,
  this time on the derivation of typing judgements.
  In all cases, we again use the same labels as in the considered rule.

  Note that no parameters from $F$ can appear in $\ctx$,
  and we may safely assume that $\sol$ is undefined on $F$.
  Furthermore, we may safely compose substitutions that are defined on disjoint sets of parameters.
  For example, by taking $\sol_1$ and $\sol_2$ that
  extend $\sol$ to disjoint sets $F_1$ and $F_2$, respectively,
  we may uniquely define $\sol' = \sol \cup \sol_1 \cup \sol_2$,
  which extends $\sol$ to $F_1 \cup F_2$.

  Consider the case when the last rule used in the derivation is:
  \begin{description}
    \item[\rulename{Inst}]
      Assume that
        $\sol(\ctx) \ent \inst \T E^{\Rgn}$
      for some $\inst \in \Rgn$.
      By \rulename{Cstr-Inst}, we first get
        $\ctx \ent[\rgn*] \inst \T E^{\rgn*} \while \inst \in \rgn*$.
      Next, we extend $\sol$ to $\rgn*$ by defining $\sol' = \extend{\sol}{\rgn* \mapsto \Rgn}$.
      By assumption, we have $\inst \in \Rgn$,
      so $\sol' \models \inst \in \rgn*$ by definition.
      Finally, we get $\sol'(E^{\rgn*}) \le E^{\Rgn}$ by \rulename{Sub-Refl}.
    \item[\rulename{Hand}]
      From the value case and operation cases,
      we get the necessary premises $\prms_v$ and $(\prms_i)_i$ of \rulename{Cstr-Hand},
      and solutions $\sol_v$, $(\sol_i)_i$ and $(\sol_i')_i$ that satisfy the required properties.
      We then define
      \begin{align*}
        \sol = \sol_v \cup \smash{\bigcup_i \sol_i} \cup \smash{\bigcup_i \sol_i'} \cup \Big\{
        &\alpha_\text{in} \mapsto A,
        \alpha_\text{out} \mapsto B,
        \drt_\text{in} \mapsto \set{\hash{\inst}{\op} \in \Drt \mid \op \not\in \ops}, \\
        &\drt_\text{out} \mapsto \set{\hash{\inst}{\op} \in \Drt' \mid \op \not\in \ops},
        (\rgn*_i \mapsto \Rgn_i)_i, \\
        &(\rgn_\text{in}^\op \mapsto \set{\inst \mid \hash{\inst}{\op} \in \Drt})_{\op \in \ops}, \\
        &(\rgn_\text{out}^\op \mapsto \set{\inst \mid \hash{\inst}{\op} \in \Drt'})_{\op \in \ops}
        \Big\}
      \end{align*}
      We routinely check that $\sol$ satisfies all the necessary conditions
      and that we can apply \rulename{Cstr-Hand} to obtain the expected result.
    \item[\rulename{SubExpr}]
      By assumption, we have
        $\sol(\ctx) \ent e[e_j / x_j]_j \T A'$
      for some
        $\sol(\ctx) \ent e[e_j / x_j]_j \T A$ and $A \le A'$.
      By induction hypothesis, we have
        $\ctx \ent[F] e \T A'' \while \cstr$
      and $\sol' \models \cstr$ that extends $\sol$ to $F$
      such that $\sol'(A'') \le A$.
      We may then use \rulename{Sub-Trans} to get $\sol'(A'') \le A'$.
    \item[\rulename{Op}]
      Assume
        $\sol(\ctx) \ent (\call{e_1}{\op}{e_2}{\cont{y}{c}})[e_j / x_j]_j \T A \E \Drt$
      for some
        $\sol(\ctx) \ent e_1[e_j / x_j]_j \T E^R$,
        $\sol(\ctx) \ent e_2[e_j / x_j]_j \T A^\op$,
      and
        $\sol(\ctx), y \T B^\op \ent c[e_j / x_j]_j \T A \E \Drt$.
      By induction hypothesis, we get
        $\ctx \ent[F_1] e_1 \T A_1 \while \cstr_1$,
        $\ctx \ent[F_2] e_2 \T A_2 \while \cstr_2$,
      and
        $\ctx, y \T B^\op \ent[F] c \T \C \while \cstr$,
      together with $\sol_1$, $\sol_2$, and $\sol'$
      that satisfy the required properties.
      Using \rulename{Cstr-Op}, we get
      \begin{align*}
        \ctx &\ent[F_1, F_2, F, \rgn, \rgn*, \drt] \call{e_1}{\op}{e_2}{\cont{y}{c}} \T A \E \set{\op \T \rgn \mid \drt} \\
          &\while \cstr_1, \cstr_2, \cstr, A_1 \le E^{\rgn*}, A_2 \le A^\op, \rgn* \le \rgn, \Drt \le \set{\op \T \rgn \mid \drt}
      \end{align*}
      We extend $\sol$ to all the fresh parameters by
      \[
        \sol'' = \extend{\sol_1 \cup \sol_2 \cup \sol'}{\rgn \mapsto \Rgn, \rgn* \mapsto \Rgn, \drt \mapsto \set{\hash{\inst}{\op'} \in \Drt \mid \op' \ne \op}}
      \]
      We conclude the proof by observing that $\sol''$ satisfies all the necessary conditions.
    \item[\rulename{LetVal}]
      As $\sol(\ctx) \ent (\letvalin{x = e} c)[e_j / x_j]_j \T \C$,
      we get some $\sol(\ctx) \ent (c[e / x])[e_j / x_j]_j \T \C$.
      By induction hypothesis, we get some $\ctx; \pctx \ent[F_1] e \T A \while \cstr_1$
      and $\ctx; \pctx \ent[F_2] c[e / x] \T \C' \while \cstr_2$ together with $\sol \models \cstr_2$
      such that $\sol(\C') \le \C$.
      Then, by Lemma~\ref{lem:polyvar},
      we have that $\ctx; \pctx, (x \T \fra{F_1} A \while \cstr_1) \ent[F_2] c \T \C' \while \cstr_2$,
      so we may conclude by using \rulename{Cstr-LetVal}.
    \item[\rulename{With}]
      As
        $\sol(\ctx) \ent (\withhandle{e}{c})[e_j / x_j]_j \T A \E \Drt$
      we get
        $\sol(\ctx) \ent e[e_j / x_j]_j \T \C \hto A \E \Drt$
      and
        $\sol(\ctx) \ent c[e_j / x_j]_j \T \C$.
      Next, by induction hypothesis, we get
        $\ctx \ent[F_1] e \T A' \while \cstr_1$
      together with $\sol_1 \models \cstr_1$
      that extends $\sol$ to $F_1$,
      such that $\sol_1(A') \le \C \hto A \E \Drt$.
      Similarly, we get
        $\ctx \ent[F_2] c \T \C' \while \cstr_2$
      together with $\sol_2 \models \cstr_2$
      that extends $\sol$ to $F_2$,
      such that $\sol_2(\C') \le \C$.

      By \rulename{Cstr-With}, we get
      \[
        \ctx \ent[F_1, F_2, \alpha, \drt] \withhandle{e}{c} \T \alpha \E \drt
          \while \cstr_1, \cstr_2, A' \le (\C' \hto \alpha \E \drt)
      \]
      We define
      \[
        \sol' = \extend{\sol_1 \cup \sol_2}{\alpha \mapsto A, \drt \mapsto \Drt}
      \]
      which extends $\sol$ to all the fresh parameters.
      Since $\sol'$ extends $\sol_1$ and $\sol_2$,
      we have $\sol' \models \cstr_1$ and $\sol' \models \cstr_2$.
      Furthermore,
      \[
        \sol'(A')
        = \sol_1(A')
        \le \C \hto A \E \Drt
        \le \sol_2(\C') \hto A \E \Drt
        = \sol'(\C' \hto \alpha \E \drt) 
      \]
      hence $\sol'$ satisfies all the necessary conditions.
      Finally, we have $\sol'(\alpha \E \drt) \le A \E \Drt$ by \rulename{Sub-Refl}.
    \item[\rulename{SubComp}]
      The proof proceeds in the same way as in the case of \rulename{SubExpr}.
  \end{description}
  In all other cases, the proof again proceeds routinely.
\end{proof}

\begin{proof}[Proof of Theorem~\ref{thm:completeness}]
Take any $\ctx \ent e \T A'$.
We may then use Proposition~\ref{prop:completeness} for the empty context $\pctx$
and the identical substitution to obtain $\ctx \ent e \T A \while \cstr$
such that $A' \in \types{\cstr}$.
Since $A$ and $\cstr$ are uniquely determined up to a renaming,
we may use the same reasoning for any $A''$ and get $\set{ A'' \mid (\ctx \ent e \T A'') } \subseteq \types{\cstr}$.
The converse follows from Proposition~\ref{prop:soundness}
while the case for computations is exactly the same.
\end{proof}

\begin{proof}[Proof of Lemma~\ref{lem:unify}]
We can define a solution $\sol$ of a unified set of constraints~$\cstr$ as follows.
For all type parameters $\alpha$ and all dirt parameters $\drt$,
we define $\sol(\alpha) = \unitty$ and $\sol(\drt) = \emptyset$.
Then, $\sol$ trivially satisfies all type and dirt constraints in $\cstr$.
In a similar way, we also show that $\sol(\alpha) \approx \sol(\alpha')$ for any $\alpha \approx_\cstr \alpha'$.
In fact, we get a solution whenever we replace all type parameters in the same skeleton with the same type.

For region parameters, we define $\sol(\rgn) = \set{\inst \mid (\inst \in \rgn \cup \uniq{i}{\rgn*_i}) \in \cstr}$. 
Then, $\sol$ satisfies all constraints $(\inst \in \rgn \cup \uniq{i}{\rgn*_i}) \in \cstr$.
Next, take a constraint $(\rgn \le \rgn' \cup \uniq{i \in I}{\rgn*_i})$.
Then $(\inst \in \rgn \cup \uniq{i \in J}{\rgn*_i}) \in \cstr$ implies
$(\inst \le \rgn' \cup \uniq{i \in I \cup J}{\rgn*_i}) \in \cstr$ because $\cstr$ is unified.
Thus, if $\inst \in \sol(\rgn)$ then $\inst \in \sol(\rgn')$,
and so $\sol \models \rgn \le \rgn' \cup \uniq{i}{\rgn*_i}$.

There are, of course, countless other ways of choosing $\sol$.
\end{proof}

\begin{proof}[Proof of Proposition~\ref{prop:unify}]

We see that $\unify$ always terminates because at each recursive call,
we strictly decrease the degree, defined lexicographically according to:
\begin{enumerate}
\item
  the number of skeletons of $\approx_\cstr$,
\item
  the number of all type constructors in $\queue$,
\item
  the total \emph{potential} of all dirt parameters,
  where the potential of a dirt parameter~$\drt^\ops$ is the number
  of all operation symbols mentioned in $\queue$ that do not appear in $\ops$,
\item
  the number of constraints in $\queue$.
\end{enumerate}
Because of occur check, (1) decreases each time we unify any constraint of the form $\alpha \le A$ or $A \le \alpha$.
When decomposing constraints between types of the same form, we keep (1) as it is, but decrease (2).
When we unify a constraint $\alpha \le \alpha'$, (1) decreases if $\alpha$ was not in the same skeleton as $\alpha'$.
If it was, (1--3) remain the same, but (4) decreases.
A similar situation occurs when unifying region constraints or dirt constraints with matching operations.
If we unify dirt with non-matching annotations, we keep type constraints and thus (1--2) as they are,
but ensure that (3) decreases.

To show that unification preserves solutions, we observe that at each recursive call
$\unify(\sol_k; \cstr_k; \queue_k) = \unify(\sol_{k + 1}; \cstr_{k + 1}; \queue_{k + 1})$,
and for each $\sol_k' \models \cstr_k \cup \queue_k$,
there exists $\sol_{k + 1}' \models \cstr_{k + 1} \cup \queue_{k + 1}$
such that $\sol_k' \circ \sol_k = \sol_{k + 1}' \circ \sol_{k + 1}$.
Similarly, we observe that if $\unify(\sol; \cstr; \queue)$ results in a failure,
then $\cstr \cup \queue$ does not have any solutions.

Since we start with $\sol_0 = \mathrm{id}$, $\cstr_0 = \emptyset$ and $\queue_0 = \cstr$
and end with some $\sol_n = \sol'$, $\cstr_n = \cstr'$ and $\queue_n = \emptyset$,
then for each $\sol \models \cstr$, we have some $\sol'' \models \cstr'$ such that $\sol = \sol'' \circ \sol'$.
\end{proof}

\begin{proof}[Proof of Corollary~\ref{cor:unify}]
Let us show the soundness of rule \rulename{Unify-Expr}, because the proof for \rulename{Unify-Comp} is exactly the same.
First, take any $\sol' \models \cstr'$.
By Proposition~\ref{prop:unify}, we know that $\sol' \circ \sol \models \cstr$,
so by the induction hypothesis, we get $(\sol' \circ \sol)(\ctx) \ent e \T (\sol' \circ \sol)(A)$,
which is exactly what we wanted to prove.

For the other direction, take any $\sol'' \models \cstr$.
Again by Proposition~\ref{prop:unify}, there must exist some $\sol' \models \cstr'$
such that $\sol'' = \sol' \circ \sol$.
We then get $\sol'(\sol(\ctx)) \ent e \T \sol'(\sol(A))$ or, equivalently,
$\sol''(\ctx) \ent e \T \sol''(A)$ by the induction hypothesis.
\end{proof}

\begin{proof}[Proof of Proposition~\ref{prop:garbage-collection}]
Again, we shall prove only the soundness of rule \rulename{GC-Expr}, because the proof for \rulename{GC-Comp} is identical.
The soundness of the reverse rule is trivial:
any solution $\sol \models \cstr$ is also a solution of $\gc(\cstr)$ so by induction hypothesis, we get $\sol(\ctx) \ent e \T \sol(A)$.

For the forward direction, the reasoning follows the one sketched in the beginning of Section~\ref{sub:garbage-collection}.
Take any solution $\sol' \models \gc(\cstr)$.
Following~\cite{simonet2003type},
we are going to construct a solution $\sol \models \cstr$
such that $\sol(\alpha^+) \le \sol'(\alpha^+)$ holds for all $\alpha^+ \in P$
and $\sol'(\alpha^-) \le \sol(\alpha^-)$ holds for all $\alpha^- \in N$
(and similarly for region and dirt parameters).
Then, by the induction hypothesis, we get $\sol(\ctx) \ent e \T \sol(A)$.
Since $P$ contains $\pos(A)$ and $N$ contains $\neg(A)$, we have $\sol(A) \le \sol'(A)$,
thus by \rulename{SubExpr} we get $\sol(\ctx) \ent e \T \sol'(A)$.
Conversely, for all $(x_i \T A_i) \in \ctx$, the set $P$ contains $\neg(A_i)$ and $N$ contains $\pos(A_i)$,
so we get $\sol'(A_i) \le \sol(A_i)$.
We may thus use \rulename{SubCtxExpr} and get $\sol'(\ctx) \ent e \T \sol'(A)$.

We construct $\sol \models \cstr$ as follows.
Let us start with the simplest case of dirt parameters.
We define
\newcommand{\mini}[2][3]{\hspace{-#1em}#2\hspace{-#1em}}
\newcommand{\ministack}[2][3]{\mini[#1]{\substack{#2}}}
\[
  \sol(\drt) \defeq \bigcup_{\ministack{(\drt^- \le \drt) \in \cstr \\ \drt^- \in N}} \sol'(\drt^-)
\]
First, for any $\drt^- \in N$,
we implicitly have $(\drt^- \le \drt^-) \in \cstr$,
so $\sol'(\drt^-) \subseteq \sol(\drt^-)$.
Next, for any $\drt^+ \in P$, if have $(\drt^- \le \drt^+) \in \cstr$,
we also have $\drt^- \le \drt^+ \in \gc(\cstr)$ since $\drt^- \in N$.
Because $\sol' \models \gc(\cstr)$, we have $\sol'(\drt^-) \subseteq \sol'(\drt^+)$,
so
\[
  \sol(\drt^+) = \bigcup_{\drt^- \le \drt^+} \sol'(\drt^-) \subseteq \bigcup_{\drt^- \le \drt^+} \sol'(\drt^+) = \sol'(\drt^+)
\]

Finally, we need to show that $\sol$ satisfies all dirt constraints in $\cstr$.
Take some $(\drt \le \drt') \in \cstr$.
Now, if $(\drt^- \le \drt) \in \cstr$ holds for some $\drt^- \in N$,
then $(\drt^- \le \drt') \in \cstr$ holds as well
because $\cstr$ is unified and therefore closed under logical implication.
Thus $\sol(\drt')$ is defined as a union over a bigger index set than $\sol(\drt)$,
so $\sol(\drt) \subseteq \sol(\drt')$ and $\sol \models \drt \le \drt'$.

For type parameters~$\alpha$, we proceed similarly and define
\[
  \sol(\alpha) \defeq \bigcup_{\ministack{(\alpha^- \le \alpha) \in \cstr \\ \alpha^- \in N}} \sol'(\alpha^-)
\]
where we define union over closed types as:
\begin{align*}
  \boolty \cup \boolty &= \boolty \qquad&
  (A_1 \to \C_1) \cup (A_2 \to \C_2) &= (A_1 \cap A_2) \to (\C_1 \cup \C_2) \\
  \natty \cup \natty &= \natty &
  E^{\Rgn_1} \cup E^{\Rgn_2} &= E^{\Rgn_1 \cup \Rgn_2} \\
  \unitty \cup \unitty &= \unitty &
  (\C_1 \hto \D_1) \cup (\C_2 \hto \D_2) &= (\C_1 \cap \C_2) \hto (\D_1 \cup \D_2) \\
  \emptyty \cup \emptyty &= \emptyty &
  (A_1 \E \Drt_1) \cup (A_2 \E \Drt_2) &= (A_1 \cup A_2) \E (\Drt_1 \cup \Drt_2)
\end{align*}
The intersection is defined dually.
We can see that the union is well defined
because for all $(\alpha^- \le \alpha) \in \cstr$, we have $\alpha^- \approx_\cstr \alpha$,
hence all such $\alpha^-$ belong to the same skeleton of $\approx_\cstr$, and so also of $\approx_{\gc(\cstr)}$.
Since $\sol'$ is a solution of $\gc(\cstr)$, all types $\sol'(\alpha^-)$ are of the same shape.

We also need to consider the case when the union is empty because
there are no $\alpha^- \in N$ such that $(\alpha^- \le \alpha) \in \cstr$.
In this case, we set $\sol(\alpha)$ to be the intersection of $\sol'(\alpha')$ for all $\alpha \approx_\cstr \alpha'$.
Then, $\sol(\alpha) \le \sol'(\alpha)$ holds in case $\alpha \in P$, and $\sol$ satisfies all constraints $(\alpha \le \alpha') \in \cstr$.
If we have a constraint $(\alpha' \le \alpha) \in \cstr$, then $\alpha'$ also cannot have any negative lower bounds because $\cstr$ is closed, hence $\sol(\alpha') = \sol(\alpha)$.
Finally, the case $\alpha \in N$ cannot occur.

For region parameters~$\rgn$, we need to take both instances and handled regions into an account.
So, we define
\[
  \sol(\rgn) \defeq \Big(\bigcup_{\ministack[0.6]{(\rgn^- \le \rgn \cup \uniq{i}{\rgn*_i}) \in \cstr \\ \rgn^- \in N}} \big(\sol'(\rgn^-) - \uniq{i}{\sol'(\rgn*_i)}\big)\Big) \cup \Big(\bigcup_{{(\inst \in \rgn \cup \uniq{i}{\rgn*_i}) \in \cstr}} \big(\set{\inst} - \uniq{i}{\sol'(\rgn*_i)}\big)\Big)
\]
As before, we get $\sol'(\rgn^-) \le \sol(\rgn^-)$ for all $\rgn^- \in N$ and $\sol(\rgn^+) \le \sol'(\rgn^+)$ for all $\rgn^+ \in P$.
Next, let us show that $\sol$ satisfies all region constraints in $\cstr$.
Let us consider only ones of the form $(\rgn \le \rgn' \cup \uniq{i \in I}{\rgn*_i}) \in \cstr$
because the proof for ones with instances is similar.

For any $(\rgn^- \le \rgn \cup \uniq{i \in J}{\rgn*_i}) \in \cstr$
contributing to $\sol(\rgn)$, we have
$(\rgn^- \le \rgn' \cup \uniq{i \in I \cup J}{\rgn*_i}) \in \cstr$ because $\cstr$ is unified.
Thus, $\sol'(\rgn^-) - \uniq{i \in I \cup J}{\sol'(\rgn*_i)} \subseteq \sol(\rgn')$
by the definition of $\sol(\rgn')$.
If we add $\uniq{i \in I}{\sol'(\rgn*_i)}$ to both sides, we get
\[
  \big(\sol'(\rgn^-) - \uniq{i \in J}{\sol'(\rgn*_i)}\big) \subseteq \sol(\rgn') \cup \uniq{i \in I}{\sol'(\rgn*_i)}
\]
We similarly get
\[
  \big(\set{\inst} - \uniq{i \in J}{\sol'(\rgn*_i)}\big) \subseteq \sol(\rgn') \cup \uniq{i \in I}{\sol'(\rgn*_i)}
\]
for each $(\inst \in \rgn \cup \uniq{i \in J}{\rgn*_i}) \in \cstr$ that contributes to $\sol(\rgn)$.
These are all contributions to $\sol(\rgn)$, thus
\[
  \sol(\rgn) \subseteq \sol(\rgn') \cup \uniq{i \in I}{\sol'(\rgn*_i)}
\]
Now comes the critical step: since $\rgn*_i \in P$, we have $\sol(\rgn*_i) \subseteq \sol'(\rgn*_i)$.
However, $\sol(\rgn*_i)$ was non-empty and if we increase it, it contributes less to the singleton union,
therefore
\[
  \sol(\rgn) \subseteq \sol(\rgn') \cup \uniq{i \in I}{\sol(\rgn*_i)}
\]
and $\sol \models (\rgn \le \rgn' \cup \uniq{i \in I}{\rgn*_i})$.

A careful reader might have observed that not all parameters $\rgn*_i$ occur in $P$,
only those that appear in constraints $\rgn^- \le \rgn^+ \cup \uniq{i}{\rgn*_i}$
where $\rgn^- \in N$ and $\rgn^+ \in P_0$ (and similar ones for instances).
However, this is easy to fix.

First, take $P'$ to be $P_0$ and the set of all parameters $\rgn*_i$ that appear in \emph{any} singleton union in \emph{any} constraint in $\cstr$.
In this case, the above reasoning is valid and the set of constraints $\gc[P', N](\cstr)$ is equivalent to $\cstr$.
Now, repeat the whole process and take $P''$ to be $P_0$ and all parameters that appear in any singleton union in any constraint in $\gc[P', N](\cstr)$.
Again, $\gc[P'', N](\cstr)$ is equivalent to $\gc[P', N](\cstr)$ and so also to $\cstr$.
However, the only region constraints left in $\gc[P', N](\cstr)$ are ones of the form $\rgn^- \le \rgn^+ \cup \uniq{i}{\rgn*_i}$
with $\rgn^- \in N$ and $\rgn^+ \in P_0$, thus $P'' = P$.

In the end, let us show that $\gc(\cstr)$ is unified if $\cstr$ is.
We can consider only the case for type constraints as for other cases the proof is almost exactly the same.
Take constraints $\alpha_1 \le \alpha_2$ and $\alpha_2 \le \alpha_3$ in $\gc(\cstr)$.
Since $\gc(\cstr) \subseteq \cstr$, these two constraints must also be in $\cstr$,
which is unified, hence $\alpha_1 \le \alpha_3$ is in $\cstr$ as well.
From our assumption we have $\alpha_1, \alpha_2 \in N$ and $\alpha_2, \alpha_3 \in P$,
therefore $(\alpha_1 \le \alpha_3) \in \gc(\cstr)$.
We can similarly show that if we have $\alpha \le \alpha' \in \gc(\cstr)$,
then $\alpha \approx_{\gc(\cstr)} \alpha'$.
\end{proof}

\end{document}